\pgfplotsset{compat=1.8}
\newcommand{\mK}{ \overline{ \big(\begin{smallmatrix} m_i \\ K_i \end{smallmatrix}\big)}}
\newcommand{\wh}{ \overline{ \big(\begin{smallmatrix} w_i \\ w_i+h_i \end{smallmatrix}\big)}}
\newcommand{\wharg}[2]{ \overline{ \big(\begin{smallmatrix} #1 \\ #2 \end{smallmatrix}\big)}}
\newcommand{\hw}{ (\begin{smallmatrix} h_i \\ w_i+h_i \end{smallmatrix})}
\def\BibTeX{{\rm B\kern-.05em{\sc i\kern-.025em b}\kern-.08em
    T\kern-.1667em\lower.7ex\hbox{E}\kern-.125emX}}
\newtheorem{definition}{Definition}
\newtheorem{lemma}{Lemma}
\newtheorem{theorem}{Theorem}
\newtheorem{corollary}{Corollary}
\begin{document}

\title{Global Scheduling of Weakly-Hard Real-Time Tasks using Job-Level Priority Classes}

\newcommand{\linebreakand}{
    \end{@IEEEauthorhalign}
    \hfill\mbox{}\par
    \mbox{}\hfill\begin{@IEEEauthorhalign}
}

\author{
\IEEEauthorblockN{V. Gabriel Moyano}
\IEEEauthorblockA{
    \textit{Institute of Software Technology}\\
    \textit{German Aerospace Center (DLR)}\\
    Braunschweig, Germany \\
    gabriel.moyano@dlr.de}
\and
\IEEEauthorblockN{Zain A. H. Hammadeh}
\IEEEauthorblockA{
    \textit{Institute of Software Technology}\\
    \textit{German Aerospace Center (DLR)}\\
    Braunschweig, Germany \\
    zain.hajhammadeh@dlr.de}
\and
\IEEEauthorblockN{Selma Saidi}
\IEEEauthorblockA{
    \textit{Institute of Computer and Network Engineering}\\
    \textit{Technische Universität Braunschweig}\\
    Braunschweig, Germany \\
    saidi@ida.ing.tu-bs.de}
\and
\IEEEauthorblockN{Daniel L\"{u}dtke}
\IEEEauthorblockA{
    \centerline{Institute of Software Technology}\\
    \textit{German Aerospace Center (DLR)}\\
    Braunschweig, Germany \\
    daniel.luedtke@dlr.de}
}

\maketitle

\begin{abstract}
Real-time systems are intrinsic components of many pivotal applications, such as self-driving vehicles, aerospace and defense systems.
The trend in these applications is to incorporate multiple tasks onto fewer, more powerful hardware platforms, e.g., multi-core systems, mainly for reducing cost and power consumption.
Many real-time tasks, like control tasks, can tolerate occasional deadline misses due to robust algorithms.
These tasks can be modeled using the weakly-hard model.
Literature shows that leveraging the weakly-hard model can relax the over-provisioning associated with designed real-time systems.
However, a wide-range of the research focuses on single-core platforms.
Therefore, we strive to extend the state-of-the-art of scheduling weakly-hard real-time tasks to multi-core platforms.
We present a global job-level fixed priority scheduling algorithm together with its schedulability analysis.
The scheduling algorithm leverages the tolerable continuous deadline misses to assigning priorities to jobs.
The proposed analysis extends the Response Time Analysis (RTA) for global scheduling to test the schedulability of tasks.
Hence, our analysis scales with the number of tasks and number of cores because, unlike literature, it depends neither on Integer Linear Programming nor reachability trees.
Schedulability analyses show that the schedulability ratio is improved by 40\% comparing to the global Rate Monotonic (RM) scheduling  and up to 60\% more than the global EDF scheduling, which are the state-of-the-art schedulers on the RTEMS real-time operating system.
Our evaluation on industrial embedded multi-core platform running RTEMS shows that the scheduling overhead of our proposal does not exceed 60 Nanosecond.

\end{abstract}

\begin{IEEEkeywords}
weakly-hard, multi-core, real-time, global scheduling
\end{IEEEkeywords}

\section{Introduction}
Enabling more autonomy in the automotive and the aerospace domains require involving sophisticated control algorithms with real-time requirements.
Computing hard real-time guarantees for the developed tasks under worst-case scenarios comes at the cost of exacerbating over-provisioning in such new software-based embedded systems, which implies, for example, higher power consumption.
Literature \cite{hammadeh2017budgeting} shows that leveraging the weakly-hard model can relax the over-provisioning associated with designed real-time systems.
Furthermore, many papers, e.g.~\cite{Pazzaglia:ECRTS2018, maggio2020control, vreman2021control}, proved that the control systems can tolerate occasional deadline misses with a small amount of performance degradation.
The weakly-hard real-time model~\cite{bernat2001weakly} extends the tight region of schedulable tasks that is defined by the hard real-time model by exploiting the tolerable deadline misses.
In the weakly-hard real-time model, the notation $\mK$ defines the maximum number of tolerable deadline misses $m_i$ in a sequence of $K_i$ releases.
In the last few years, weakly-hard real-time systems received a lot of attention, and different schedulability analysis have been proposed~\cite{xu2015TWCA, sun2017MILP, choi2021toward}.
To compute weakly-hard real-time guarantees, the developed analysis should not only consider the job(s) in the worst-case scenario, but also should consider all possible combinations of jobs within a window of $K$ consecutive jobs.
That makes computing the weakly-hard real-time guarantees more complicated and subject to more pessimism.

\begin{figure}[t!]
    \centering
        \resizebox{1\columnwidth}{!}{
            \includegraphics{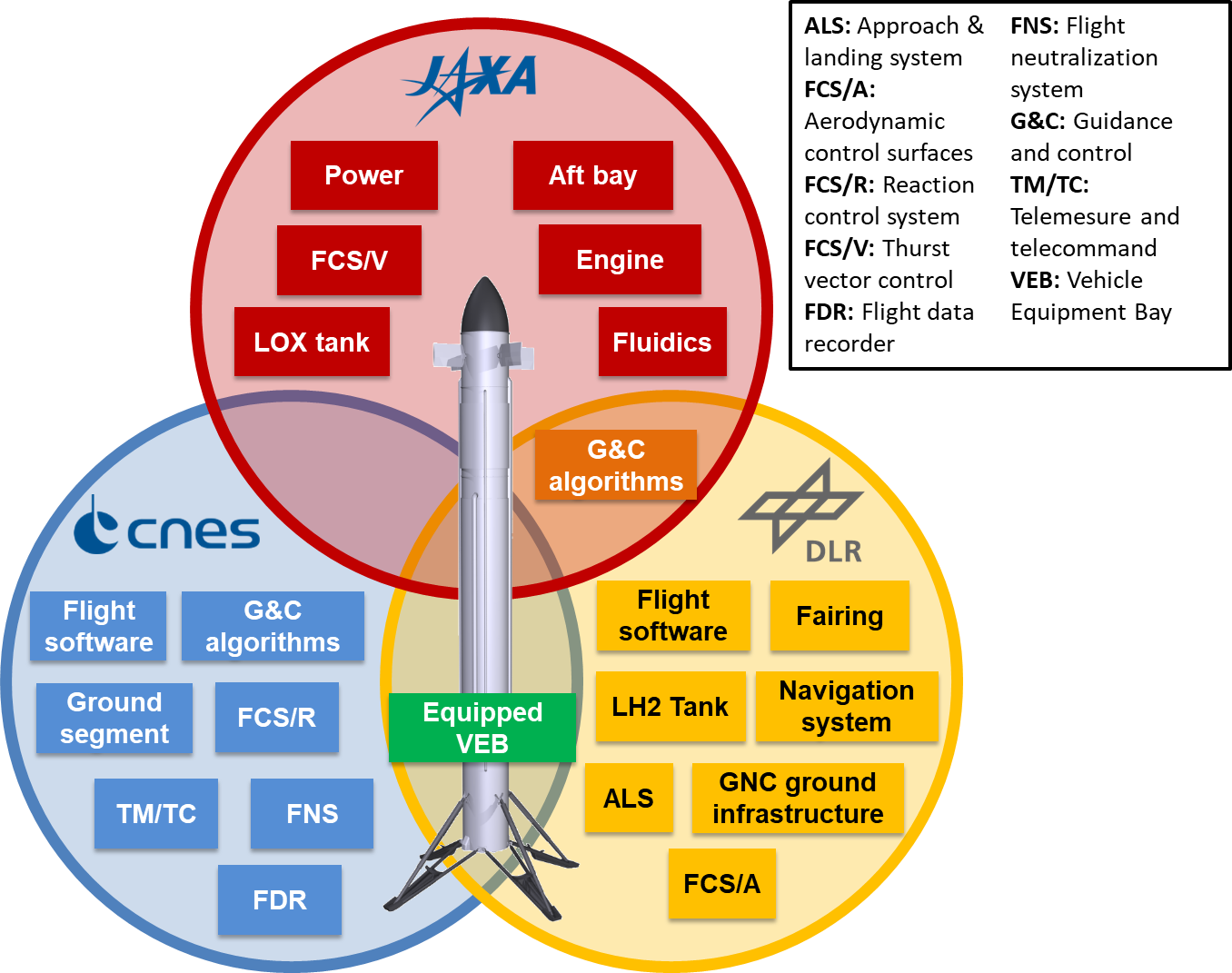}
        }
    \caption{CALLISTO project.}
    \label{fig:callisto}
    \vspace*{-8mm}
\end{figure}

Embedded system developers head for multi-core platforms in automotive and aerospace domains to fulfill the increasing demand for computational performance.
This is the case for the on-board computers in CALLISTO (Cooperative Action Leading to Launcher Innovation for Stage Tossback Operation) ~\cite{krummen2021towards, illig2022callisto};
a joint project between the French National Center for Space Studies (CNES), the German Aerospace Center (DLR) and the Japan Aerospace Exploration Agency (JAXA) for developing and building a Vertical-Takeoff/Vertical-Landing rocket that can be reused (\figurename~\ref{fig:callisto}).
The on-board computers in CALLISTO carry-out the state estimation, based on fusing information of several sensors~\cite{schwarz2019preliminary}, and the execution of the control algorithms.
These algorithms are complex and computationally demanding but also deemed to be robust enough to an occasional deadline miss, in which case the performance degradation is negligible.
Therefore, the weakly-hard model can be used to describe the tolerance of those algorithms.
However, multi-core systems were out of scope in the majority of papers that address weakly-hard real-time systems.
In this work, we propose a global scheduling algorithm that exploits the tolerable deadline misses, and a schedulability analysis to compute weakly-hard real-time guarantees.
The weakly-hard real-time constraints define the distribution of the tolerable deadline misses and deadline hits in a window of $K$ consecutive jobs.
Satisfying such constraints presuppose the job-level schedulability as a more efficient approach than task-level schedulability.
Hence, we propose in this work a global job-level fixed priority scheduling.

Choi et al. proposed in~\cite{choi2019jobclass, choi2021toward} a single-core job-level fixed priority scheduling, in which the jobs receive different priorities upon meeting/missing their deadlines.
When a job meets its deadline, the next job is assigned a lower priority from a predefined job-level priority classes.
When a deadline miss occurs, the next job will get a higher or the highest priority possible in the predefined priority class.
Choi et al. proposed a reachability tree-based analysis, and they proposed a heuristic to extend their work to semi-partitioned multi-core scheduling.

This work recalls the job-class level scheduling to propose a multi-core global scheduling. Our main contributions are as follows:
\begin{itemize}
	\item We prove that satisfying the weakly-hard constraint $\wh$ for jobs that have the highest priority in the priority classes is sufficient to guarantee that the task $\tau_i$ satisfies the constraint $\mK$, where $w_i$ is the maximum number of tolerable consecutive deadline misses and $h_i$ is the minimum required deadline hits after $w_i$
	\item We propose a global job-level fixed priority scheduling using predefined priority classes for tasks that can tolerate a bounded number of deadline misses
	\item We propose a schedulability analysis for the proposed global scheduling.
    Our schedulability analysis extends the Response Time Analysis (RTA)~\cite{bertogna2007response}
\end{itemize}    
In our proposed analysis, we compromise between pessimism and complexity.
Using the constraint $\wh$ instead of $\mK$ limits the number of deadline sequences that satisfy the weakly-hard constraint.
However, we need neither a reachability tree-based analysis nor an Integer Linear Programming (ILP) based analysis, e.g.,~\cite{sun2017MILP}.
Therefore, our analysis has less complexity than~\cite{choi2019jobclass, choi2021toward} and scales with the number of tasks because we use RTA~\cite{bertogna2007response}.
The proposed scheduling can be seen as a fault-tolerance mechanism in which the deadline-miss is an error and assigning a higher priority to the next job as a mitigation mechanism.
However, the problem will be studied from real-time schedulability perspective.

The rest of this paper is organized as follows: the next section recalls the related work.
In Section~\ref{sec:system-model}, we present our system model and we elaborate our problem statement.
Section~\ref{sec:RTA} recalls the Response Time Analysis (RTA).
Our contribution starts in Section~\ref{sec:contribution-algorithm} by showing the global scheduling algorithm for weakly-hard real-time tasks.
Then, the analysis for the presented scheduling algorithm is shown in Section~\ref{sec:contribution-analysis}.
Also, we experimentally evaluate our proposed scheduling and we present the results in Section~\ref{sec:evaluation}.
Finally, Section~\ref{sec:conclusion} concludes our paper.

\section{Related Work}\label{sec:related}

Many global scheduling algorithms have been proposed as extensions of single-core scheduling algorithms, e.g., EDF (G-EDF) and Fixed Priority (G-FP)~\cite{bertogna2007response}.
The Pfair~\cite{Baruah2005ProportionatePA, Anderson-Pfair-2000} sub-category of global scheduling algorithms has been proven to be optimal for scheduling periodic and sporadic tasks with implicit deadlines. 
The main drawback of the Pfair global scheduling algorithms is the high number of preemptions, i.e., context switches, hence, high scheduling overhead~\cite{app131810131}. 
Nelissen et al. presented in~\cite{Nelissen-UEDF2012} the U-EDF algorithm, which is an unfair but optimal EDF global scheduling. 
U-EDF, among others, tried to mitigate the drawbacks of Pfair algorithms.
However, the non-Pfair optimal scheduling algorithms, e.g., U-EDF, have high computation complexity~\cite{app131810131}. 

In this paper, we aim to leverage the tolerable deadline misses to enhance the schedulability ratio of G-FP. 
Weakly-hard real-time constraints define the maximum number of deadline misses that a task can tolerate before going into a faulty state. 
The term weakly-hard was coined by Bernat et al. in~\cite{bernat2001weakly} to describe systems in which tasks have the $\overline{ (\begin{smallmatrix} m \\ K \end{smallmatrix})}$ constraints where $m$ represents the maximum number of tolerable deadline misses in a sequence of $K$ jobs. 
The notation $\overline{ (\begin{smallmatrix} m \\ K \end{smallmatrix})}$, though, is a bit older and was defined in \cite{hamdaoui1995dynamic} by Hamdaoui et al as $(m, K)$-firm. Since 2014, the number of papers addressing the weakly-hard real-time systems has increased significantly.

Sun et al. presented in \cite{sun2017MILP} a weakly-hard schedulability analysis that computes the maximum bound on $m$ within a time window of $K$ consecutive jobs using a Mixed Integer Linear Programming (MILP).
The MILP checks all possible scenarios within a time window of $K$ consecutive jobs where tasks are periodically activated. 
The analysis in \cite{sun2017MILP} can, therefore, provide tight bounds on $m$ with reasonable complexity for small $K \leq 10$ \cite{Natale-ESWEEK17}.

A Linear Programming (LP) based weakly-hard schedulability analysis has been presented in \cite{xu2015TWCA} for overloaded systems.
This approach considers temporarily overloaded systems due to rare sporadic jobs and bounds the impact of the sporadic overload jobs on the tasks, which are assumed to be schedulable in the non-overloaded intervals, in terms of deadline misses.
This approach has two features: 1) It scales with $K$ and number of tasks because it depends on an LP relaxation.
2) It is extendable for more scheduling policies. 
However, this approach reports a high pessimism for small $K$~\cite{hammadeh2019TWCA}.

Pazzaglia et al.~\cite{Pazzaglia:ECRTS2018} researched the performance cost of deadline misses in control systems.
They have shown the impact of the distribution of deadline misses within the sequence of $K$ jobs on the performance. 
Liang et al.~\cite{liang2020FualtWH} presented a fault tolerance mechanism for weakly-hard.

The job-class-level scheduling presented in~\cite{choi2019jobclass} and~\cite{choi2021toward} recalled the original concept proposed by Hamdaoui et al.~\cite{hamdaoui1995dynamic}, in which each task is assigned a different priority upon meeting/missing their deadlines.
The proposed scheduling in~\cite{choi2019jobclass, choi2021toward} is dedicated to single-core systems.
The author showed how to extend the job-class-level scheduling to semi-partitioned multi-core scheduling.
Our work extends the job-class-level scheduling to global multi-core scheduling.
However, our schedulability analysis does not depend on a reachability tree as the one in~\cite{choi2019jobclass, choi2021toward}.

Recently, Maggio et al. proposed in~\cite{maggio2020control, vreman2021control, Vreman2022control, Vreman2022jlWH} an approach to analyze the stability of control systems under different patterns of deadline misses.
The proposed approach by Maggio et al. can help in extracting the weakly-hard constraints, i.e., bounding $m$ and $K$.
The authors considered a system model of single-core platforms and periodic control tasks.

Wu and Jin proposed in~\cite{Wu2008} a global scheduling algorithm for multimedia streams.
They applied the Distance Based Priority (DBP) algorithm~\cite{hamdaoui1995dynamic} to a global scheduler where the task that is close to violate its $\mK$ constraint is assigned dynamically the highest priority.
In \cite{Kong2011}, Kong and Cho computed bounds on the probability of not satisfying the $\mK$ constraint and they proposed a dynamic  hierarchical scheduling algorithm to improve the quality of service.
The goal of our paper is different from \cite{Wu2008} and \cite{Kong2011} because we aim to exploit the weakly-hard constraints to increase the load that can be scheduled to a multi-core system under fixed priority scheduling.

\section{System Model}
\label{sec:system-model}
This paper considers independent sporadic tasks with constrained deadlines and preemptive scheduling.
A task set is executed on a Symmetric Multi-Processing (SMP) multi-core platform.
At the end of this section, Table~\ref{tab:system-model-variables} shows the notations used in this work.

{\bf Task model.}
A task $\tau_i$ is described using 5 parameters:
\begin{equation*}
    \tau_i \doteq (C_i, D_i, T_i, \mK)
\end{equation*}

\begin{itemize}
    \item $C_i$: The worst-case execution time of $\tau_i$.
    \item $D_i$: The relative deadline of each job of $\tau_i$. Since tasks have a constrained deadline $D_i \leq T_i$.
    \item $T_i$: The minimum inter-arrival time between consecutive jobs of $\tau_i$.
    \item $\mK$: The weakly-hard constraint of $\tau_i$, where $m_i$ is the number of tolerable deadline misses in a $K_i$ window, where $m_i < K_i$ and $m_i \geq 1$.
    A hard real-time task is characterized by $m_i = 0$ and $K_i = 1$.
\end{itemize}

In this paper, we use similar weakly-hard constraint notations as in \cite{bernat2001weakly}, see Table~\ref{tab:weakly-hard-notations}.
\begin{table}[t!]
    \centering
    \caption{Weakly-hard constraint notations}
        \resizebox{.8\columnwidth}{!}{
            \begin{tabular}{ c c c }
                \hline
                 & deadline hits & deadline misses \\
                any order & $\left(\begin{smallmatrix} m_i \\ K_i \end{smallmatrix}\right)$ & $\overline{\left( \begin{smallmatrix} m_i \\ K_i \end{smallmatrix} \right)}$ \\
                \rule{0pt}{4ex}
                consecutive & $\left<\begin{smallmatrix} m_i \\ K_i \end{smallmatrix}\right>$ & $\overline{\left< \begin{smallmatrix} m_i \\ K_i \end{smallmatrix} \right>}$ \\
                \hline
            \end{tabular}
        }
    \label{tab:weakly-hard-notations}
\end{table}

We classify the tasks based on the deadline misses tolerable in a $K_i$ window:
\begin{definition}
    Low-tolerance tasks: weakly-hard real-time tasks which require more deadline hits than tolerable misses in the $K_i$ window, i.e.\ tasks with a ratio $m_i/K_i < 0.5$ and $m_i > 0$.
    \label{def:low-tolerance-tasks}
\end{definition}
\begin{definition}
    High-tolerance tasks: weakly-hard real-time tasks which tolerate a bigger or equal quantity of deadline misses than quantity of deadline hits in the $K_i$ window, i.e.\ tasks with a ratio $m_i/K_i \geq 0.5$.
    \label{def:high-tolerance-tasks}
\end{definition}

{\bf Schedulability of weakly-hard tasks.}
\begin{definition}
    A weakly-hard task $\tau_i$ with constraint $\mK$ is schedulable if, in any window of $K_i$ consecutive invocations of the task, no more than $m_i$ deadlines are missed.
\end{definition}
\begin{definition}
	A deadline sequence is a binary sequence of length $K_i$, in which 1 represents a deadline hit and 0 represents a deadline miss.
\end{definition}

{\bf Utilization.}
The utilization of a task $\tau_i$ is defined as the fraction of processor time required by its execution:
\begin{equation*}
    U_i = \frac{C_i}{T_i}
\end{equation*}

Then, the utilization of the task set (also known as total utilization) is defined as the sum of all task utilizations:
\begin{equation*}
    U = \sum^{n}_{i = 1} U_i = \sum^{n}_{i = 1} \frac{C_i}{T_i}
\end{equation*}
where $n$ is the number of tasks in the task set.

{\bf System-level action for missed deadlines.}
The proposed scheduling algorithm and schedulability analysis considers the {\it Job-Kill} in case of a deadline miss.
In this system-level action, the job that does not meet its deadline is killed to remove load from the processor.

\subsection{Problem Statement}
In this work, we aim to exploit the weakly-hard constraints for increasing the number of schedulable tasks on an SMP multi-core platform.
Given a task set of independent weakly-hard tasks and an SMP multi-core platform,  our goal is to provide a global scheduling algorithm for the weakly-hard tasks and a scheduling analysis.

\begin{table}[h!]
    \centering
    \caption{Key mathematical notations used in this work.}
        \resizebox{\columnwidth}{!}{
            \begin{tabular}{ c | p{0.4\textwidth} }
                \hline
                Notation & Description \\
                \hline
                $n_c$ & number of cores in the system \\
                 \rowcolor{gray!10}$w_i$ & maximum consecutive number of deadline misses to uniformly distribute $m_i$ in a window of $K_i$ \\
                $h_i$ & deadline hits required per deadline miss \\
                 \rowcolor{gray!10}$\mathcal{JC}^{q}_{i}$ & job-class $q$ of task $\tau_i$ \\
                $jl_i$ & job-level variable of task $\tau_i$ \\
                 \rowcolor{gray!10}$s_i$ & slack of a task $\tau_i$ \\
                $N_i(L)$ & number of jobs interfering in the time interval $L$ \\
                 \rowcolor{gray!10}$O_i(L)$ & number of jobs not interfering in the time interval $L$ \\
                $a_i$ & variable for counting or not the interference of the carry-out job \\
                \hline
            \end{tabular}
        }
    \label{tab:system-model-variables}
\end{table}

\section{Original Response Time Analysis}
\label{sec:RTA}
Our paper extends the well-established approach Response Time Analysis (RTA)~\cite{bertogna2007response} in Section~\ref{sec:contribution-analysis}.
Therefore, we recall it in this section.
The response time $R_k$ of a task $\tau_k$ is defined as:
\begin{equation*}
R_k \doteq  C_k + I_k
\end{equation*}

Where $I_k$ is the upper bound on the interference from other tasks and it is computed as follows:
\begin{equation*}
I_k = \frac{1}{n_c} \sum_{i \neq k} I_{i,k}
\end{equation*}
Where $n_c$ is the number of cores in the system and $I_{i,k}$ is the interference of a task $\tau_i$ over the task $\tau_k$.
For Task-Level Fixed Priority (TLFP), only the tasks with higher priority than $\tau_k$ interfere, therefore, $I_k$ is reduced to:
\begin{equation*}
I_k = \frac{1}{n_c} \sum_{i \in hp(k)} I_{i,k}
\end{equation*}
Where $hp(k)$ is the set of tasks indices that have higher priority than $\tau_k$.

An upper bound on the response time $R^{ub}_{k}$ of a task $\tau_k$ can be computed by bounding the interference $I_k$.
For computing $I_{i,k}$, we bound the workload $\hat{W_i}$ imposed from $\tau_i$:
\begin{equation*}
    \hat{W}_i(L) = N_i(L) C_i + \min(C_i, (L - R^{ub}_i - C_i) \mod T_i)
\end{equation*}
Where $N_i(L)$ is the maximum number of jobs of $\tau_i$ that may execute within the time window of size $L$:
\begin{equation*}
N_i(L) = \left \lfloor \frac{L + R^{ub}_i - C_i}{T_i} \right \rfloor
\end{equation*}
$R^{ub}_i$ is the upper bound on the response time of $\tau_i$, which has a higher priority than $\tau_k$.
Hence, RTA iterates over the tasks in priority order to compute upper bounds on their response times.

$\hat{W}_i(L)$ calculates the workload imposed by $\tau_i$ considering the carry-in job, the body jobs and the carry-out job defined as follow (see \figurename~\ref{fig:rta-jobs}):
\begin{definition}
    A carry-in job is a job with a deadline within the interval of interest, but its release time is outside of it.
\end{definition}
\begin{definition}
    Body jobs are jobs with both their release time and deadline within the interval of interest.
\end{definition}
\begin{definition}
   A carry-out job is a job with a release time within the interval of interest but a deadline outside of it.
\end{definition}
To conservatively bound the workload of $\tau_i$ within the interval of interest $L$, we have to consider with the execution of the carry-in job.
Therefore, the first term of $\hat{W}_i(L)$ represents the workload due to the carry-in job and the body jobs while the second term bounds the workload due to the carry-out job.

\begin{figure}
    \centering
        \resizebox{1\columnwidth}{!}{
            \includegraphics{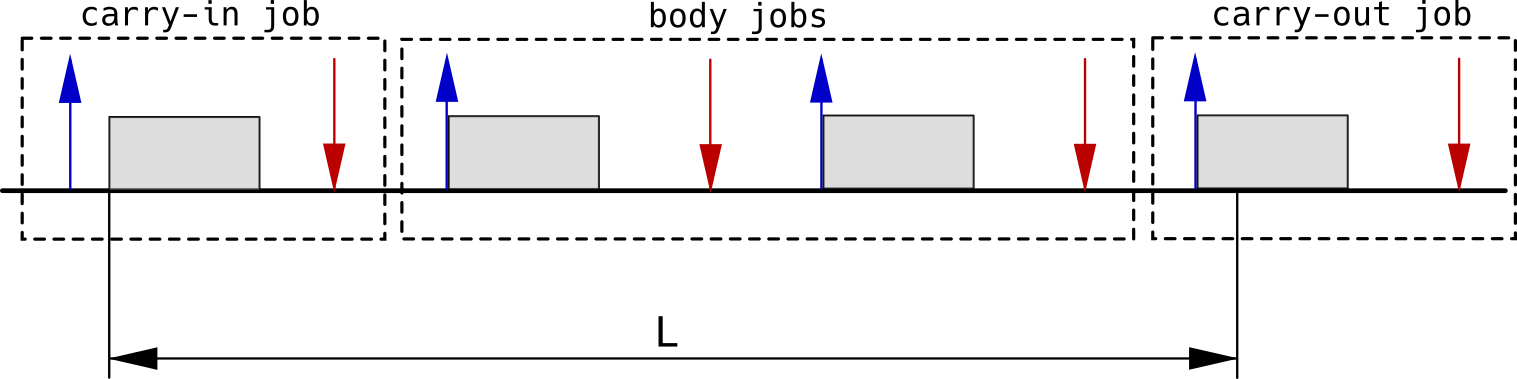}
        }
    \caption{Types of jobs within a time interval $L$ as proposed in RTA~\cite{bertogna2007response}.
    Blue arrows represent task activation, while red ones represent deadlines.
    Gray boxes refer to job executions.}
    \label{fig:rta-jobs}
\end{figure}

The bound on $\hat{W}_i$ and  $N_i(L)$ can be tightened by introducing the {\it slack} of $\tau_i$.
The slack of $\tau_i$ is calculated based on its response time as follows: $s_i = max(D_i - R^{ub}_{i}, 0)$.

Considering $s_i$, $\hat{W}_i$ and $N_i(L)$ are computed as follow:
\begin{equation}
    \label{eq:rta-bounded-workload}
    \hat{W}_i = N_i(L) C_i + \min(C_i, (L - R^{ub}_i - C_i - s_i) \mod T_i)
\end{equation}
\begin{equation}
    \label{eq:rta-number-jobs}
    N_i(L) = \left \lfloor \frac{L + R^{ub}_i - C_i - s_i}{T_i} \right \rfloor
\end{equation}
For TLFP, $R^{ub}_{k}$ of the task $\tau_k$ can be found by the following fixed point equation, starting with $R^{ub}_{k} = C_k$:
\begin{equation}
    \label{eq:rta-fixed-point-iter}
    R^{ub}_{k} \leftarrow C_k + \left \lfloor \frac{1}{n_c} \sum_{i \in hp(k)} \min(\hat{W}_i(R^{ub}_{k}), R^{ub}_{k} - C_k + 1) \right \rfloor
\end{equation}

\section{Global Scheduling for Weakly-hard Tasks}
\label{sec:contribution-algorithm}

In this section, we present a new job-class-level algorithm for global scheduling of weakly-hard tasks.
We start by defining a deadline sequence that satisfies the weakly-hard constraint.
Our algorithm works on enforcing the defined deadline sequence to guarantee the schedulability by assigning various priorities to released jobs.
Then, we show how priorities are assigned to tasks and to released jobs.

In the next section, we show how the enforced deadline sequence facilitates the schedulability analysis.

\subsection{Critical-sequence}
\label{subsec:sequence-of-deadlines}
The $\mK$ constraint does not specify the distribution of the $m_i$ deadline misses, e.g.\ if they could happen consecutively or not.
Hence, there are different deadline sequences that satisfy the weakly-hard constraint.
We are interested in one sequence that we can enforce in our scheduling algorithm such that we guarantee the satisfiability of $\mK$.
For that end, we define $w_i$ and $h_i$.
\begin{definition}[$w_i$]\label{def:w-deadline-misses}
    It represents the maximum number of consecutive deadline misses to uniformly distribute $m_i$ in a window of $K_i$ and it is calculated as follows:
    \begin{equation}
        w_i = \max \left( \left \lfloor \frac{m_i}{K_i - m_i}\right \rfloor, 1 \right)
    \end{equation}
\end{definition}
\begin{definition}[$h_i$]\label{def:h-deadline-hits}
    It represents the number of deadline hits required per deadline miss and it is calculated as follows:
    \begin{equation}
        h_i = \left \lceil \frac{K_i - m_i}{m_i}\right \rceil
    \end{equation}
\end{definition}

$w_i,h_i$ take particular values when we consider low-tolerance or high-tolerance tasks.
\begin{lemma}
    If $m_i / K_i < 0.5$, then $w_i = 1$.
    If $m_i / K_i \geq 0.5$, then $h_i = 1$.
    \label{lem:w-h-particular-values}
\end{lemma}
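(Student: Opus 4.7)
The lemma is a direct algebraic consequence of the two definitions, so my plan is to treat the two conditionals separately and in each case reduce to a simple inequality on $m_i / (K_i - m_i)$ or on $(K_i - m_i) / m_i$. Both halves will follow by inspecting where these ratios sit relative to $1$.

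For the first statement, I would start from the hypothesis $m_i / K_i < 0.5$, rewrite it as $2 m_i < K_i$, i.e.\ $m_i < K_i - m_i$, and divide to obtain $m_i / (K_i - m_i) < 1$. Then $\lfloor m_i / (K_i - m_i) \rfloor = 0$ (using that $m_i \geq 1$ so the ratio is nonnegative and strictly below $1$), and the outer maximum in Definition~\ref{def:w-deadline-misses} forces $w_i = 1$.

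For the second statement, I would start from $m_i / K_i \geq 0.5$, which gives $K_i - m_i \leq m_i$, and combine with $m_i < K_i$ (so $K_i - m_i \geq 1 > 0$) to sandwich $(K_i - m_i)/m_i$ strictly between $0$ and $1$ inclusive on the right. Taking the ceiling then yields exactly $1$, hence $h_i = 1$ by Definition~\ref{def:h-deadline-hits}.

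There is no real obstacle here; the only subtlety is being careful at the boundary values, specifically $m_i = K_i - m_i$ (i.e.\ $m_i / K_i = 0.5$), where the ratios equal $1$ and the ceiling in the $h_i$ case still returns $1$, while the floor/max in the $w_i$ case would return $1$ as well but that boundary belongs to the second statement by the stated weak inequality. Mentioning the constraint $1 \leq m_i < K_i$ from the task model up front makes every division well defined and ensures both ratios stay nonnegative.
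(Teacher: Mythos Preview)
Your proposal is correct and follows essentially the same approach as the paper's proof, which simply asserts that $m_i/K_i < 0.5$ implies $\lfloor m_i/(K_i - m_i)\rfloor = 0$ and $m_i/K_i \geq 0.5$ implies $\lceil (K_i - m_i)/m_i\rceil = 1$. Your version is more detailed---making the intermediate inequalities explicit and explicitly invoking $1 \leq m_i < K_i$ for well-definedness and the boundary case---but the underlying argument is identical.
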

\begin{proof}
	$m_i /K_i < 0.5 \Rightarrow \lfloor \frac{m_i}{K_i - m_i} \rfloor = 0$, hence, $w_i=1$. 
	Similarly, $m_i / K_i \geq 0.5 \Rightarrow  \lceil \frac{K_i - m_i}{m_i} \rceil = 1$, hence, $h_i=1$.
\end{proof}

\begin{definition}[Critical sequence]
	It is the sequence made up of {\it $h_i$} consecutive deadline hits followed by {\it $w_i$} consecutive deadline misses.
	\label{def:critical-sequence}
\end{definition}
\figurename~\ref{fig:weakly-hard-jobs} shows two critical-sequence examples, one for high-tolerance tasks and the other for low-tolerance tasks.

Our scheduling algorithm assigns a higher priority to the $h_i$ consecutive jobs, i.e. to the jobs which require to meet their deadline according to the critical-sequence.
Therefore, it is vital to prove that the critical-sequence satisfies the $\mK$.
However, the critical-sequence satisfies the  $ \overline{ \big<\begin{smallmatrix} w_i \\ w_i+h_i \end{smallmatrix}\big>}$ constraint by definition.
We show now that the critical-sequence also satisfies the constraint $\wh$.
\begin{lemma}
	For low-tolerance and high-tolerance tasks, for which $w_i$ and $h_i$ are defined as in Definition~\ref{def:w-deadline-misses} and Definition~\ref{def:h-deadline-hits} respectively, the  critical-sequence satisfies the constraint $\wh$.
	\label{lem:consecutive-constraint}
\end{lemma}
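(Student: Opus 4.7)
The plan is to reduce the lemma to a simple counting fact about periodic binary sequences. First, I would interpret the critical sequence of Definition~\ref{def:critical-sequence} as its natural periodic extension of period $P_i \doteq w_i + h_i$: the block of $h_i$ hits followed by $w_i$ misses, repeated indefinitely. Correspondingly, the constraint $\wh$ unfolds to the statement that every contiguous window of length $P_i$ in this sequence contains at most $w_i$ deadline misses. Since one period contains exactly $w_i$ misses by construction, the lemma reduces to showing that the miss count of a length-$P_i$ window is invariant under arbitrary alignment of the window with the period.

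For the main step I would use a cyclic decomposition. Fix an arbitrary window $W$ of length $P_i$. If $W$ coincides with a full period, it trivially contains exactly $w_i$ misses. Otherwise $W$ straddles two consecutive periods, so it splits into a suffix of length $P_i - \ell$ of some period $p$ followed by the prefix of length $\ell$ of period $p+1$, for some $1 \leq \ell < P_i$. Because periods $p$ and $p+1$ are identical copies of the same pattern, the relative positions covered by these two pieces together exhaust every position of one period exactly once, so their miss counts sum to $w_i$. Hence $W$ also contains exactly $w_i$ misses, and the constraint $\wh$ is satisfied (in fact with equality) in every window.

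Finally, I would note that Lemma~\ref{lem:w-h-particular-values}, together with the definitions of $w_i$ and $h_i$, guarantees $w_i \geq 1$ and $h_i \geq 1$ in both the low- and high-tolerance regimes, so $P_i$ is well-defined and the periodic extension is non-degenerate; no case split on tolerance is required, since the cyclic-decomposition argument is uniform. I do not foresee a real obstacle here. The only subtle point is making explicit that ``critical sequence'' is to be read as the infinite periodic pattern implicit in the scheduling policy, since Definition~\ref{def:critical-sequence} as stated only describes a single period of it; once that reading is fixed, the argument is essentially a one-line cyclic shift observation.
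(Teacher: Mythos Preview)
Your argument is correct, and it takes a genuinely different route from the paper. The paper's proof performs a case split on tolerance class via Lemma~\ref{lem:w-h-particular-values}: for low-tolerance tasks it uses $w_i=1$ and appeals to the equivalence $\overline{\big<\begin{smallmatrix}1\\1+h_i\end{smallmatrix}\big>}\equiv\overline{\big(\begin{smallmatrix}1\\1+h_i\end{smallmatrix}\big)}$, and for high-tolerance tasks it uses $h_i=1$ together with the hit/miss duality $\wh\equiv\hw$ from~\cite{bernat2001weakly} to reduce to $\big<\begin{smallmatrix}1\\1+w_i\end{smallmatrix}\big>\equiv\big(\begin{smallmatrix}1\\1+w_i\end{smallmatrix}\big)$. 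In contrast, you bypass the case split and the external equivalences entirely by a direct cyclic-shift count: every length-$(w_i+h_i)$ window of the periodic pattern covers each position of one period exactly once and therefore contains exactly $w_i$ misses. Your approach is self-contained, treats the two regimes uniformly, and in fact yields the sharper statement that the miss count equals $w_i$ in every window; the paper's approach, on the other hand, keeps the argument inside the weakly-hard constraint calculus of~\cite{bernat2001weakly}, which fits the surrounding narrative but requires the reader to accept those equivalences. Your remark that Definition~\ref{def:critical-sequence} literally describes only a single period, and that the periodic extension is the intended reading, is well taken and worth stating explicitly.
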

\begin{proof}
	For low-tolerance tasks $w_i=1$, hence, the following holds: 
	$\overline{\big<\begin{smallmatrix}1\\ 1 + h_i\end{smallmatrix}\big>} \equiv \overline{\big(\begin{smallmatrix}1\\ 1 + h_i\end{smallmatrix}\big)}$.
	From~\cite{bernat2001weakly}, we have $\wh \equiv \hw$. 
    Therefore, for high-tolerance tasks $h_i=1$, hence, the following holds: $\big<\begin{smallmatrix}1\\ 1 + w_i\end{smallmatrix}\big> \equiv \big(\begin{smallmatrix}1\\ 1 + w_i\end{smallmatrix}\big)$.
\end{proof}
Our goal is to prove that the critical-sequence satisfies $\mK$ and not only $\wh$.
Therefore, we have to prove that $\wh$ is harder than $\mK$.
\begin{definition}[\cite{bernat2001weakly}]
    Given two constraints, $\lambda$ and $\gamma$, we say that $\lambda$ is harder than $\gamma$, denoted by $\lambda \preccurlyeq \gamma$, if the deadline sequences that satisfy $\lambda$ also satisfy $\gamma$.
\end{definition}

\begin{lemma}
    The weakly-hard constraint $\wh$ is harder than the constraint $\mK$, formally, $\wh \preccurlyeq \mK$.
    \label{lem:harder-constraint}
\end{lemma}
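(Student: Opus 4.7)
The plan is to fix any deadline sequence satisfying $\wh$, take an arbitrary window of $K_i$ consecutive jobs, and bound the number of misses in it by $m_i$. Lemma~\ref{lem:w-h-particular-values} lets us split into the dual regimes $w_i = 1$ (low-tolerance) and $h_i = 1$ (high-tolerance); each yields to a short pigeonhole-style counting argument on miss/hit positions.

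First I would handle the low-tolerance case, where $\wh$ collapses to ``at most one miss in any window of $1 + h_i$ positions.'' This forces successive miss positions in any arrangement to be separated by at least $h_i + 1$. If a given $K_i$-window contains $x$ misses at positions $p_1 < \cdots < p_x$, then $K_i - 1 \geq p_x - p_1 \geq (x-1)(h_i + 1)$, giving $x \leq (K_i + h_i)/(h_i + 1)$. Definition~\ref{def:h-deadline-hits} provides $h_i \geq (K_i - m_i)/m_i$, equivalently $m_i(h_i + 1) \geq K_i$, whence $x \leq m_i + h_i/(h_i+1) < m_i + 1$; since $x$ is an integer, $x \leq m_i$.

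Next I would treat the high-tolerance case, where $\wh$ reads ``at most $w_i$ misses in any window of $w_i + 1$,'' equivalently ``at least one hit per window of $w_i + 1$.'' Tile the $K_i$-window with $\lfloor K_i/(w_i+1) \rfloor$ disjoint blocks of size $w_i+1$; each contributes at least one hit, so the window contains at least $\lfloor K_i/(w_i+1) \rfloor$ hits, hence at most $K_i - \lfloor K_i/(w_i+1) \rfloor$ misses. Definition~\ref{def:w-deadline-misses} gives $w_i \leq m_i/(K_i - m_i)$, which is algebraically equivalent to $K_i - m_i \leq K_i/(w_i + 1)$; since $K_i - m_i$ is already an integer, this promotes to $K_i - m_i \leq \lfloor K_i/(w_i+1) \rfloor$, so the miss count in the window is at most $m_i$.

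The main obstacle is the bookkeeping with floors, ceilings, and the $\max(\cdot,1)$ clamp in the definition of $w_i$: one must verify that the rational inequalities derived from Definitions~\ref{def:w-deadline-misses} and~\ref{def:h-deadline-hits} really promote to the integer inequalities needed, and that the clamp in $w_i$ does not disturb the low-tolerance argument (it does not, because there $\lfloor m_i/(K_i - m_i)\rfloor = 0$ and the argument only uses the $w_i = 1$ form of $\wh$). Once these edge conditions are settled, the two counting bounds together give $\wh \preccurlyeq \mK$.
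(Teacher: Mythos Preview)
Your argument is correct, but it proceeds along a genuinely different line from the paper's. The paper does not count misses directly; instead it invokes Theorem~5 of Bernat et al., which gives a purely numerical criterion for $\big(\begin{smallmatrix} a\\ b\end{smallmatrix}\big) \preccurlyeq \big(\begin{smallmatrix} p\\ q\end{smallmatrix}\big)$, applies the equivalence $\wh \equiv \hw$, and then checks the criterion by a chain of floor/ceiling manipulations---again splitting into the $w_i=1$ and $h_i=1$ regimes. You, by contrast, give a self-contained pigeonhole argument: in the low-tolerance case you use the minimum spacing $h_i+1$ between miss positions, and in the high-tolerance case you tile the $K_i$-window into disjoint $(w_i+1)$-blocks, each forced to contain a hit. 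Your route avoids the external theorem and the somewhat delicate algebra in the appendix, at the cost of having to argue directly about sequences; the paper's route is shorter in the main text because it defers the work to a cited result, but the appendix verification ends up longer than your combinatorial count. Both approaches hinge on the same two inequalities $h_i \geq (K_i - m_i)/m_i$ and $w_i \leq m_i/(K_i - m_i)$ extracted from Definitions~\ref{def:w-deadline-misses} and~\ref{def:h-deadline-hits}, and both handle the $\max(\cdot,1)$ clamp in $w_i$ only implicitly through Lemma~\ref{lem:w-h-particular-values}.
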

\begin{proof}
    Theorem 5 of~\cite{bernat2001weakly} provides the condition that must be satisfied for one weakly-hard constraint to be harder than another.
    In this proof, we show that $\wh$ satisfies the condition to be harder than $\mK$.
     The detailed proof is in the appendix.
\end{proof}

\begin{theorem}\label{th:wh mK}
    If $\tau_i$ fulfills the constraint $\wh$, it also fulfills the constraint $\mK$.
    \label{the:new-constraint}
\end{theorem}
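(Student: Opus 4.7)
The plan is to derive the theorem as an almost immediate consequence of Lemma~\ref{lem:harder-constraint}, since Definition~7 (of ``harder than'') already captures exactly the implication we need. Concretely, I would open the proof by unfolding the definition $\wh \preccurlyeq \mK$: by that definition, every deadline sequence that satisfies $\wh$ also satisfies $\mK$. Lemma~\ref{lem:harder-constraint} asserts that this ordering does hold between these two particular constraints, so the conclusion follows without any further combinatorial argument.

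In more detail, I would let $\sigma$ denote the (infinite, or arbitrarily long) deadline sequence generated by $\tau_i$ under the given scheduler and translate the hypothesis ``$\tau_i$ fulfills $\wh$'' into the statement that every length-$(w_i+h_i)$ window of $\sigma$ contains at most $w_i$ zeros (deadline misses). Applying $\wh \preccurlyeq \mK$ to $\sigma$ then yields that every length-$K_i$ window of $\sigma$ contains at most $m_i$ zeros, which is precisely the statement that $\tau_i$ satisfies $\mK$. Because the notion of a task satisfying a weakly-hard constraint is defined windowwise on its deadline sequence, there is no gap between ``the deadline sequence of $\tau_i$ satisfies $\mK$'' and ``$\tau_i$ satisfies $\mK$.''

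There is essentially no obstacle in this proof beyond having Lemma~\ref{lem:harder-constraint} in hand: all the real work, namely verifying Bernat's Theorem~5 condition for $\wh \preccurlyeq \mK$, is already discharged there (and deferred to the appendix). The only thing I would be mildly careful about is to state explicitly that the implication moves in the correct direction under the symbol $\preccurlyeq$, i.e.\ that the \emph{harder} constraint implies the \emph{easier} one, so that a reader is not confused by the use of ``$\preccurlyeq$'' (which visually suggests ``$\leq$'' but semantically means ``implies''). Once that is clarified, the theorem reduces to a single sentence.
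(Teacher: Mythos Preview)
Your proposal is correct and mirrors the paper's own proof exactly: the paper simply invokes Lemma~\ref{lem:harder-constraint} (that $\wh \preccurlyeq \mK$) together with the definition of ``harder than'' to conclude. Your additional unpacking of the deadline sequence~$\sigma$ and the windowwise interpretation is sound but already implicit in Definition~7, so it is optional elaboration rather than a different approach.
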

\begin{proof}
    This is proven by Lemma~\ref{lem:harder-constraint}, as $\wh \preccurlyeq \mK$.
\end{proof}

\begin{figure}
    \centering
        \resizebox{1\columnwidth}{!}{
            \includegraphics{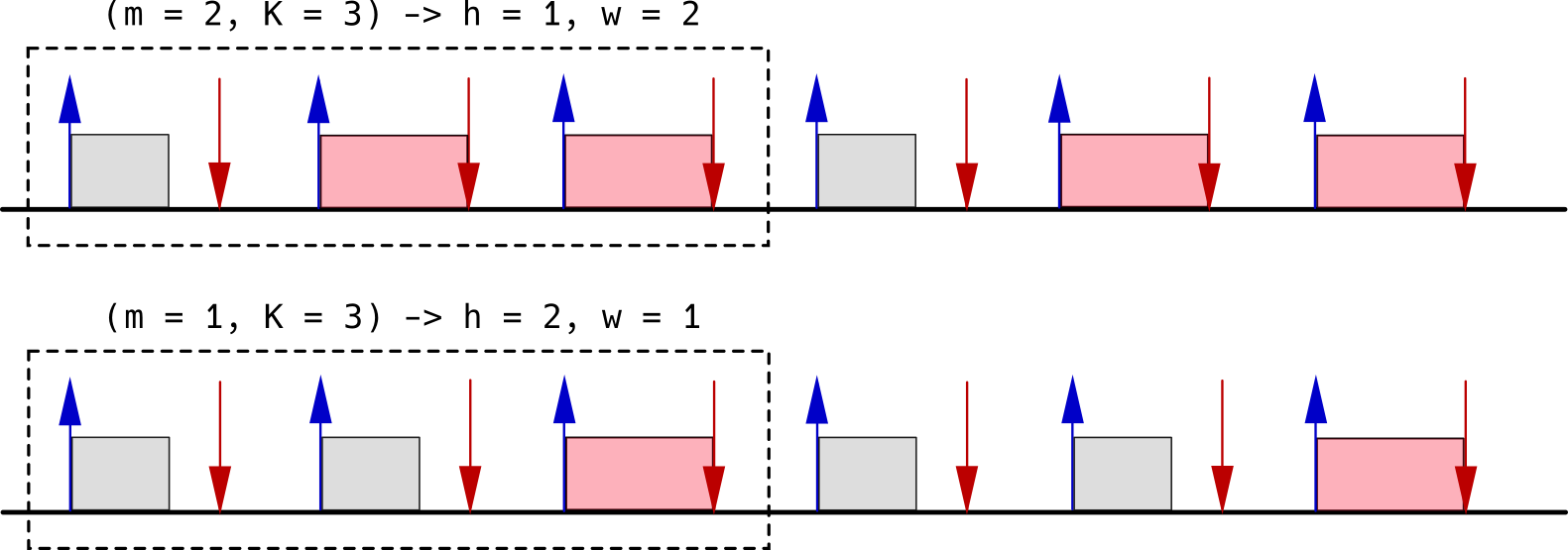}
        }
    \caption{Critical-sequence examples for a high-tolerance task (above) and a low-tolerance task (below).
    Gray boxes refer to execution finishing before the deadline, i.e., deadline hit.
    Boxes in pink refer to deadline miss.}
    \label{fig:weakly-hard-jobs}
\end{figure}

\subsection{Priority Assignment to Job-classes}
We assign a group of priorities to every weakly-hard task.
Each job of a task is assigned only one priority from this group, i.e., job-level fixed priority.
Jobs which require to meet their deadlines based on the critical-sequence will receive the highest priority of the task.
The other jobs will receive a lower priority.
In this way, a task can reduce its priority after achieving the minimum number of deadline hits ($h_i$) relaxing its interference to other tasks.

The group of priorities of a task is represented by the concept of job-classes coming from~\cite{choi2021toward}.
Each task has job-classes and every job-class has a different designated priority.
\begin{definition}\label{def:job-class}
    A task $\tau_i$ has $\mathcal{JC}_i = K_i - m_i + 1$ job-classes where every job-class, denoted by $\mathcal{JC}^{q}_{i}$ and $q$ can take values from the range $[0, K_i - m_i]$.
    Job-classes with lower values of $q$ are assigned with higher priorities, i.e. $\mathcal{JC}^{q = 0}_{i}$ and $\mathcal{JC}^{q = K_i - m_i}_{i}$ have the highest and lowest priority of the task, respectively.
\end{definition}

Table~\ref{tab:example-tasks} shows an example of three different tasks and their corresponding $q$ range.

\begin{table}[h!]
    \centering
    \caption{Example of three tasks and their $q$ ranges}
    \begin{tabular}{ c | c }
        \hline
        \rule{0pt}{3ex}
        Tasks $(C_i, D_i, T_i, \mK)$ & $q$ range \\
        \hline
        \rule{0pt}{3ex}
        $\tau_1 = (2, 6, 6, \overline{ \big(\begin{smallmatrix} 2 \\ 5 \end{smallmatrix}\big)})$ & $[0, 3]$ \\
         \rowcolor{gray!10}\rule{0pt}{3ex}
        $\tau_2 = (3, 7, 7, \overline{ \big(\begin{smallmatrix} 1 \\ 3 \end{smallmatrix}\big)})$ & $[0, 2]$ \\
        \rule{0pt}{3ex}
        $\tau_3 = (2, 8, 8, \overline{ \big(\begin{smallmatrix} 2 \\ 3 \end{smallmatrix}\big)})$ & $[0, 1]$ \\
        \hline
    \end{tabular}
    \label{tab:example-tasks}
\end{table}

Every job-class has a different priority, i.e. the same priority is not shared between job-classes of different tasks.
Algorithm~\ref{alg:priority_assignment} shows how priorities are assigned to each job-class.
First, tasks are sorted in ascending order of deadline (Line 2).
In case of two or more tasks have the same deadline, the one with lower $m_i$ is ordered first.
If tasks have also the same $m_i$, the order between them is selected randomly.
Then, the total number of priorities is calculated by counting number of job-classes between all tasks (Line 5).
Finally, the priority is assigned to each job-class level by iterating over them (from Line 8 until Line 12).
\figurename~\ref{fig:priority-example-tasks} shows the job-classes of tasks from Table~\ref{tab:example-tasks} after running the priority assignment algorithm.

\begin{algorithm}[t!]
    \small
    \DontPrintSemicolon

    {\bf Input:} taskset $\mathcal{T}$\;
    $sort\_tasks\_ascending\_deadline(\mathcal{T})$\;

    \For{$\tau_i \in \mathcal{T}$}
    {
        $\mathcal{JC}_i \gets  K_i - m_i + 1$\;
    }
    $\mathcal{JC} \gets \sum_{\forall \tau_i \in \mathcal{T}}\mathcal{JC}_i$\;
    $prio \gets \mathcal{JC}$\;
    $\mathcal{JC}^{max} \gets max\{\mathcal{JC}_i | \forall \tau_i \in \mathcal{T}\}$\;
    \For { $q \gets0; q<\mathcal{JC}^{max}; q \gets q + 1$}
    {
        \For {$\tau_i \in \mathcal{T}$}
        {
            \If {$q < \mathcal{JC}_i$}
            {
                $ \mathcal{JC}_i^q \gets prio$\;
                $prio \gets prio - 1$
            }
        }
    }

    \caption{Priority assignment to job-classes.}
    \label{alg:priority_assignment}
\end{algorithm}

\begin{figure}[t!]
    \centering
        \resizebox{0.4\columnwidth}{!}{
            \includegraphics{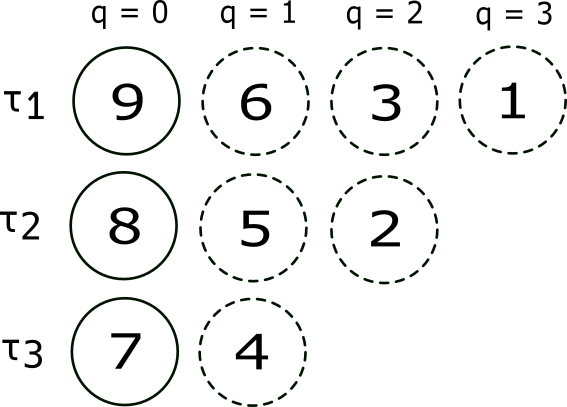}
        }
    \caption{Priority assignment for tasks in Table~\ref{tab:example-tasks}.
    Solid circles represent the highest priority for each task, thus, each task has only one solid circle.
    Note that jobs of $\tau_1$ may get the highest priority among all possible priorities (9) and the lowest possible priority (1).}
    \label{fig:priority-example-tasks}
\end{figure}

\subsection{Scheduling Routine: Assigning Priorities to Released Jobs}
Every time a job is released, the scheduler assigns it to a job-class based on the previous deadline misses/hits.
For selecting to which particular job-class a job should be assigned, we define job-level:
\begin{definition}
    The job-level $jl_i$ is a variable of a task $\tau_i$ which is used to select the job-class $\mathcal{JC}^{q}_{i}$ of the released job.
    The value of $q$ is selected according to $q = max(0, jl_i)$.
    The initial value of $jl_i$ is $-(h_i - 1)$ and every time a job meets its deadline, $jl_i$ is increased by one until $K_i - m_i$.
    When $w_i$ deadline misses happened, the value of $jl_i$ is restored to $-(h_i - 1)$.
    \label{def:job-level}
\end{definition}

Based on how $jl_i$ is updated, the following consequences can be deduced.
For high-tolerance tasks, the starting value of $jl_i$ is zero, since for that kind of tasks $h_i$ is one;
and for low-tolerance tasks, every time a deadline is missed, $jl_i$ is restored to $-(h_i - 1)$, since for those kind of tasks $w_i$ is one (see Lemma~\ref{lem:w-h-particular-values}).

\figurename~\ref{fig:scheduling-algo} shows the transitions between job-classes for low-tolerance and high-tolerance tasks.

\begin{figure}
    \centering
        \resizebox{0.6\columnwidth}{!}{
            \includegraphics{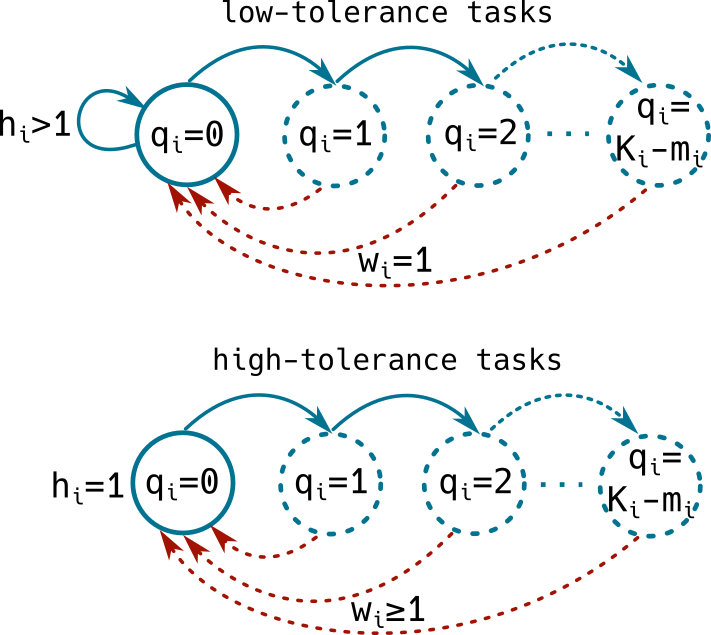}
        }
    \caption{Job-classes transitions for low-tolerance and high-tolerance tasks.
    Solid circles represent the highest priority, hence, jobs which are assigned to the priority represented by the solid circle are guaranteed to meet their deadlines.}
    \label{fig:scheduling-algo}
\end{figure}

\section{Response Time Analysis Extension for Weakly-hard Real-time Tasks}
\label{sec:contribution-analysis}
This section presents a schedulability analysis for the proposed job-class-level scheduling algorithm described in the previous section.
The analysis is an extension of RTA for weakly-hard real-time tasks scheduled by our algorithm.
As mentioned in Section~\ref{sec:RTA}, RTA provides a sufficient condition for schedulability by proving that the response time of a task is not longer than its deadline.
The response time of a task is prolonged as much as tasks with higher priorities interfere with the execution of the analyzed task.
The extension of RTA presented here bounds the interference over lower priority tasks by taking into account only jobs belonging to $\mathcal{JC}^{q = 0}_{i}$.
Remember that jobs in $\mathcal{JC}^{q = 0}_{i}$ have the highest assignable priority to their tasks and are selected based on the minimum consecutive deadline hits ($h_i$) from the critical-sequence.
Moreover, considering only jobs in $\mathcal{JC}^{q = 0}_{i}$ reduces the interference to lower priority jobs.

\begin{lemma}\label{lem:wh sufficint condition}
	For a task $\tau_i$, to meet its constraint $\wh$, it is sufficient that the jobs belonging to $\mathcal{JC}^{q = 0}_{i}$ meet their deadlines.
\end{lemma}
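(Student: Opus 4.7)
The plan is a two-case analysis driven by Lemma~\ref{lem:w-h-particular-values}, which guarantees that at least one of $w_i = 1$ or $h_i = 1$ holds for every weakly-hard task. Before splitting the cases I would record the following structural fact about the scheduler of Section~\ref{sec:contribution-algorithm}: the $h_i$ released jobs that immediately follow any reset of $jl_i$ belong to $\mathcal{JC}^{q = 0}_i$, and a reset is triggered once $w_i$ consecutive misses have accumulated. Under the hypothesis of the lemma the $h_i$ jobs in each such block are guaranteed hits, which in turn imposes a strong regularity on the realized deadline sequence.

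For low-tolerance tasks ($w_i = 1$), a single miss suffices to trigger a reset, so the next $h_i$ released jobs are in $\mathcal{JC}^{q = 0}_i$ and all hit. Consequently any two misses are separated by at least $h_i$ hits, and therefore no window of length $w_i + h_i = h_i + 1$ can contain more than one miss, which is exactly the bound imposed by $\wh$. For high-tolerance tasks ($h_i = 1$), a reset occurs only after $w_i$ consecutive misses and the single $\mathcal{JC}^{q = 0}_i$ job that follows hits by assumption; the scheduler therefore never produces a run of $w_i + 1$ consecutive misses, so every window of length $w_i + h_i = w_i + 1$ must contain at least one hit and hence at most $w_i$ misses, again as required by $\wh$.

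The two cases together cover every weakly-hard task, so this closes the argument. The step I expect to require the most care is the preliminary structural claim: one must verify, by a short induction on released jobs that tracks the evolution of $jl_i$, that the scheduler really does interleave the $\mathcal{JC}^{q = 0}_i$ blocks with runs of misses in the way claimed, since Definition~\ref{def:job-level} only specifies updates on hits and on the arrival of the $w_i$-th consecutive miss. Once that structural claim is pinned down, the two counting arguments in the case split are immediate from the definitions of $w_i$, $h_i$, and $\wh$.
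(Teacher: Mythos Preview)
Your argument is correct and rests on the same structural property the paper invokes: the scheduler places $h_i$ consecutive jobs into $\mathcal{JC}^{q=0}_i$ after every run of $w_i$ misses, so forcing those jobs to hit rules out more than $w_i$ misses in any $(w_i+h_i)$-window. The paper's proof is a one-line appeal to this mechanism (via Figure~\ref{fig:scheduling-algo}) without the case split or explicit window counting, so your version is essentially the same approach carried out in more detail; your flagged concern about pinning down the $jl_i$ dynamics from Definition~\ref{def:job-level} is well taken and is exactly the step the paper leaves implicit.
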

\begin{proof}
	Our proposed scheduling assigns $h_i$ jobs to the $\mathcal{JC}^{q = 0}_{i}$ every time $\tau_i$ misses $w_i$ deadlines as \figurename~\ref{fig:scheduling-algo} illustrates.
	Hence, if all jobs belonging to $\mathcal{JC}^{q = 0}_{i}$ meet their deadlines, $\tau_i$ meets its constraint $\wh$ regardless whether the other jobs, which belong to other job-classes, meet their deadlines or not.
\end{proof}
Consequently, this section considers only the jobs in job-classes $\mathcal{JC}^{q = 0}_{i}$. This section starts by proving that it is possible to extend RTA for the proposed scheduling algorithm.
Finally, the schedulability condition for a task set is defined.

\subsection{Proving RTA Extension}
We prove the extension of RTA for our algorithm by showing how to bound the interference over jobs in job-classes $\mathcal{JC}^{q = 0}_{i}$.
First, we show that only the interference between the highest priority job-classes have to be considered.
Then, we evaluate the interference based on the critical-sequence.
This is done separately for low-tolerance and high-tolerance tasks because of the differences in the critical-sequences (see Definition~\ref{def:critical-sequence}).

With the following lemma, we show that only the response time of jobs belonging to $\mathcal{JC}^{q = 0}_{i}$ are of interest.
\begin{lemma}
    A job of a task $\tau_k$ in $\mathcal{JC}^{q = 0}_{k}$ suffers interference only from the jobs of a task $\tau_i$ in $\mathcal{JC}^{q = 0}_{i}$, if the priority of $\mathcal{JC}^{q = 0}_{i}$ is higher than the priority of $\mathcal{JC}^{q = 0}_{k}$.
\end{lemma}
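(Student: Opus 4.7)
The plan is to exploit the layered structure of the priority assignment in Algorithm~\ref{alg:priority_assignment}. First I would observe that the variable $prio$ is initialized to $\sum_i \mathcal{JC}_i$ and decremented monotonically at every successful assignment, while the outer loop iterates on $q$ from $0$ upwards. Consequently, the priorities handed out during the iteration $q=0$ form a contiguous block strictly above every priority handed out during any iteration $q' \geq 1$. Formalising this yields the key invariant: for any two tasks $\tau_i, \tau_j$ and any $q' \geq 1$, the priority of $\mathcal{JC}^{q=0}_{j}$ is strictly greater than the priority of $\mathcal{JC}^{q'}_{i}$.

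Given this invariant, the lemma follows quickly. I would fix a job of $\tau_k$ belonging to $\mathcal{JC}^{q=0}_{k}$ and consider a candidate interfering job released by some other task $\tau_i$, assigned to $\mathcal{JC}^{q}_{i}$. Since the underlying scheduler is job-level fixed priority, interference requires that the candidate's priority be strictly higher than that of $\mathcal{JC}^{q=0}_{k}$. If the candidate has $q \geq 1$, the invariant immediately shows its priority is below that of $\mathcal{JC}^{q=0}_{k}$, so it cannot interfere. Hence only jobs in their own $\mathcal{JC}^{q=0}_{i}$ qualify, and among these exactly those $\tau_i$ whose $\mathcal{JC}^{q=0}_{i}$ priority exceeds that of $\mathcal{JC}^{q=0}_{k}$, which is the stated conclusion.

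The main care point, rather than a real obstacle, is justifying the invariant cleanly despite the inner conditional $q < \mathcal{JC}_i$ that skips tasks whose job-class count has already been exhausted. A short induction on the outer loop index, using the fact that skipped assignments neither consume nor reorder $prio$ values, suffices: when the outer loop moves from $q$ to $q+1$, the next value of $prio$ is strictly smaller than every priority handed out so far, so the block for $q$ lies entirely above every block for $q' > q$. No assumption on the relative magnitudes of individual $\mathcal{JC}_i$ is required, and the argument is independent of the tie-breaking used when sorting tasks by deadline.
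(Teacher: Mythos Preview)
Your proposal is correct and follows essentially the same approach as the paper: both argue that Algorithm~\ref{alg:priority_assignment} hands out priorities in decreasing order with the outer loop on $q$, so all $q=0$ job-classes receive strictly higher priorities than any $q\geq 1$ job-class, and hence a job in $\mathcal{JC}^{q=0}_{k}$ can only be interfered with by jobs in some $\mathcal{JC}^{q=0}_{i}$ of higher priority. Your write-up is more explicit about the invariant and the handling of the conditional $q<\mathcal{JC}_i$, but the underlying argument is identical to the paper's.
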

\begin{proof}
    The Algorithm~\ref{alg:priority_assignment} assigns a priority value to job-classes starting by $q = 0$ and every time a priority is assigned, the next priority value is reduced by one.
    In this way, priority values of job-classes $\mathcal{JC}^{q \geq 1}$ are always lower than the ones assigned to job-classes $\mathcal{JC}^{q = 0}$.
    From which it follows that jobs of a task $\tau_k$ which belong to job-class $\mathcal{JC}^{q = 0}_k$ suffer interference of other jobs in $\mathcal{JC}^{q = 0}_i$, only if the priority of $\mathcal{JC}^{q = 0}_i$ is higher than the priority of $\mathcal{JC}^{q = 0}_k$.
\end{proof}

To bound the interference induced by $\tau_i$ on $\tau_k$, 
we must bound the maximum number of jobs in $\mathcal{JC}^{q = 0}_i$.
In the worst-case, jobs of $\tau_i$ are assigned to $\mathcal{JC}^{q = 0}_i$ $h_i$ times every $w_i+h_i$ jobs.
For  a high-tolerance task $\tau_i$, the jobs in $\mathcal{JC}^{q = 0}_i$ are $w_i$ apart.
This allows to consider the workload coming from such a task as the same workload produced by a task with a longer inter-arrival time that is equal to $(w_i + 1) T_i$.
Formally writing this:
\begin{lemma}
    \label{lem:workload-equivalent}
    The workload of a high-tolerance task $\tau_i$ with constraint $\mK$ is calculated as it were coming from an equivalent hard real-time task $\tau^{eq}_i = (C_i, D_i, (w_i + 1) T_i)$.
\end{lemma}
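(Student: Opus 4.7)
The plan is to track how often a job of $\tau_i$ can land in $\mathcal{JC}^{q=0}_i$ when $\tau_i$ is high-tolerance, and then combine that cadence with the preceding lemma that restricts interference to $\mathcal{JC}^{q=0}$ jobs only. My first step would be to apply Lemma~\ref{lem:w-h-particular-values}, which gives $h_i = 1$ immediately. Consulting Definition~\ref{def:job-level}, the variable $jl_i$ therefore starts at $0$, so the first released job is placed in $\mathcal{JC}^{q=0}_i$; every deadline hit increments $jl_i$, pushing subsequent jobs into classes $\mathcal{JC}^{q\ge 1}_i$; and the counter is only reset back to $0$ after $w_i$ consecutive misses have accumulated.

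Second, I would show that two consecutive jobs of $\tau_i$ placed in $\mathcal{JC}^{q=0}_i$ must be separated by at least $w_i$ other releases. Leaving the class costs one hit (shifting $jl_i$ from $0$ to $1$), and re-entering the class requires at least $w_i$ further misses to trigger the reset. Hence in release-index space, consecutive $\mathcal{JC}^{q=0}_i$ jobs are at least $w_i + 1$ apart. Because the minimum inter-arrival time between successive releases of $\tau_i$ is $T_i$, the minimum separation in time between two $\mathcal{JC}^{q=0}_i$ jobs is at least $(w_i + 1)\,T_i$.

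Third, I would invoke the preceding lemma to restrict attention to $\mathcal{JC}^{q=0}_i$ jobs only when bounding interference on a $\mathcal{JC}^{q=0}_k$ job: all other jobs of $\tau_i$ carry strictly lower priority than $\mathcal{JC}^{q=0}_k$ (by construction of Algorithm~\ref{alg:priority_assignment}) and therefore cannot contribute. Consequently, the worst-case interfering request stream from $\tau_i$ is indistinguishable, from $\tau_k$'s perspective, from the release pattern of a sporadic hard real-time task with execution time $C_i$, relative deadline $D_i$, and minimum inter-arrival $(w_i+1)T_i$, i.e., the task $\tau^{eq}_i = (C_i, D_i, (w_i+1)T_i)$ claimed by the lemma. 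The workload formulas of Section~\ref{sec:RTA} can then be applied verbatim to $\tau^{eq}_i$.

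The step I expect to be the main obstacle is the rigorous justification that no admissible execution trace of $\tau_i$ can place $\mathcal{JC}^{q=0}_i$ jobs more densely than the critical-sequence cadence derived above. I would address this with a short induction on the release index using the update rule of Definition~\ref{def:job-level}: the only transition that returns $jl_i$ to $0$ is the one triggered by $w_i$ consecutive misses, so any trace that places two $\mathcal{JC}^{q=0}_i$ jobs closer than $w_i + 1$ positions apart violates the state-update rule. This confirms that the critical sequence is indeed worst-case and that the equivalent period $(w_i+1)T_i$ cannot be shortened.
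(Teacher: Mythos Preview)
Your proof is correct and takes the same approach as the paper: both argue that the $\mathcal{JC}^{q=0}_i$ jobs of a high-tolerance task have execution time $C_i$ and minimum inter-arrival $(w_i+1)T_i$, and are therefore indistinguishable from the jobs of $\tau^{eq}_i$ for interference purposes. Your derivation of the $(w_i+1)T_i$ spacing from $h_i=1$ and the $jl_i$ update rule of Definition~\ref{def:job-level} is in fact more detailed than the paper's two-line proof, which simply asserts the spacing based on the informal remark preceding the lemma that ``the jobs in $\mathcal{JC}^{q = 0}_i$ are $w_i$ apart.''
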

\begin{proof}
	The jobs of the hard real-time task $\tau^{eq}_i = (C_i, D_i, (w_i + 1) T_i)$ have the same worst-case execution time $C_i$ as the jobs of $\tau_i$ in $\mathcal{JC}^{q = 0}_i$, and occur at the same inter-arrival time $(w_i+1)T_i$.
    Therefore they induce the same interference.
\end{proof}

For low-tolerance tasks, the minimum inter-arrival time between two jobs belonging to $\mathcal{JC}^{q = 0}_{i}$ is $T_i$.
Therefore, bounding the interference of low-tolerance tasks requires considering $T_i$ as the inter-arrival time and excluding the jobs that do not belong to $\mathcal{JC}^{q = 0}$ from the workload bound.
Hence, we update the workload bound defined in Equation~\eqref{eq:rta-bounded-workload} as follows.
In the first term, the update subtracts the number of jobs with lower priority from $N_i$ over the interval of interest $L$.
For updating the workload coming from the carry-out job, the second term of $\hat{W}_i(L)$ is made equal to zero when a carry-out job has a lower priority than $\mathcal{JC}^{q = 0}_{i}$.

For updating the first term, the number of lower priority jobs is defined by the following lemma:
\begin{lemma}
    \label{lem:carry-in-and-body-jobs}
    The minimum number of jobs of a low-tolerance task $\tau_i$ that have lower priorities than $\mathcal{JC}^{q = 0}_{i}$ during the interval $L$ is given by:
\begin{equation}
    \label{eq:rta-Oi}
    O_i(L) = \left \lfloor \frac{L + R^{ub}_{i} - C_i - s_i}{T_i (h_i + 1)} \right \rfloor
\end{equation}
\end{lemma}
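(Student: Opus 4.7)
The plan is to exploit the regular structure of the critical-sequence for low-tolerance tasks. By Lemma~\ref{lem:w-h-particular-values} we have $w_i = 1$, so by Definition~\ref{def:critical-sequence} the critical-sequence consists of $h_i$ consecutive deadline-hit jobs followed by exactly one deadline-miss job, repeating with a period of $h_i + 1$ jobs. The scheduling routine of Section~\ref{sec:contribution-algorithm} assigns those $h_i$ hits to $\mathcal{JC}^{q = 0}_i$ and the single miss to a job-class strictly lower than $\mathcal{JC}^{q = 0}_i$. Consequently, in any run of $h_i + 1$ consecutive jobs of $\tau_i$ exactly one job carries a priority strictly lower than that of $\mathcal{JC}^{q = 0}_i$, and the minimum inter-arrival time between two such lower-priority jobs is $T_i(h_i + 1)$.

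Given this, I would analyze the sub-sequence of lower-priority jobs of $\tau_i$ as if it originated from a sporadic stream with minimum inter-arrival time $T_i(h_i + 1)$ but with the same execution and response-time envelope as $\tau_i$. The derivation that produced Equation~\eqref{eq:rta-number-jobs} counts jobs in a window of length $L$ while accounting for worst-case carry-in through the terms $R^{ub}_i$, $C_i$, and $s_i$; these terms are properties of the individual jobs and do not depend on the inter-arrival time used in the counting. Substituting $T_i \mapsto T_i(h_i + 1)$ into that derivation yields exactly Equation~\eqref{eq:rta-Oi}.

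To show that the resulting $O_i(L)$ is a sound (i.e., lower) bound on the number of lower-priority jobs of $\tau_i$ present in any window of size $L$, I would invoke the floor identity $\lfloor \lfloor x/a \rfloor / b \rfloor = \lfloor x/(ab) \rfloor$ valid for any positive integer $b$, applied with $a = T_i$ and $b = h_i + 1$. This gives $O_i(L) = \lfloor N_i(L) / (h_i + 1) \rfloor$, which matches the number of full $h_i + 1$ blocks that fit inside the worst-case RTA job pattern of length $N_i(L)$, and therefore a guaranteed lower count of lower-priority jobs inside the window.

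The main obstacle I expect is the alignment argument: one must justify that the scenario which maximizes interference on $\tau_k$ (and thus realizes the RTA bound $N_i(L)$) is also a scenario in which the number of lower-priority jobs is no larger than the stated expression. This requires showing that the adversary can simultaneously (i) align the first job of $\tau_i$ inside the window with a $\mathcal{JC}^{q = 0}_i$ slot of the critical-sequence and (ii) push any lower-priority job as close as possible to the boundaries of $L$, so that exactly $\lfloor N_i(L)/(h_i+1) \rfloor$ such jobs lie strictly inside. Because the critical-sequence is entirely determined by prior hits and misses, both choices are available to the adversary, so both the maximum of $N_i(L)$ and the minimum of $O_i(L)$ can be realized in the same release pattern, which is what we need for the bound to be conservative.
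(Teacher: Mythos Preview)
Your proposal is correct and takes essentially the same approach as the paper: the paper's proof consists of only two sentences, observing that for a low-tolerance task the critical-sequence yields one lower-priority job after every $h_i$ hits, so consecutive lower-priority jobs are at most $T_i(h_i+1)$ apart, and Equation~\eqref{eq:rta-Oi} then follows by substituting this period into the job-count formula~\eqref{eq:rta-number-jobs}. Your floor-identity derivation and alignment discussion go well beyond the level of detail the paper offers, but they formalize the same underlying idea rather than constituting a different route.
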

\begin{proof}
    The critical-sequence for a low-tolerance task allows one deadline miss after $h_i$ deadline hits.
    This means that the maximum distance between two jobs of $\tau_i$ with a priority less than $\mathcal{JC}^{q = 0}_{i}$ is $T_i (h_i + 1)$.
\end{proof}

For removing the contribution of carry-out jobs, we introduce a variable that enables the second term in $\hat{W}_i(L)$ only if the carry-out jobs are in $\mathcal{JC}^{q = 0}_{i}$.
\begin{definition}
    \label{lem:carry-out-job}
    The variable $a_i$ equals one by default and zero when carry-out jobs are not in $\mathcal{JC}^{q = 0}_{i}$.
\begin{equation}
    \label{eq:rta-ai}
    a_i = 1 - \left \lfloor \frac{N_i(L) \mod(h_i + 1)}{h_i} \right \rfloor
\end{equation}
\end{definition}
    The value of $N_i \mod(h_i + 1)$ gives the number of jobs of the current critical-sequence.
    When this value is $h_i$, the next carry-out job belongs to a lower priority job-class.
    Furthermore, the result of the floor is one when: $N_i \mod(h_i + 1) = h_i$.
    It follows that the value of $a_i$ is zero when the carry-out job belongs to a different job-class than $\mathcal{JC}^{q = 0}_{i}$.

Then, we update the workload bound equation as follows:
\begin{lemma}
    \label{lem:workload-lower-tolerance-tasks}
    The workload bound function of a low-tolerance task $\tau_i$ is updated for removing non-interfering jobs as follow:
\begin{multline}
    \label{eq:rta-Workload}
    \hat{W}_i(L) = (N_i(L) - O_i(L)) C_i \\
    + a_i . \min(C_i, (L - R^{ub}_i - C_i - s_i) \mod T_i)
\end{multline}
\end{lemma}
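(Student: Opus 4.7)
The plan is to start from the original RTA workload bound of Equation~\eqref{eq:rta-bounded-workload} and subtract exactly the contribution of those jobs of $\tau_i$ that do not belong to $\mathcal{JC}^{q=0}_i$, since by the preceding argument only jobs in $\mathcal{JC}^{q=0}_i$ can interfere with a job of $\tau_k$ in $\mathcal{JC}^{q=0}_k$. I would first invoke Lemma~\ref{lem:w-h-particular-values} to fix $w_i = 1$ for any low-tolerance task, so that its critical-sequence has period $h_i + 1$ jobs, in which exactly $h_i$ are assigned to $\mathcal{JC}^{q=0}_i$ and exactly one is assigned to a strictly lower priority class. This periodic structure is what makes the accounting clean.

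Next I would split the workload into the carry-in/body contribution and the carry-out contribution, mirroring the original RTA decomposition. For carry-in and body jobs, the number of jobs released in the interval is $N_i(L)$ from Equation~\eqref{eq:rta-number-jobs}; among these, Lemma~\ref{lem:carry-in-and-body-jobs} bounds the number whose priority is lower than $\mathcal{JC}^{q=0}_i$ by $O_i(L)$, since two consecutive such jobs are separated by at least $(h_i+1)T_i$. Subtracting gives $(N_i(L) - O_i(L))$ interfering jobs, each contributing up to $C_i$, which yields the first term of Equation~\eqref{eq:rta-Workload}.

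For the carry-out, I would argue that its contribution must be kept unchanged from Equation~\eqref{eq:rta-bounded-workload} whenever the carry-out job still belongs to $\mathcal{JC}^{q=0}_i$, but must be dropped otherwise. Per Definition with $a_i$ from Equation~\eqref{eq:rta-ai}, the quantity $N_i(L) \bmod (h_i + 1)$ counts how far the current critical-sequence has progressed at the end of the interval; when this value equals $h_i$ the next (carry-out) job is the one that is deliberately demoted to a lower job-class, so the floor evaluates to $1$ and $a_i = 0$; otherwise the floor is $0$ and $a_i = 1$. Multiplying the original carry-out term by $a_i$ therefore yields exactly the second term of Equation~\eqref{eq:rta-Workload}.

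The main obstacle I anticipate is not the algebra but justifying that the worst-case alignment between the interval of interest $L$ and the critical-sequence of $\tau_i$ is in fact captured by $O_i(L)$ and $a_i$ simultaneously. In particular, one must check that the scheduler cannot defeat the bound by shifting where within the period of $h_i + 1$ jobs the non-$\mathcal{JC}^{q=0}_i$ job falls, so that the same $N_i(L)$ could correspond to more interfering jobs than $N_i(L) - O_i(L)$. I would handle this by observing that the expressions for $N_i(L)$ and $O_i(L)$ are both of the form $\lfloor (L + R^{ub}_i - C_i - s_i)/\text{(period)} \rfloor$, so they are consistent with a single worst-case release pattern, and that $a_i$ is derived from the same $N_i(L)$ rather than from an independent phase, which makes the three quantities mutually consistent. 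Once that alignment argument is in place, summing the two contributions delivers Equation~\eqref{eq:rta-Workload} directly.
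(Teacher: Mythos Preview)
Your proposal is correct and follows essentially the same approach as the paper: the paper's entire proof is the single sentence ``The proof is given by Lemma~\ref{lem:carry-in-and-body-jobs} and Definition~\ref{lem:carry-out-job},'' and your write-up is a fleshed-out version of exactly that, splitting the original workload of Equation~\eqref{eq:rta-bounded-workload} into the carry-in/body part (corrected via $O_i(L)$) and the carry-out part (gated by $a_i$). Your additional alignment discussion goes beyond what the paper argues, but it does not change the route.
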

\begin{proof}
    The proof is given by Lemma~\ref{lem:carry-in-and-body-jobs} and Definition~\ref{lem:carry-out-job}.
\end{proof}

Algorithm~\ref{alg:compute_workload} shows the function for calculating $\hat{W}_i$ which is used in the fixed point iteration of Equation~\eqref{eq:rta-fixed-point-iter}.

\begin{algorithm}[t!]
    \small
	\DontPrintSemicolon

    {\bf Input:} Length L, task $\tau_i$\;
    {\bf Output:} Workload  $\hat{W}_i$\;    

    $s_i \gets D_i - R^{ub}_i$ {\textcolor{black!10!blue}{// Slack}}\;
    \eIf{$m_i / K_i >= 0.5$} 
    {
        {\textcolor{black!10!blue}{// High-tolerance task}}\;
        $T_i \gets (w_i+1) T_i$\;
        $O_i \gets 0$\;
        $a_i \gets 1$\;
    }
    {
        {\textcolor{black!10!blue}{// Low-tolerance task}}\;
        $O_i \gets \left \lfloor \frac{L + R^{ub}_{i} - C_i - s_i}{T_i (h_i + 1)} \right \rfloor$ {\textcolor{black!10!blue}{// Equation~\eqref{eq:rta-Oi}}}\;
        $a_i \gets 1 - \left \lfloor \frac{N_i \mod(h_i + 1)}{h_i} \right \rfloor$ {\textcolor{black!10!blue}{// Equation~\eqref{eq:rta-ai}}}\;
    }
    $N_i \gets  \left \lfloor \frac{L + R^{ub}_i - C_i - s_i}{T_i} \right \rfloor$  {\textcolor{black!10!blue}{// Equation~\eqref{eq:rta-number-jobs}}}\;
    $\hat{W}_i \gets (N_i - O_i) C_i + a_i . \min(C_i, (L - R^{ub}_i - C_i - s_i) \mod T_i) $ {\textcolor{black!10!blue}{// Equation~\eqref{eq:rta-Workload}}}\;
    \Return $\hat{W}_i$\;

	\caption{Workload computation.}
    \label{alg:compute_workload}
\end{algorithm}

\begin{lemma}
    \label{lem:rta-extension}
    $R^{ub}_{k}$ as given in~\eqref{eq:rta-fixed-point-iter} is a safe upper bound on the response time of jobs in $\mathcal{JC}^{q = 0}_{k}$, where $\hat{W}_i$ is bounded as in~\eqref{eq:rta-Workload}.
\end{lemma}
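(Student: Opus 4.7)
The plan is to reduce the claim to the correctness of the original RTA of~\cite{bertogna2007response} by arguing that the only ingredient that changes is the workload bound, and that our modified $\hat{W}_i$ remains a safe upper bound on the workload that higher-priority $\mathcal{JC}^{q=0}_i$ jobs inject into any window of length $L$ containing a job of $\mathcal{JC}^{q=0}_k$.

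First, I would invoke the preceding lemma to restrict attention: a job of $\tau_k$ in $\mathcal{JC}^{q=0}_k$ can only be interfered with by jobs of $\tau_i$ in $\mathcal{JC}^{q=0}_i$ for $i \in hp(k)$. Hence the sum in~\eqref{eq:rta-fixed-point-iter}, which in the original RTA ranges over all higher-priority tasks and their full workload, is now legitimately restricted to the workload of these top-class jobs only. Because the original RTA derivation proves that bounding per-task interference by the workload of that task, capped by $R^{ub}_{k}-C_k+1$, divided by $n_c$, yields a sound response-time recurrence, it suffices to show that $\hat{W}_i$ defined in~\eqref{eq:rta-Workload} upper-bounds the cumulative execution demand of $\mathcal{JC}^{q=0}_i$ jobs in any interval of length $L$.

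I would split the argument by tolerance class. For high-tolerance tasks, Lemma~\ref{lem:workload-equivalent} already replaces $\tau_i$ with an equivalent hard real-time task of inter-arrival $(w_i+1)T_i$, so the standard Bertogna workload argument applies verbatim to $\tau_i^{eq}$: the chosen $N_i(L)$ jobs and the carry-out term are exactly the classical bound, only with the inflated period. For low-tolerance tasks the pattern of the critical-sequence allows at most one $\mathcal{JC}^{q=0}_i$ job to be dropped out of every $h_i+1$ consecutive releases, which Lemma~\ref{lem:carry-in-and-body-jobs} captures through $O_i(L)$; subtracting $O_i(L)\,C_i$ from the body/carry-in term of the standard RTA yields a valid lower count of \emph{interfering} jobs because $O_i(L)$ is constructed as the minimum, not maximum, number of non-interfering releases. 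For the carry-out contribution, Definition~\ref{lem:carry-out-job} gives $a_i=0$ precisely in those phases where the next release is forced into a lower job-class; in every other phase $a_i=1$ recovers the full original carry-out bound. Combining these adjustments gives~\eqref{eq:rta-Workload} as an over-approximation of the $\mathcal{JC}^{q=0}_i$ workload, never underestimating it.

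With $\hat{W}_i$ established as a safe workload bound, the rest is mechanical: the fixed-point recurrence in~\eqref{eq:rta-fixed-point-iter} is non-decreasing in $R^{ub}_{k}$ (both $N_i(L)$ and the carry-out minimum are monotone in $L$), bounded above by $D_k$ whenever the task is declared schedulable, and starts from $C_k$, so a standard monotone-iteration argument yields convergence to the least fixed point, which upper-bounds the actual response time of any job in $\mathcal{JC}^{q=0}_k$ by the Bertogna correctness result. The main obstacle I anticipate is the low-tolerance case: one must be careful that the worst-case alignment of interfering jobs with the window of interest still corresponds to the critical-sequence pattern, i.e.\ that no adversarial schedule can pack more $\mathcal{JC}^{q=0}_i$ jobs into $L$ than $N_i(L)-O_i(L)+a_i$; verifying this requires showing that any other hit/miss pattern allowed by our algorithm produces no denser sequence of top-class jobs than the critical-sequence itself, which follows from the job-level update rule in Definition~\ref{def:job-level} together with Lemma~\ref{lem:wh sufficint condition}.
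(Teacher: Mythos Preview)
Your proposal is correct and follows essentially the same route as the paper: reduce to the workload bounds already established for each tolerance class (Lemma~\ref{lem:workload-equivalent} for high-tolerance, Lemma~\ref{lem:workload-lower-tolerance-tasks} via Lemma~\ref{lem:carry-in-and-body-jobs} and Definition~\ref{lem:carry-out-job} for low-tolerance), then defer to the correctness of the original Bertogna RTA. The paper's own proof is a two-line invocation of Algorithm~\ref{alg:compute_workload} together with Lemmas~\ref{lem:workload-equivalent} and~\ref{lem:workload-lower-tolerance-tasks}; your version is a more fleshed-out instantiation of the same skeleton, and your closing remark about verifying that the critical-sequence realizes the densest packing of $\mathcal{JC}^{q=0}_i$ jobs makes explicit a point the paper leaves implicit.
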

\begin{proof}
    From Algorithm~\ref{alg:compute_workload}, the workload contribution of high-tolerance tasks uses the approach proven in Lemma~\ref{lem:workload-equivalent} and for low-tolerance tasks, it uses the approach proven in Lemma~\ref{lem:workload-lower-tolerance-tasks}.
\end{proof}

\subsection{Schedulability Analysis}
We introduce the following theorem for schedulability of a weakly-hard real-time task $\tau_i$ scheduled by our algorithm:
\begin{theorem}
    \label{the:sufficient-condition-weakly}
    For a task $\tau_i$ with the constraint $\mK$, meeting the deadlines of the jobs belonging to $\mathcal{JC}^{q = 0}_{k}$ is a sufficient condition for the schedulability of $\tau_i$ .
\end{theorem}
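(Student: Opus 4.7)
The plan is to chain together three results that are already established in the excerpt, so the proof is essentially a one-step composition rather than a new argument. First I would observe that the theorem is really a corollary of Lemma~\ref{lem:wh sufficint condition} combined with Theorem~\ref{th:wh mK}, and that the role of the response-time machinery (Lemma~\ref{lem:rta-extension}) is only to provide an effective way to verify the hypothesis. So I would structure the proof in two short stages: first, reduce schedulability in the weakly-hard sense to the $\wh$ constraint; second, reduce the $\wh$ constraint to the hypothesis on $\mathcal{JC}^{q=0}_i$.

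Concretely, I would start from the assumption that every job of $\tau_i$ that is assigned to $\mathcal{JC}^{q=0}_i$ meets its deadline. By Lemma~\ref{lem:wh sufficint condition}, this already implies that $\tau_i$ satisfies the weakly-hard constraint $\wh$, since the scheduling routine of Section~\ref{sec:contribution-algorithm} guarantees that $h_i$ consecutive jobs are promoted to $\mathcal{JC}^{q=0}_i$ after every block of at most $w_i$ misses, which is exactly the critical-sequence pattern. Next I would invoke Theorem~\ref{the:new-constraint} (equivalently Lemma~\ref{lem:harder-constraint}, that $\wh \preccurlyeq \mK$) to promote the $\wh$-satisfaction of $\tau_i$ to $\mK$-satisfaction. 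Since the definition of schedulability of a weakly-hard task is exactly that $\tau_i$ fulfills its constraint $\mK$, the conclusion follows.

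To make the theorem actionable in an analysis tool, I would then remark that the hypothesis ``all jobs in $\mathcal{JC}^{q=0}_i$ meet their deadlines'' can be verified by the extended RTA: Lemma~\ref{lem:rta-extension} shows that $R^{ub}_k$ computed via the fixed-point iteration~\eqref{eq:rta-fixed-point-iter} with the workload bound~\eqref{eq:rta-Workload} is a safe upper bound on the response time of jobs in $\mathcal{JC}^{q=0}_k$. Hence, checking $R^{ub}_k \le D_k$ for every task $\tau_k$ is sufficient to certify the hypothesis, and therefore, by the previous paragraph, sufficient to certify weakly-hard schedulability of the task set.

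The statement is essentially a bookkeeping corollary, so there is no real obstacle. The only subtlety worth flagging in the proof is the apparent index mismatch in the theorem (the statement refers to $\mathcal{JC}^{q=0}_k$ while quantifying over $\tau_i$); I would silently read this as $\mathcal{JC}^{q=0}_i$ so that the chain Lemma~\ref{lem:wh sufficint condition} $\Rightarrow$ Theorem~\ref{the:new-constraint} $\Rightarrow$ definition of weakly-hard schedulability applies verbatim without having to re-establish a relationship between the priorities of different tasks.
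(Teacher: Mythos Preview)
Your proposal is correct and matches the paper's own proof, which simply states that the result follows directly from Theorem~\ref{th:wh mK} and Lemma~\ref{lem:wh sufficint condition}. Your additional remark about Lemma~\ref{lem:rta-extension} being the operational way to verify the hypothesis, and your flagging of the $k$ versus $i$ index typo, are both accurate and go slightly beyond what the paper writes, but the core argument is identical.
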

\begin{proof}
	It follows directly from Theorem~\ref{th:wh mK} and Lemma~\ref{lem:wh sufficint condition}.
\end{proof}

Then, the schedulability of a task set is defined as follows:
\begin{corollary}
    A task set is schedulable by our scheduling algorithm if for every task, the sufficient condition given in Theorem~\ref{the:sufficient-condition-weakly} is satisfied.
\end{corollary}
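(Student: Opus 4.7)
The plan is to observe that this corollary is essentially a per-task quantifier lift of Theorem~\ref{the:sufficient-condition-weakly}, so the proof reduces to unfolding definitions. First I would recall what it means for a task set to be schedulable in the weakly-hard sense: by the definition given in Section~\ref{sec:system-model}, a set $\mathcal{T}$ is schedulable under a given algorithm if and only if every $\tau_i \in \mathcal{T}$ is schedulable, i.e.\ every task meets its constraint $\mK$ in every window of $K_i$ consecutive releases.

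Next I would apply Theorem~\ref{the:sufficient-condition-weakly} individually to each task $\tau_i \in \mathcal{T}$. By hypothesis, for each $\tau_i$ all jobs in $\mathcal{JC}^{q=0}_{i}$ meet their deadlines (this is exactly the sufficient condition asserted in the theorem, which in turn is checked by computing $R^{ub}_{i}$ via the fixed-point iteration of Equation~\eqref{eq:rta-fixed-point-iter} using the workload bound of Lemma~\ref{lem:workload-lower-tolerance-tasks} and verifying $R^{ub}_{i} \le D_i$). The theorem then yields that $\tau_i$ satisfies $\mK$. Since this holds for every $\tau_i \in \mathcal{T}$, the task set is schedulable by definition.

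The only point that warrants a sentence of justification, and which I expect to be the mildly subtle step rather than a genuine obstacle, is mutual consistency: the per-task sufficient condition in Theorem~\ref{the:sufficient-condition-weakly} relies on bounds $R^{ub}_{i}$ that were derived under the assumption that all higher-priority tasks only contribute workload through their $\mathcal{JC}^{q=0}$ jobs. I would point out that Lemma~\ref{lem:rta-extension} already establishes that $R^{ub}_{i}$ is a \emph{safe} upper bound when the extended workload function of Algorithm~\ref{alg:compute_workload} is used, and that the fixed-point iteration processes tasks in priority order so the bounds for higher-priority tasks are available and valid when $R^{ub}_{k}$ of a lower-priority task is computed. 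Hence the per-task conditions can be legitimately combined, and the corollary follows. No genuine difficulty arises beyond this bookkeeping remark.
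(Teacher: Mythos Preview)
Your proposal is correct and aligns with the paper's treatment: the paper states the corollary without proof, treating it as an immediate consequence of applying Theorem~\ref{the:sufficient-condition-weakly} to each task. Your write-up is in fact more careful than the paper itself, since you explicitly flag the mutual-consistency point about the $R^{ub}_i$ bounds being computed in priority order via Lemma~\ref{lem:rta-extension}; the paper leaves this implicit.
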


\section{Evaluation}
\label{sec:evaluation}
Our global scheduling depends on the transformation of the weakly-hard constraint $\mK$ to the $\wh$ constraint which has smaller window. 
In this section, we study first the limitations introduced by adopting a harder constraint than $\mK$.
  
We also evaluate our global scheduling for weakly-hard real-time tasks in this section.
To the best of our knowledge, there is no other global scheduling analysis for weakly-hard real-time tasks.
Furthermore, since we are interested in RTEMS and the available global schedulers are EDF and RM, we compare the proposed scheduling algorithm against the results of RTA (Section~\ref{sec:RTA}) for RM and EDF.
The experiments are based on the analysis of task sets randomly generated using the UUnifast algorithm~\cite{bini2005measuring}.
For a given total utilization, we calculate the percentage of schedulable task sets, known as schedulability ratio.
Additionally, we ran our experiments for different values of $K_i$ and measured the computation times for different number of tasks.

Furthermore, in order to analyse the scalability of our approach, we also compare our analysis (here labeled as RTA WH) against the Integer Linear Programming (ILP) approach proposed in~\cite{sun2017MILP} and the Job-Class Level (JCL) proposed in~\cite{choi2021toward}.

Finally, at the end of this section, we show the execution time distribution for assigning priorities to released jobs.
This last experiment runs on the same hardware platform used in CALLISTO.

\subsection{Transformation Cost}
For analyzing the limitation introduced by transforming $\mK$ to $\wh$, where $\wh$ is harder than $\mK$, we count the possible deadline sequences which satisfy both constraints.
Algorithm~\ref{alg:limitation-calculation} counts these possible solutions based on a given $\mK$.
It starts by calculating $h_i$ and $w_i$ for a given $\mK$.
Then, it creates two binary trees of depth $K_i$, one starting with zero and the other with one.
These binary trees are used to extract the different deadline sequences combinations which are in fact the branches of the trees, see \figurename~\ref{fig:binary-tree}.
Later, the solutions for $\mK$ and $\wh$ are counted from the branches.

\begin{algorithm}
	\small
	\DontPrintSemicolon
	
	{\bf Input:} constraint $\mK$\;
	{\bf Output:} relation between solutions for the harder and the original constraints\;
	
	$w_i \gets \max \left( \left \lfloor \frac{m_i}{K_i - m_i}\right \rfloor, 1 \right)$ {\textcolor{black!10!blue}{// Definition~\ref{def:w-deadline-misses}}}\;
	$h_i \gets \left \lceil \frac{K_i - m_i}{m_i}\right \rceil$ {\textcolor{black!10!blue}{// Definition~\ref{def:h-deadline-hits}}}\;
	
	$trees \gets create\_binary\_trees(K_i)$\;
	$branches \gets extract\_branches(trees)$\;
	$solutions \gets count\_solutions(\mK, branches)$\;
	$harder\_solutions \gets count\_solutions(\wh, branches)$\;
	\Return $harder\_solutions / solutions$\;
	
	\caption{Algorithm for counting possible solutions.}
	\label{alg:limitation-calculation}
\end{algorithm}
\begin{figure}
	\centering
	\resizebox{0.6\columnwidth}{!}{
		\includegraphics{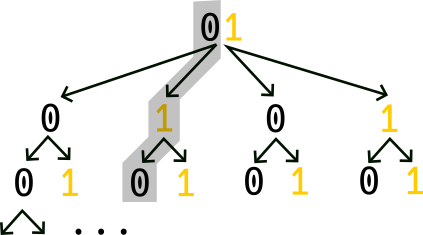}
	}
	\caption{Binary trees created for counting deadline sequences (example of sequence highlighted).}
	\label{fig:binary-tree}
\end{figure}

Results for different values of $\mK$ are shown in Table~\ref{tab:limitation-harder-constraint}.
For low-tolerance tasks, the omitted deadline sequences can be very high when $m_i/K_i$ is near to 0.5.
On the other hand, there are less omitted sequences for high-tolerance.
For both kind of tasks, the omitted sequences increases when considering a constraint which is multiple of another, e.g.\ $\mK = \wharg{8}{20}$ and $\mK = \wharg{2}{5}$.
This is explained because some deadline sequences with consecutive deadline misses are not counted.
Despite this limitation, our experiments show better results than traditional hard real-time scheduling analysis.
Furthermore, the results obtained here are theoretical since not all the uncounted deadline sequences will also mean schedulable solutions.
\begin{table}
	\centering
    \caption{Limitation for harder constraints.}
	\resizebox{1\columnwidth}{!}{
		\begin{tabular}{ c | c | c }
			\hline
            \rule{0pt}{3ex}
			$\mK$ & $\wh$ & $\wh$ solutions / $\mK$ solutions\\
			\hline
            \rule{0pt}{3ex}
            $\wharg{1}{5}$ & $\wharg{1}{5}$ & 1.0 \\
             \rowcolor{gray!10}\rule{0pt}{3ex}
            $\wharg{2}{5}$ & $\wharg{1}{3}$ & 0.5625 \\
            \rule{0pt}{3ex}
            $\wharg{3}{5}$ & $\wharg{1}{2}$ & 0.5 \\
             \rowcolor{gray!10}\rule{0pt}{3ex}
            $\wharg{4}{5}$ & $\wharg{4}{5}$ & 1.0 \\
             \rule{0pt}{3ex}
            $\wharg{4}{10}$ & $\wharg{1}{3}$ & 0.1554 \\
            \rowcolor{gray!10}\rule{0pt}{3ex}
            $\wharg{8}{10}$ & $\wharg{4}{5}$ & 0.9003 \\
             \rule{0pt}{3ex}
            $\wharg{8}{20}$ & $\wharg{1}{3}$ & 0.01040 \\
            \rowcolor{gray!10}\rule{0pt}{3ex}
            $\wharg{16}{20}$ & $\wharg{4}{5}$ & 0.7511 \\
			\hline
		\end{tabular}
	}
    \label{tab:limitation-harder-constraint}
\end{table}

\pdfsuppresswarningpagegroup=1

\begin{figure*}[t!]
	\centering
	\resizebox{2.1\columnwidth}{!}{
		\begin{tabular}{ c  c  c }
			(a) $n_c=2$ &  (b) $n_c=4$& (c) $n_c=8$\\ 
			\includegraphics[width=1\columnwidth]{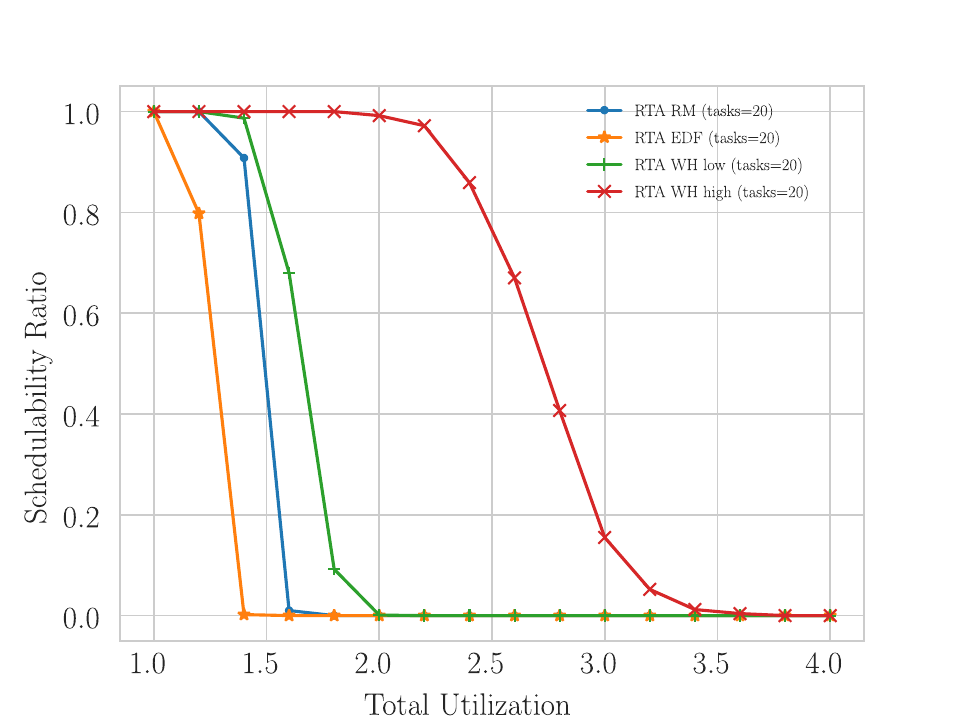} &
			\includegraphics[width=1\columnwidth]{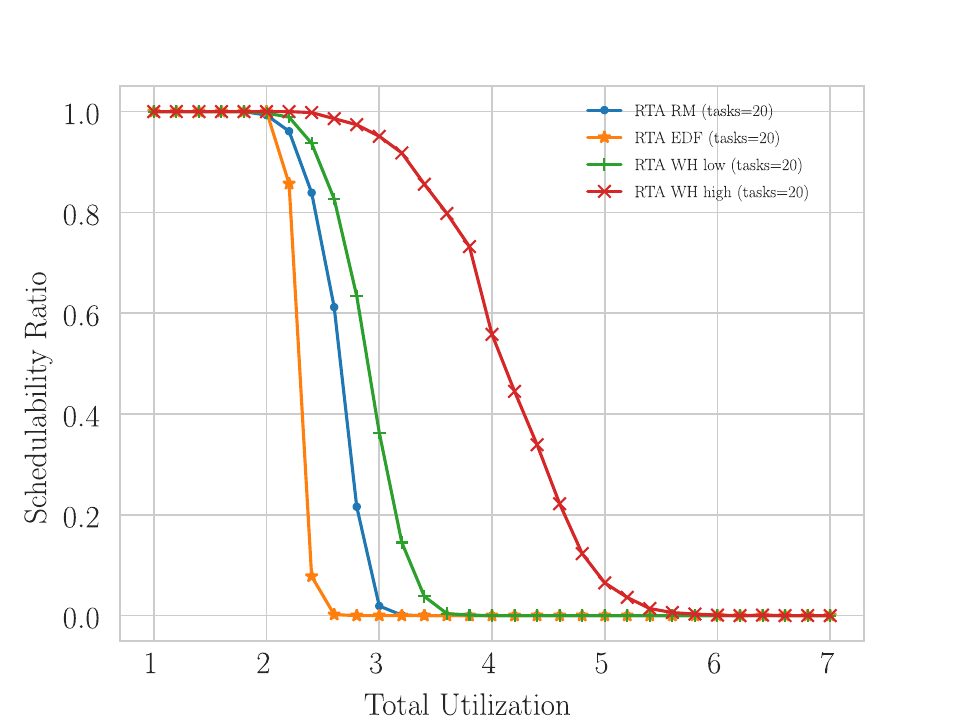} &
			\includegraphics[width=1\columnwidth]{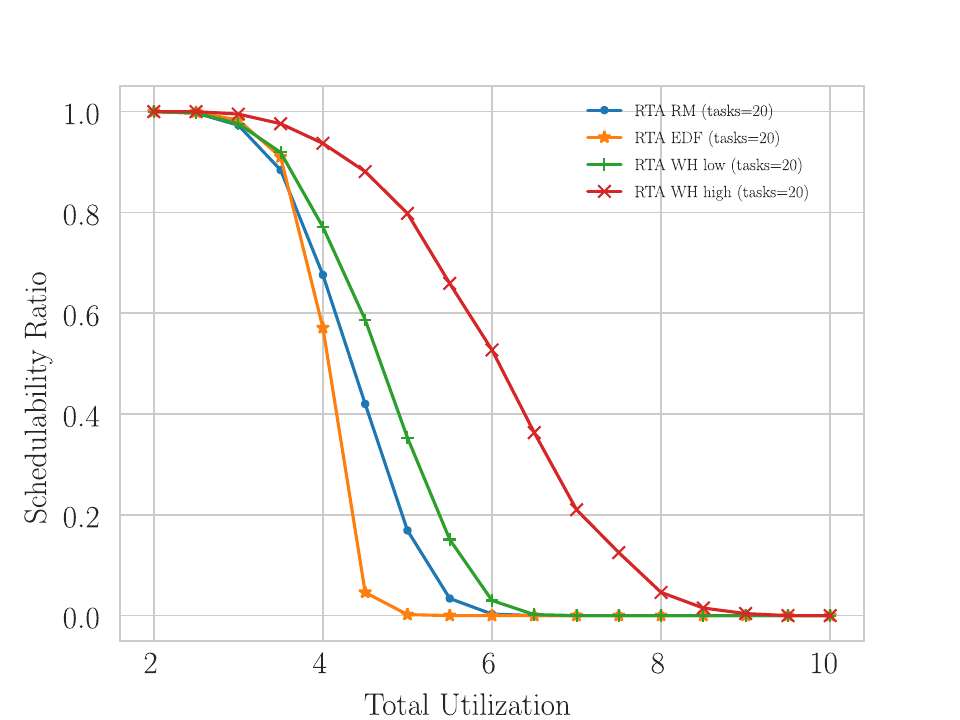} \\
		\end{tabular}
	}
	\caption{Schedulability ratio.}
	\label{fig:sched-ratio}
\end{figure*}

\begin{figure}[h!]
    \centering
    \includegraphics[width=0.67\columnwidth]{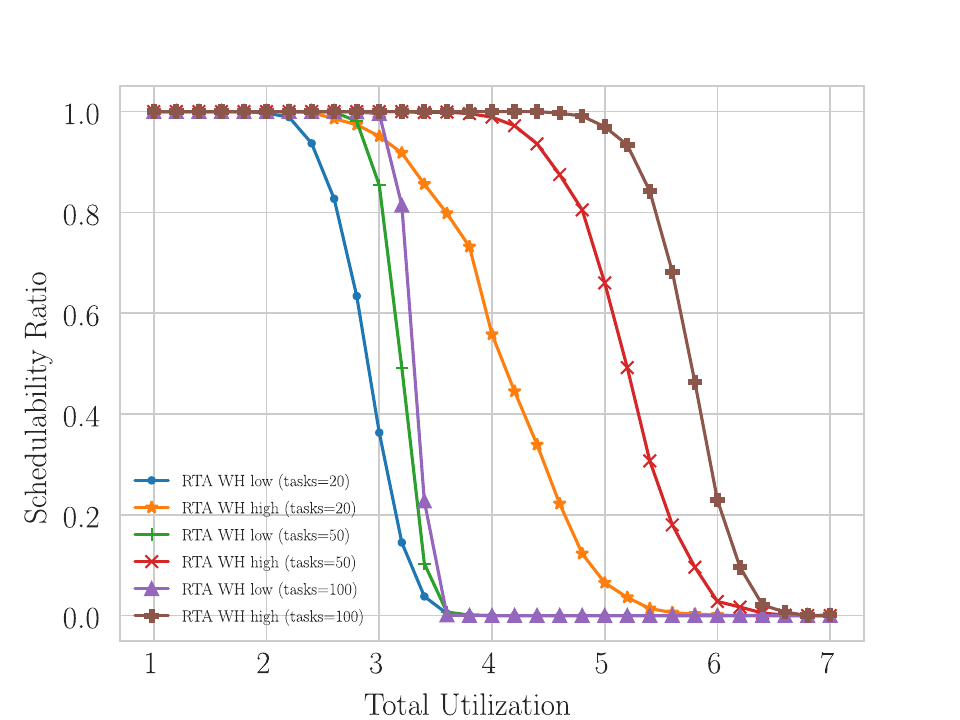}
	\caption{Schedulability ratio for 20, 50 and 100 tasks (4 cores)}
    \label{fig:sched-ratio-tasks}
\end{figure}

\begin{figure}[h!]
    \centering
    \includegraphics[width=0.67\columnwidth]{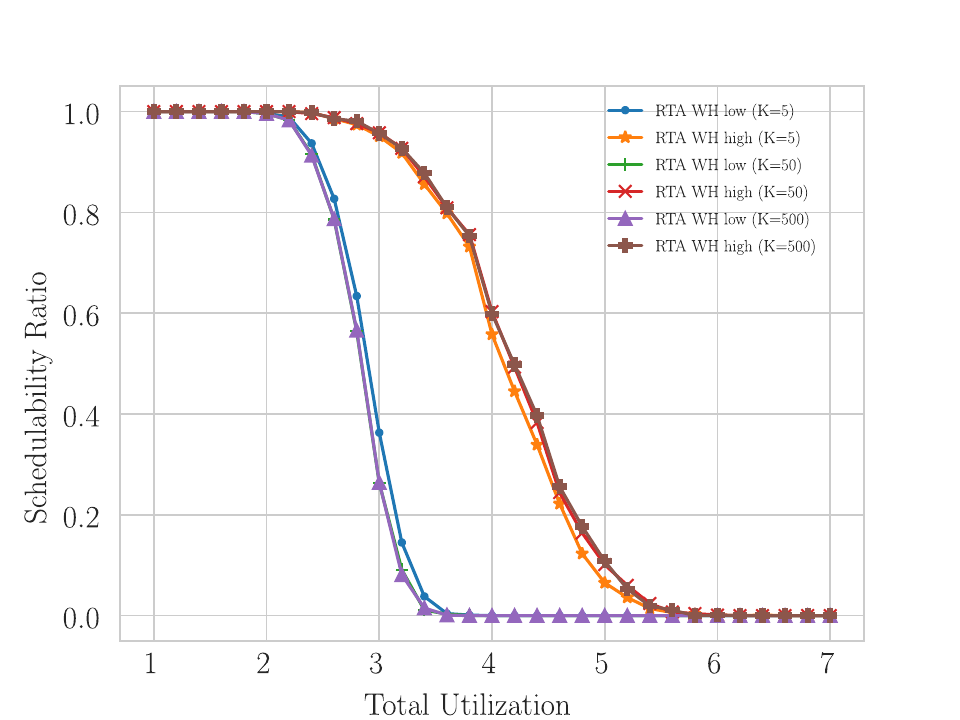}
    \caption{Schedulability ratio for $K_i$ equals to 5, 50 and 500 (4 cores)}
    \label{fig:sched-ratio-k}
\end{figure}

\begin{figure*}
    \centering
    \resizebox{2\columnwidth}{!}{
    \begin{tabular}{ c  c  c }
        \includegraphics[width=\columnwidth]{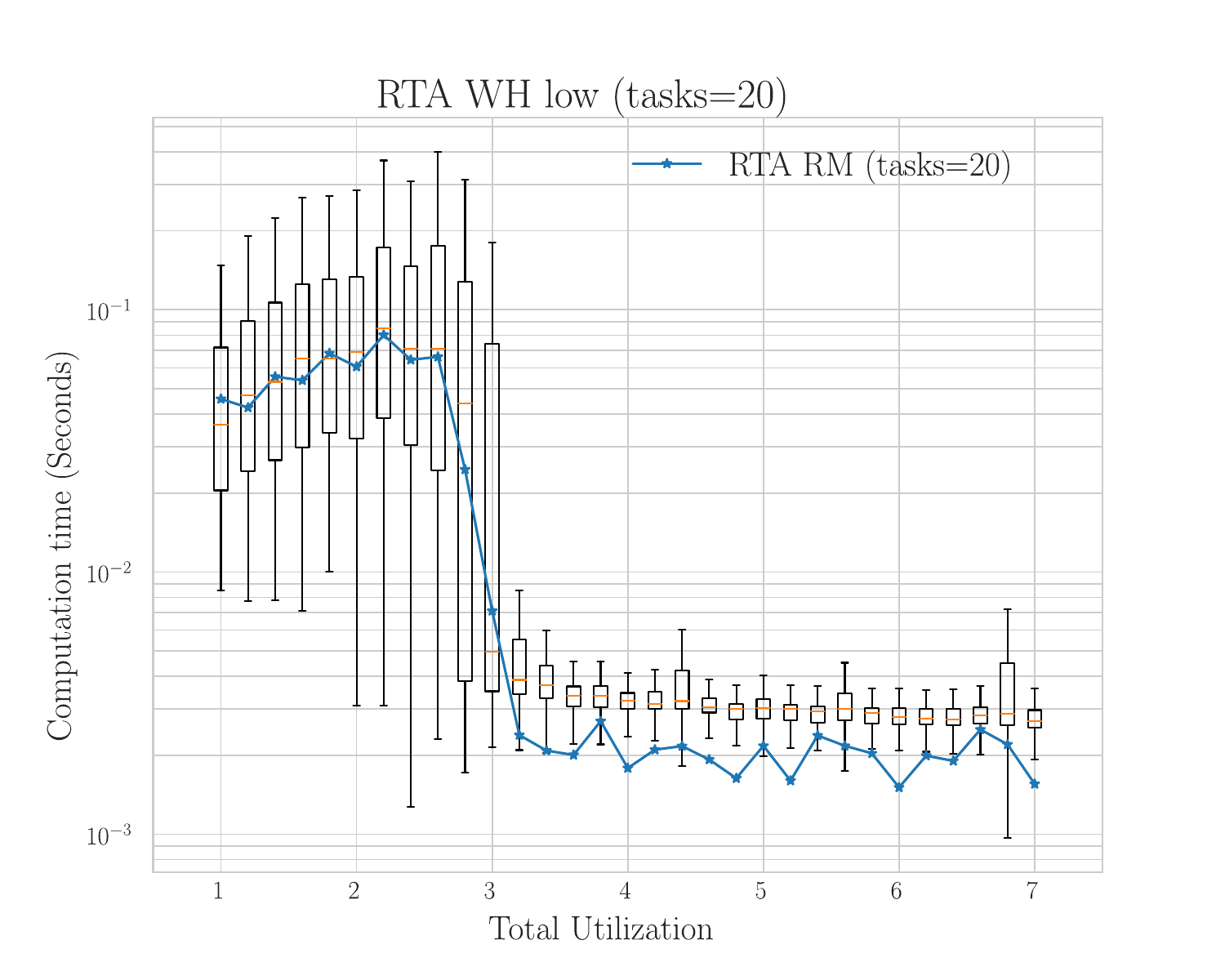} &
        \includegraphics[width=\columnwidth]{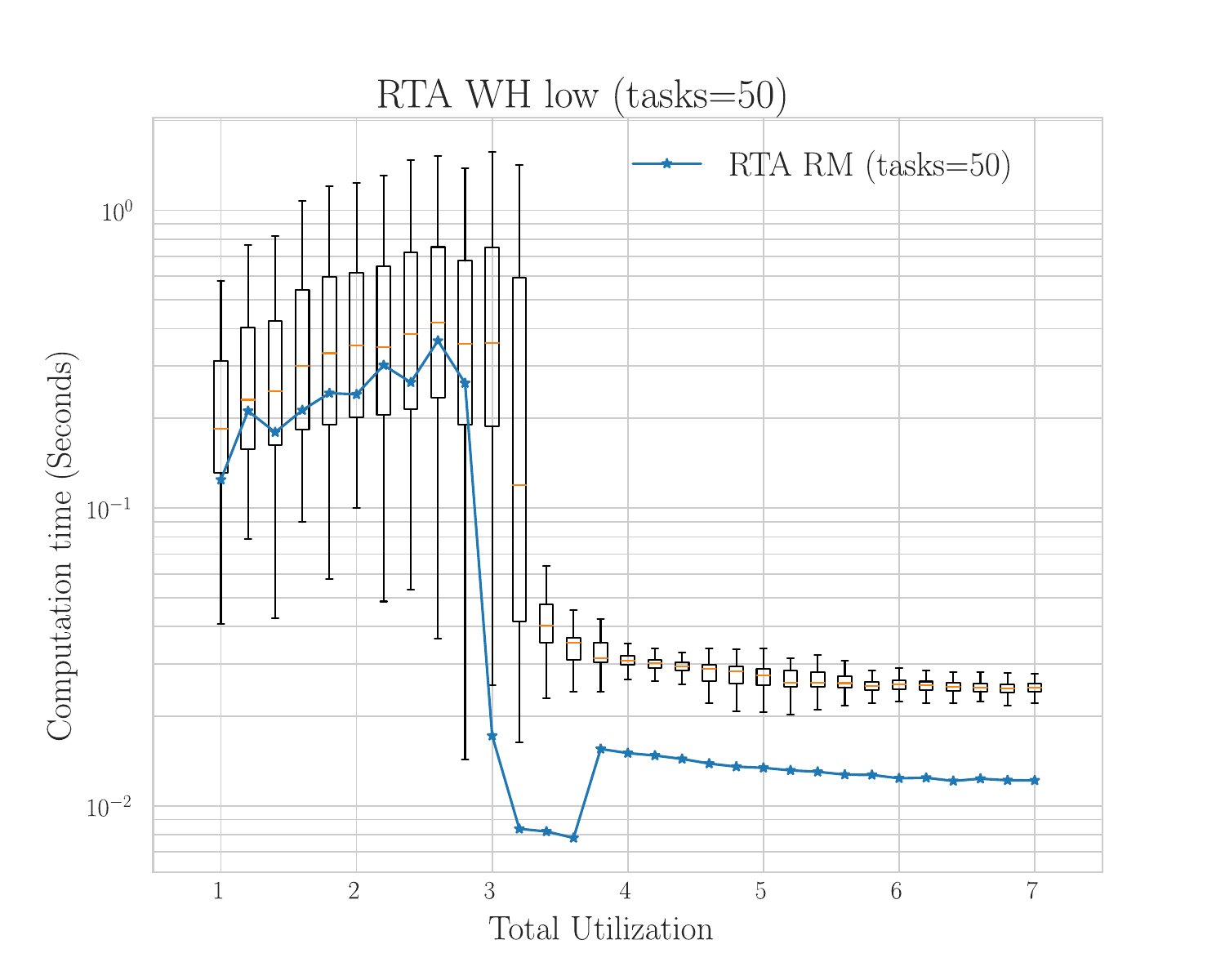} &
        \includegraphics[width=\columnwidth]{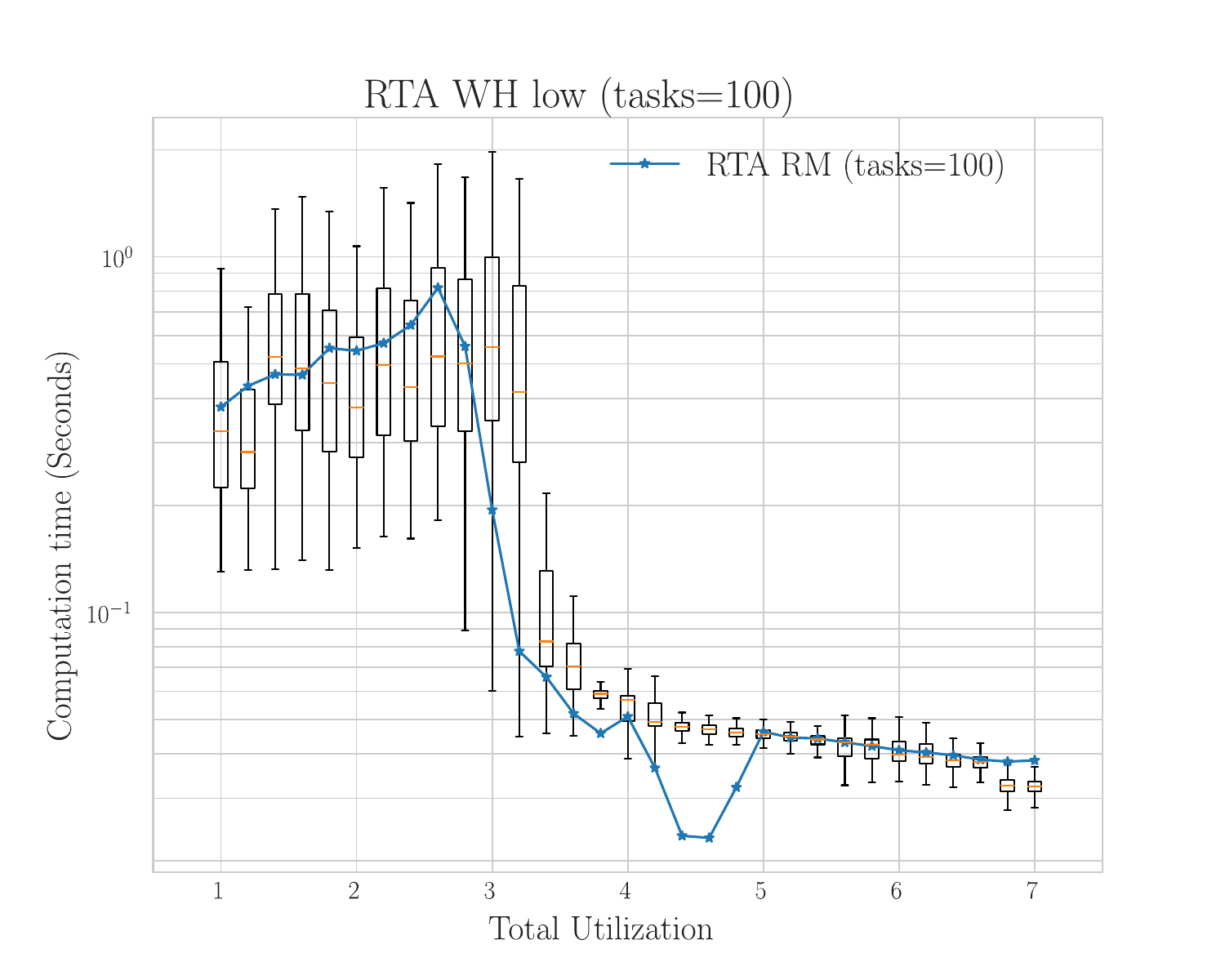} \\
        \includegraphics[width=\columnwidth]{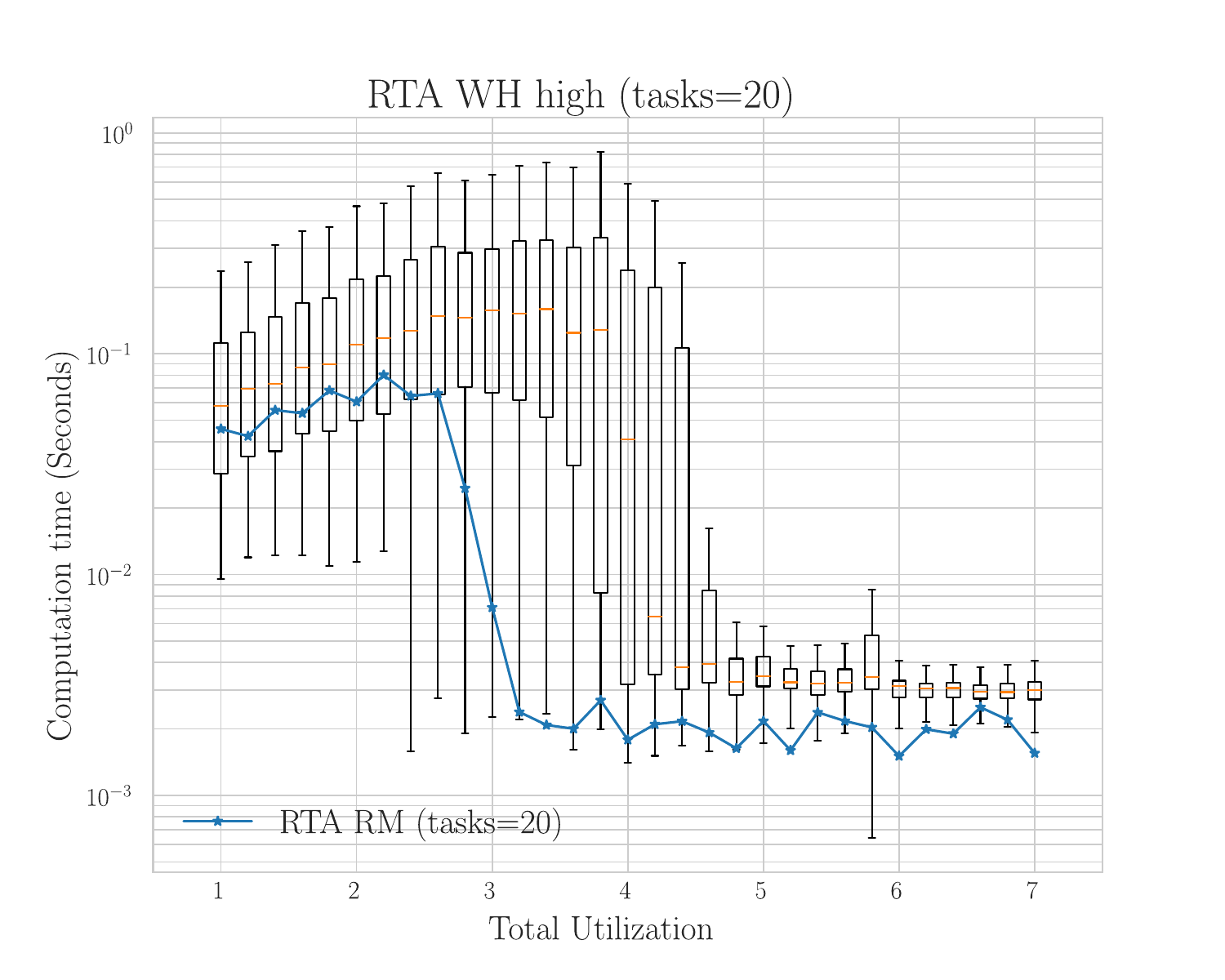} &
        \includegraphics[width=\columnwidth]{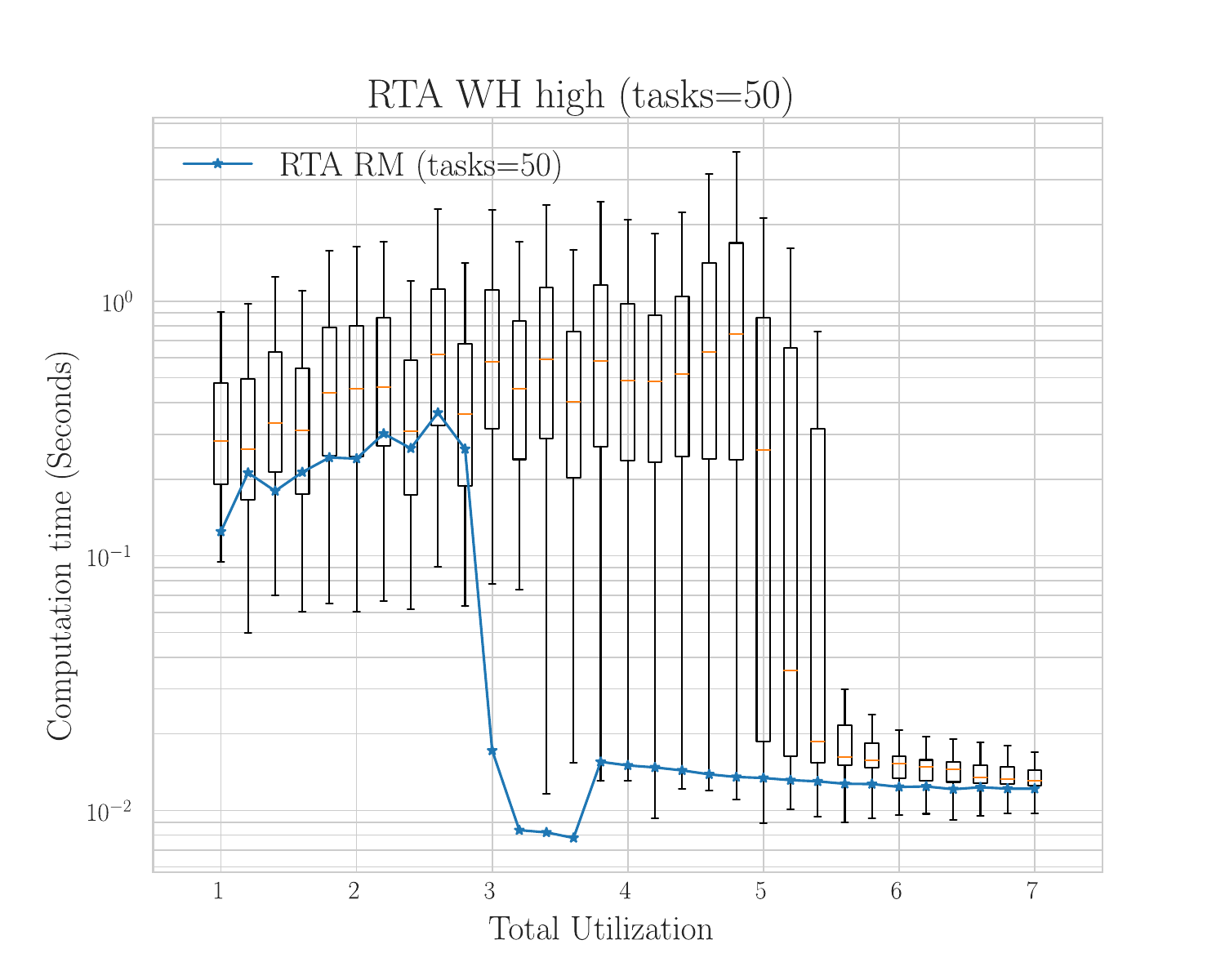} &
        \includegraphics[width=\columnwidth]{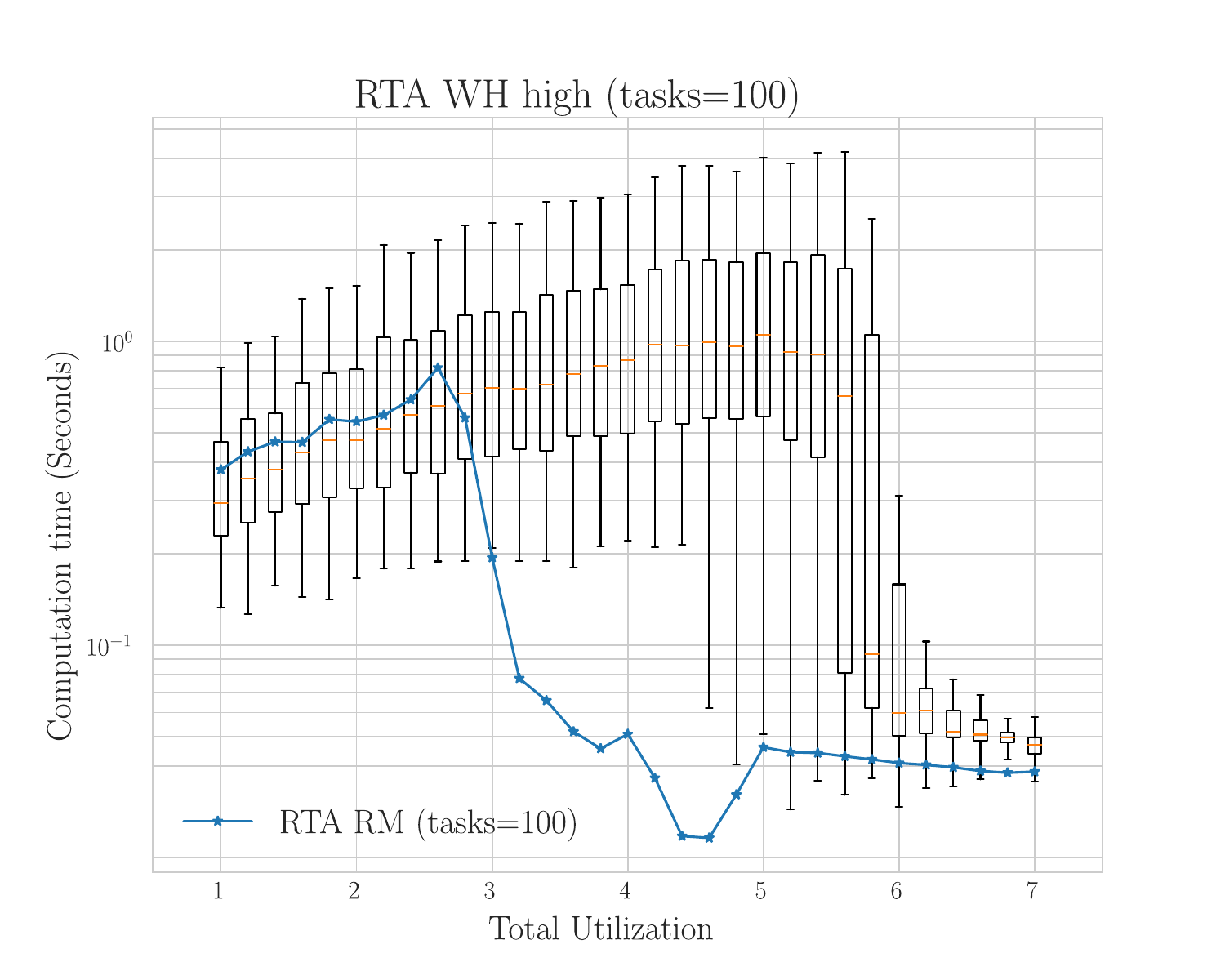} \\
    \end{tabular}
    }
    \caption{Computation time for 20, 50 and 100 tasks (4 cores).}
    \label{fig:computation-time-tasks}
\end{figure*}

\begin{figure*}
    \centering
    \resizebox{2\columnwidth}{!}{
    \begin{tabular}{ c  c  c }
        \includegraphics[width=\columnwidth]{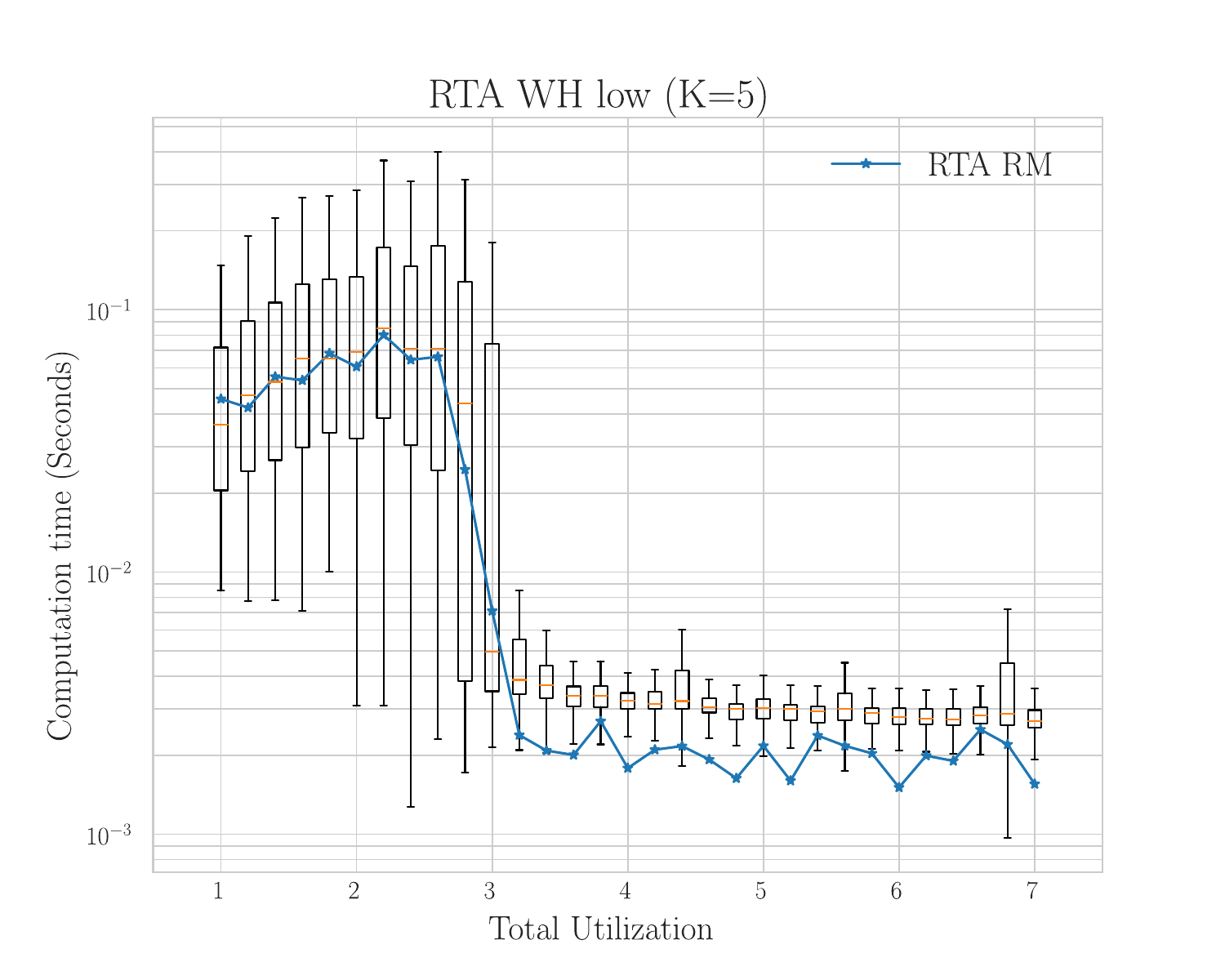} &
        \includegraphics[width=\columnwidth]{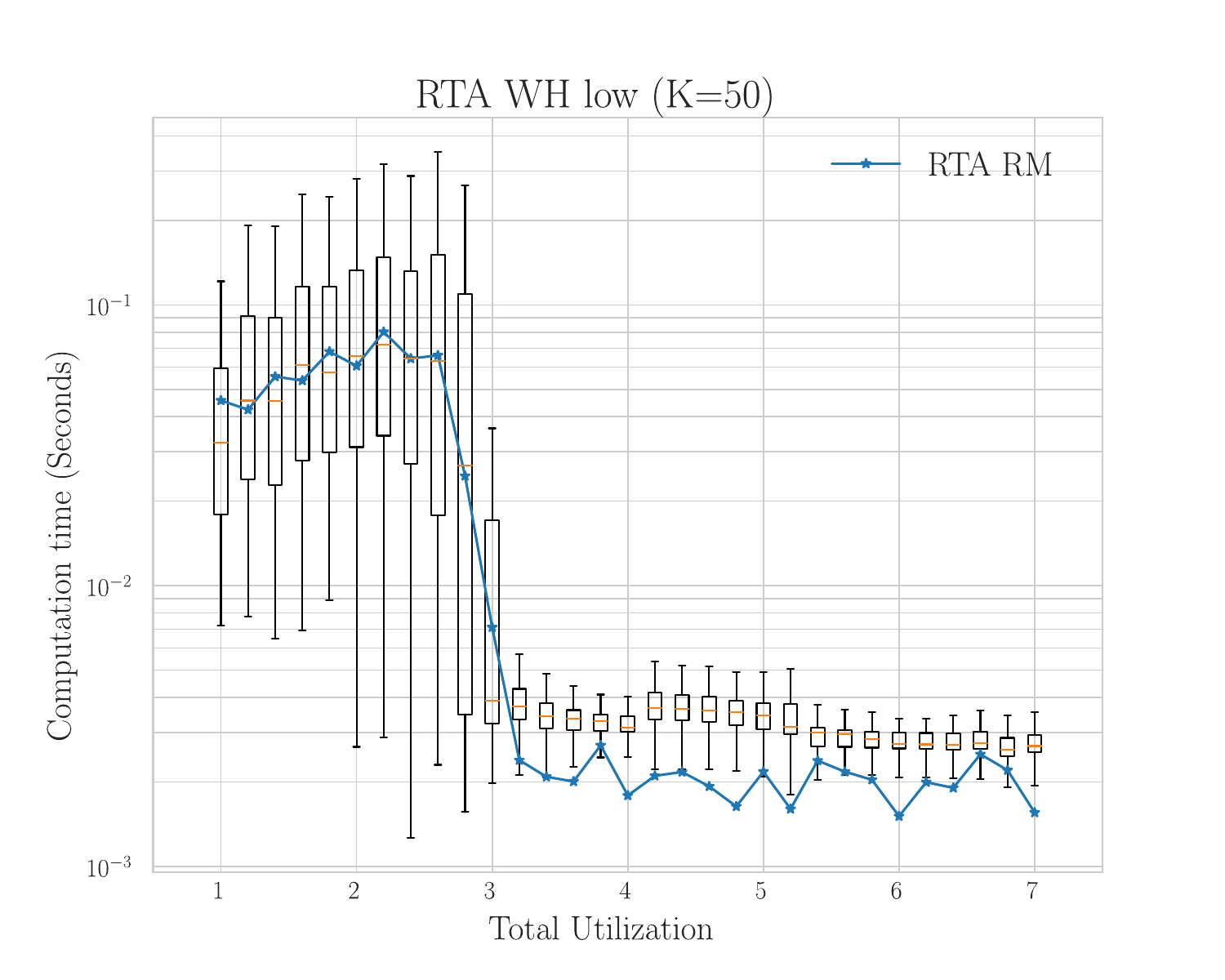} &
        \includegraphics[width=\columnwidth]{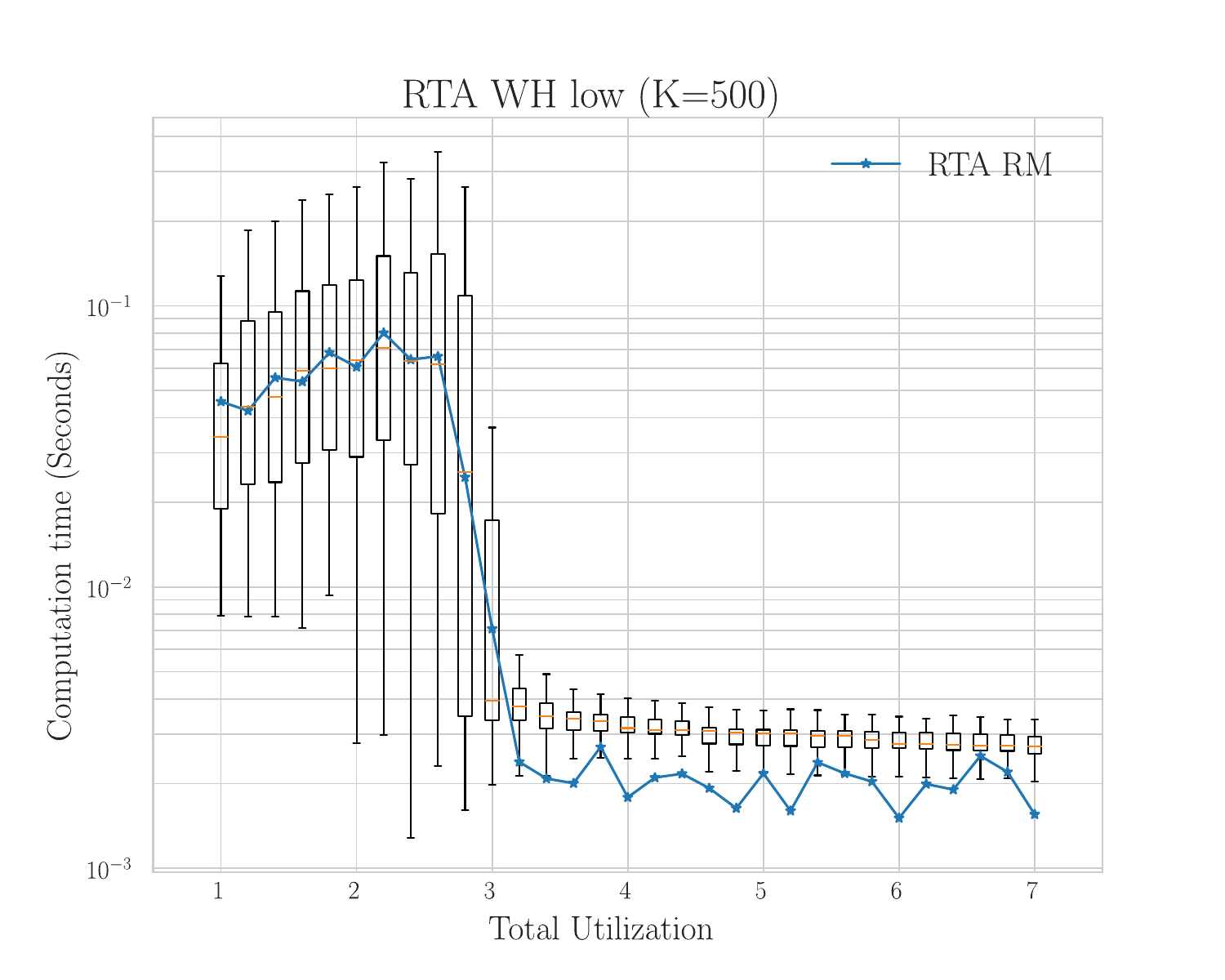} \\
        \includegraphics[width=\columnwidth]{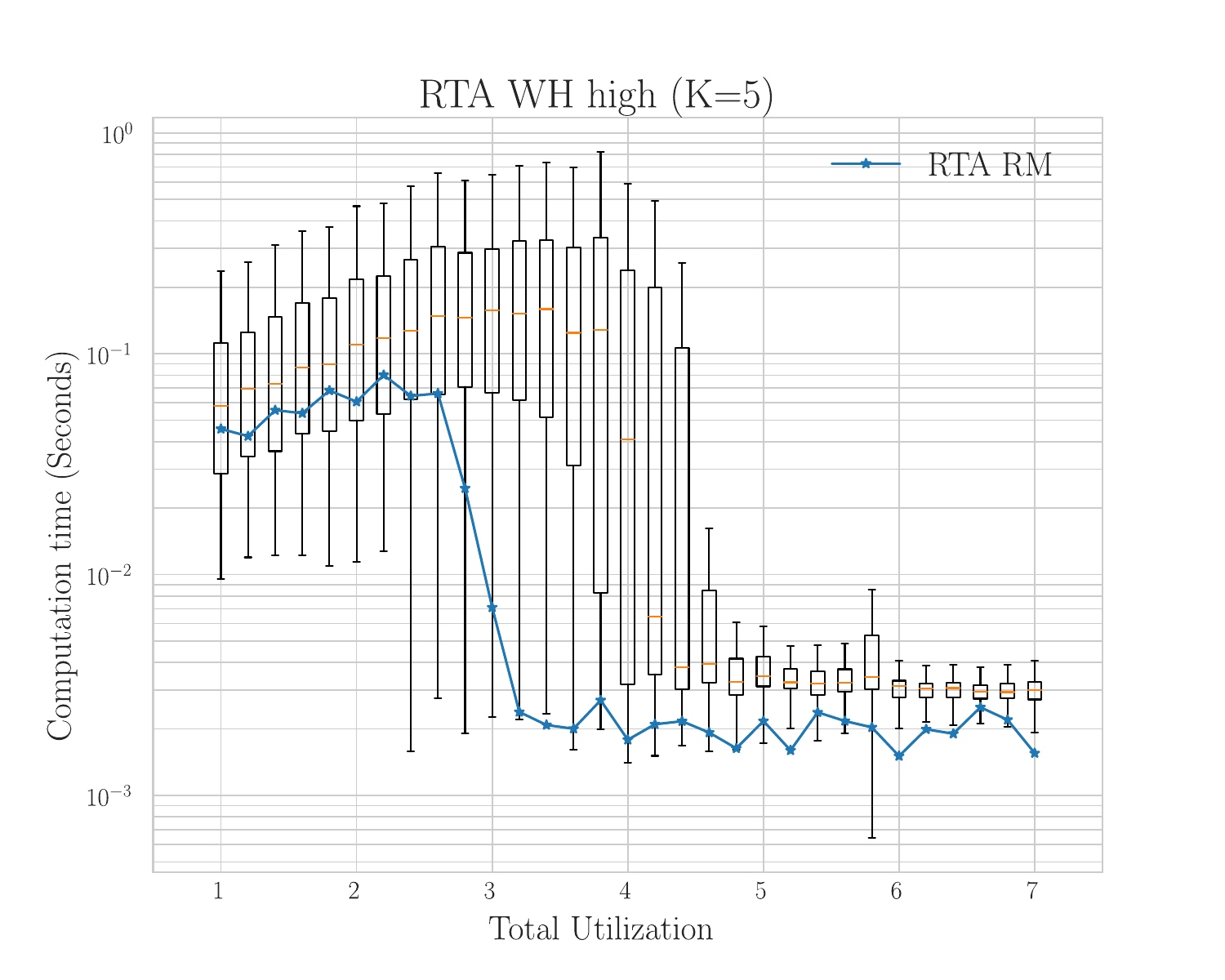} &
        \includegraphics[width=\columnwidth]{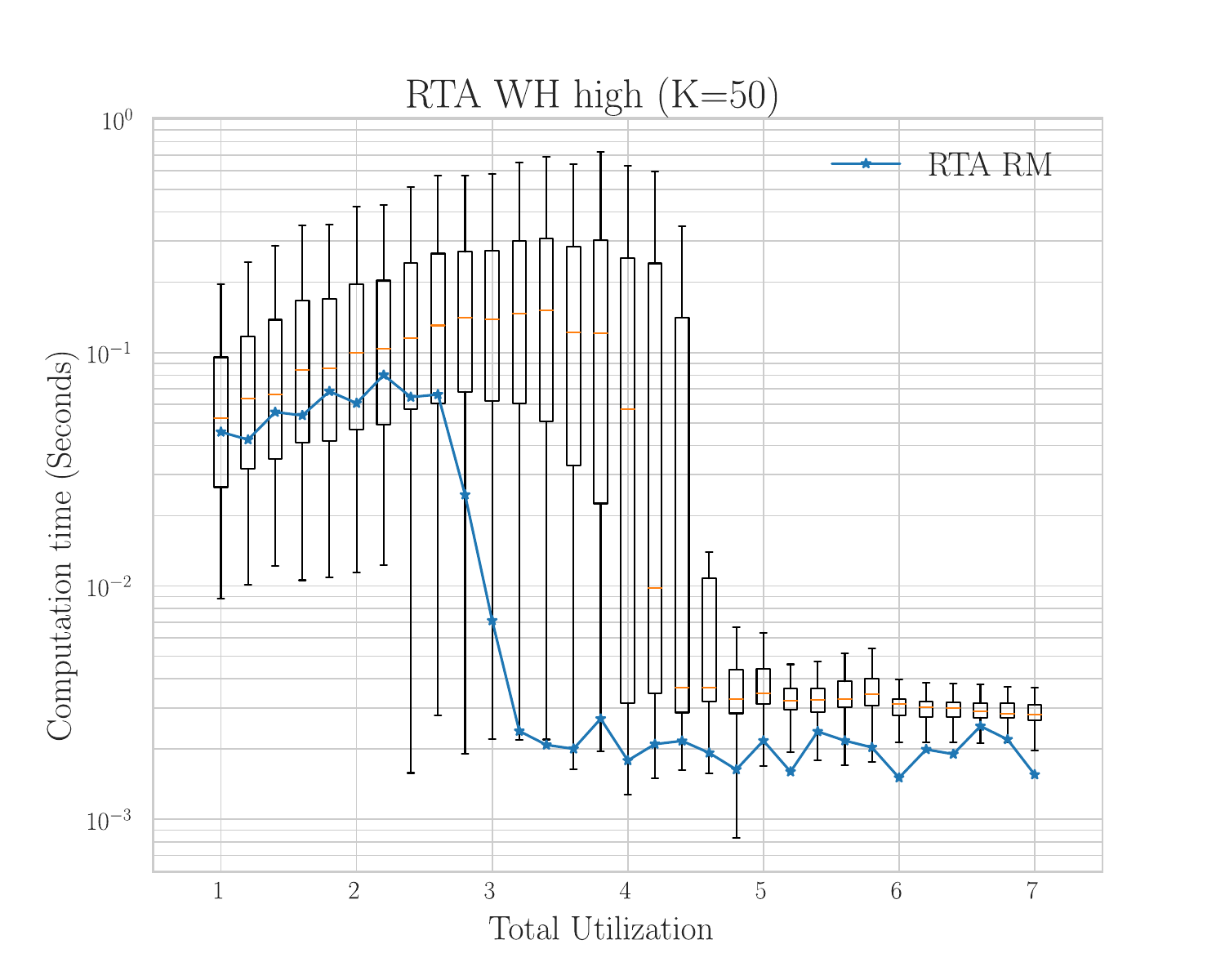} &
        \includegraphics[width=\columnwidth]{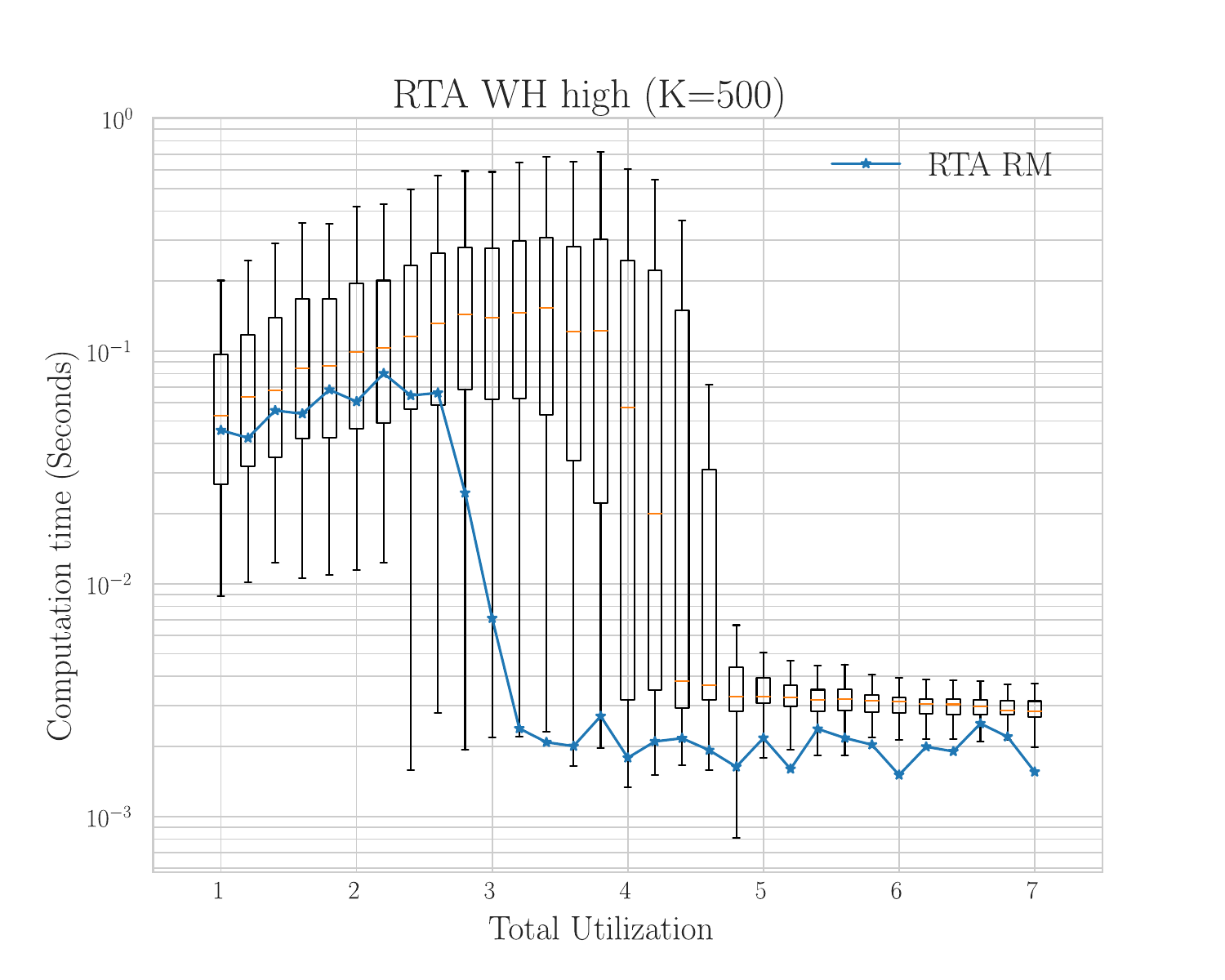} \\
    \end{tabular}
    }
	\caption{Computation time for $K_i$ equals to 5, 50 and 500 (4 cores, 20 tasks in the set).}
    \label{fig:computation-time}
\end{figure*}

\begin{figure*}
    \centering
    \resizebox{2\columnwidth}{!}{
    \begin{tabular}{ c  c  c }
        \includegraphics[width=\columnwidth]{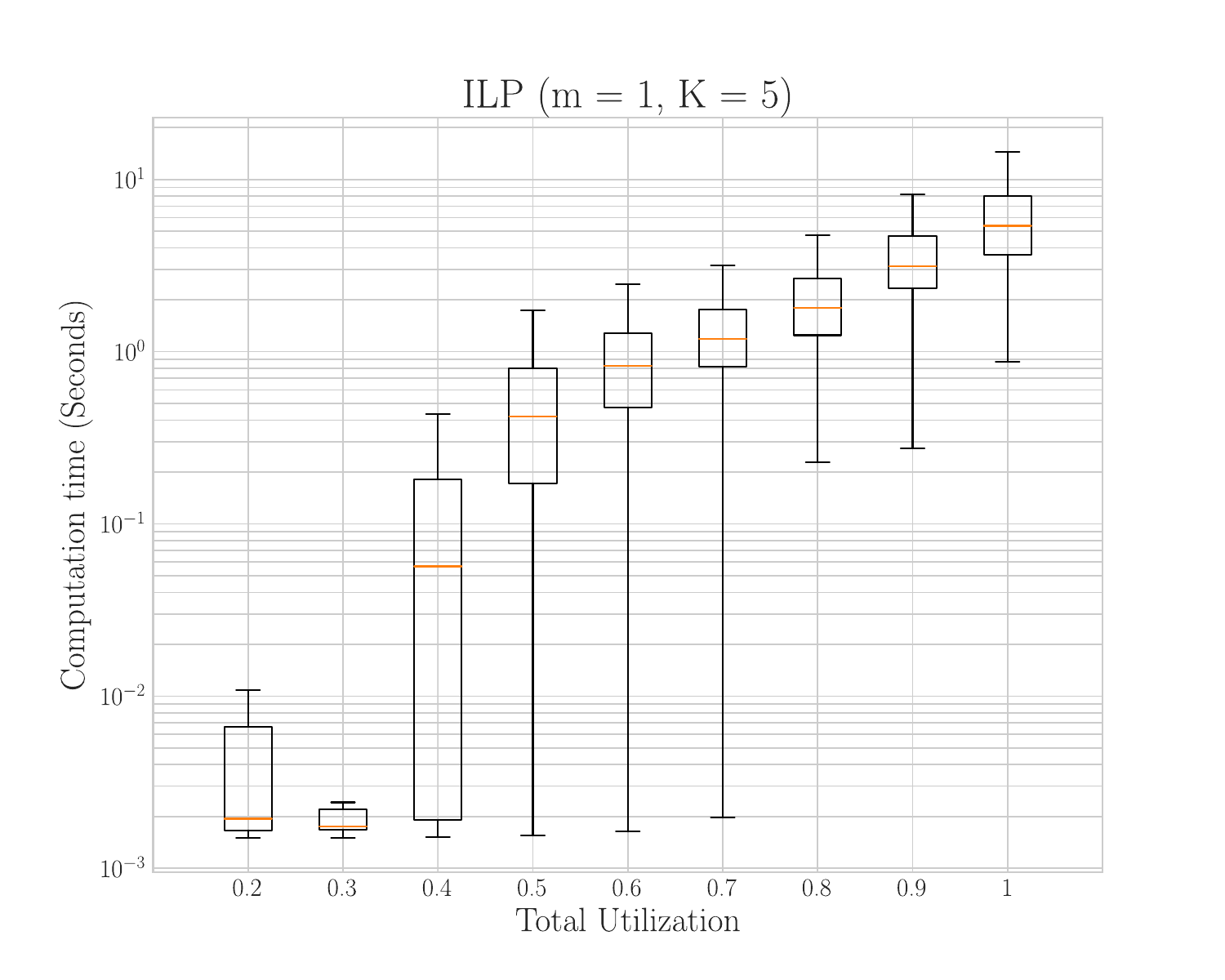} &
        \includegraphics[width=\columnwidth]{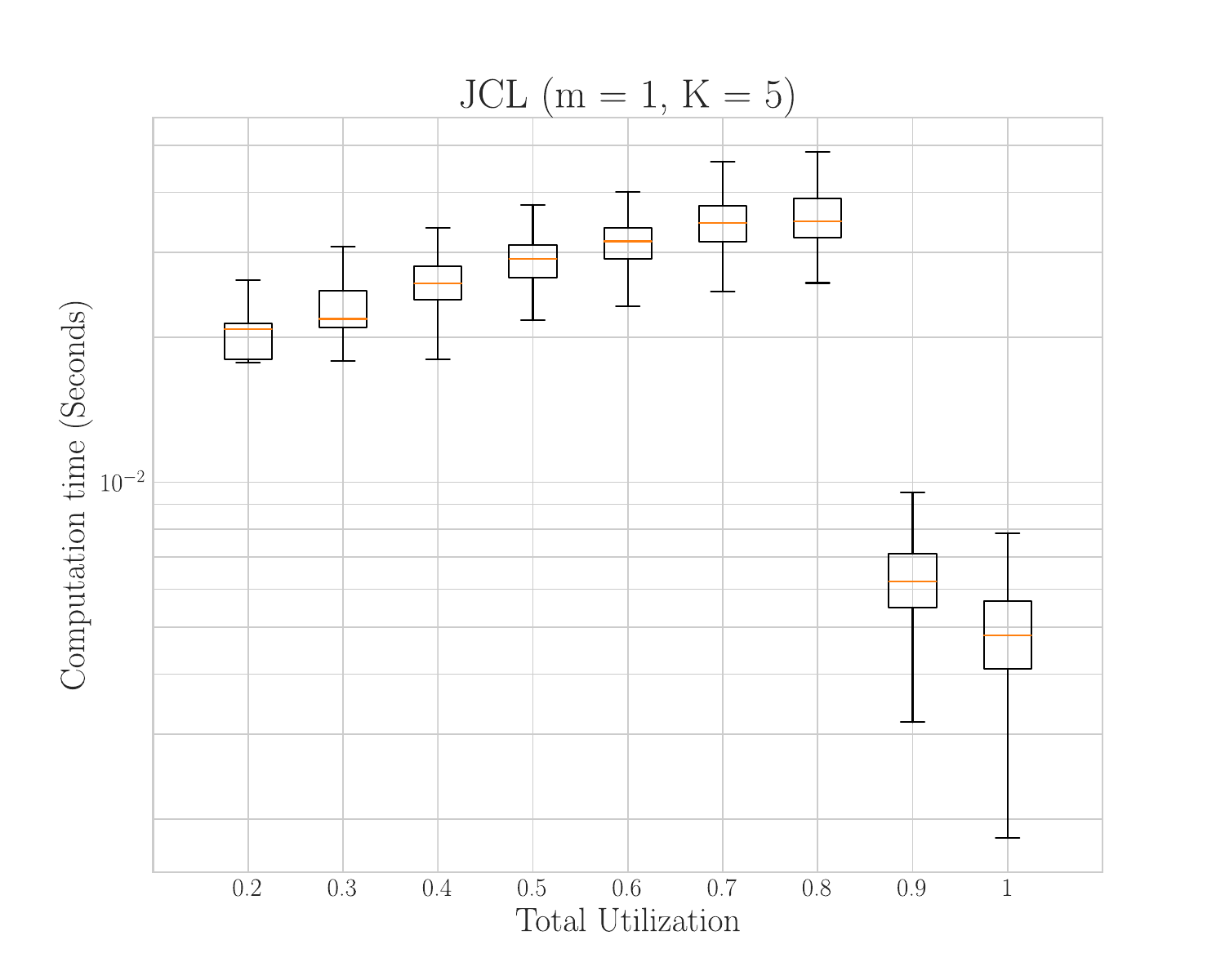} &
        \includegraphics[width=\columnwidth]{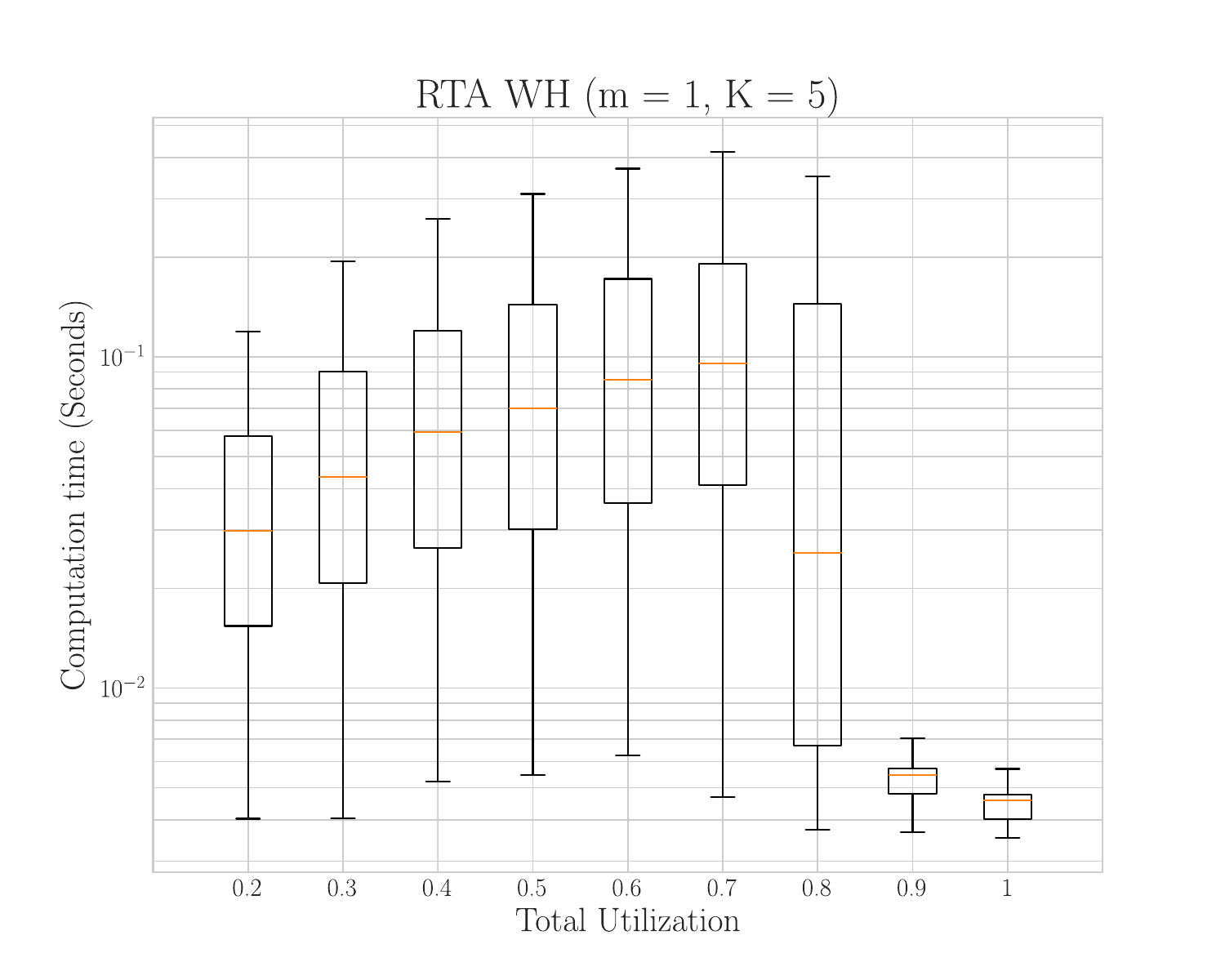} \\
        \includegraphics[width=\columnwidth]{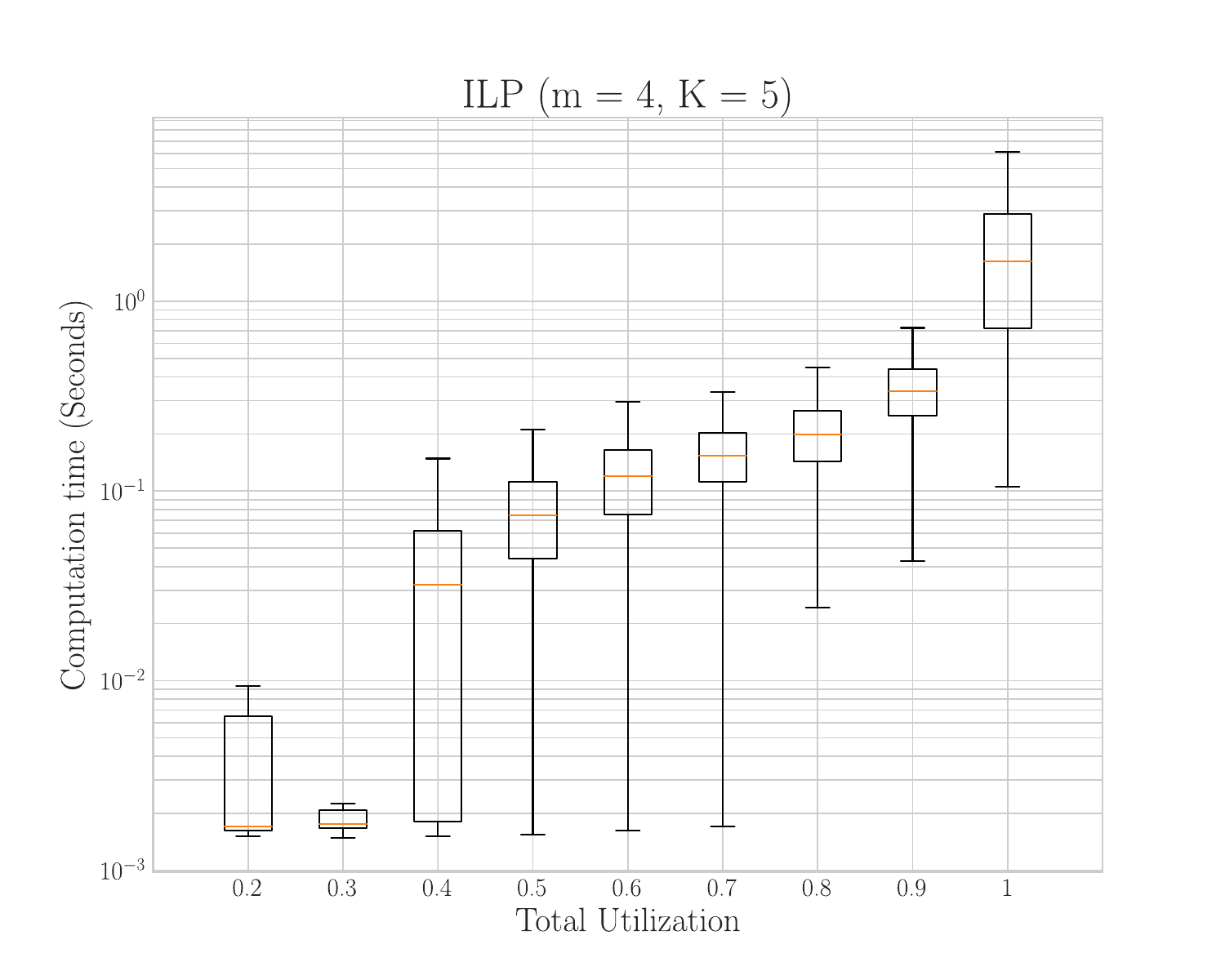} &
        \includegraphics[width=\columnwidth]{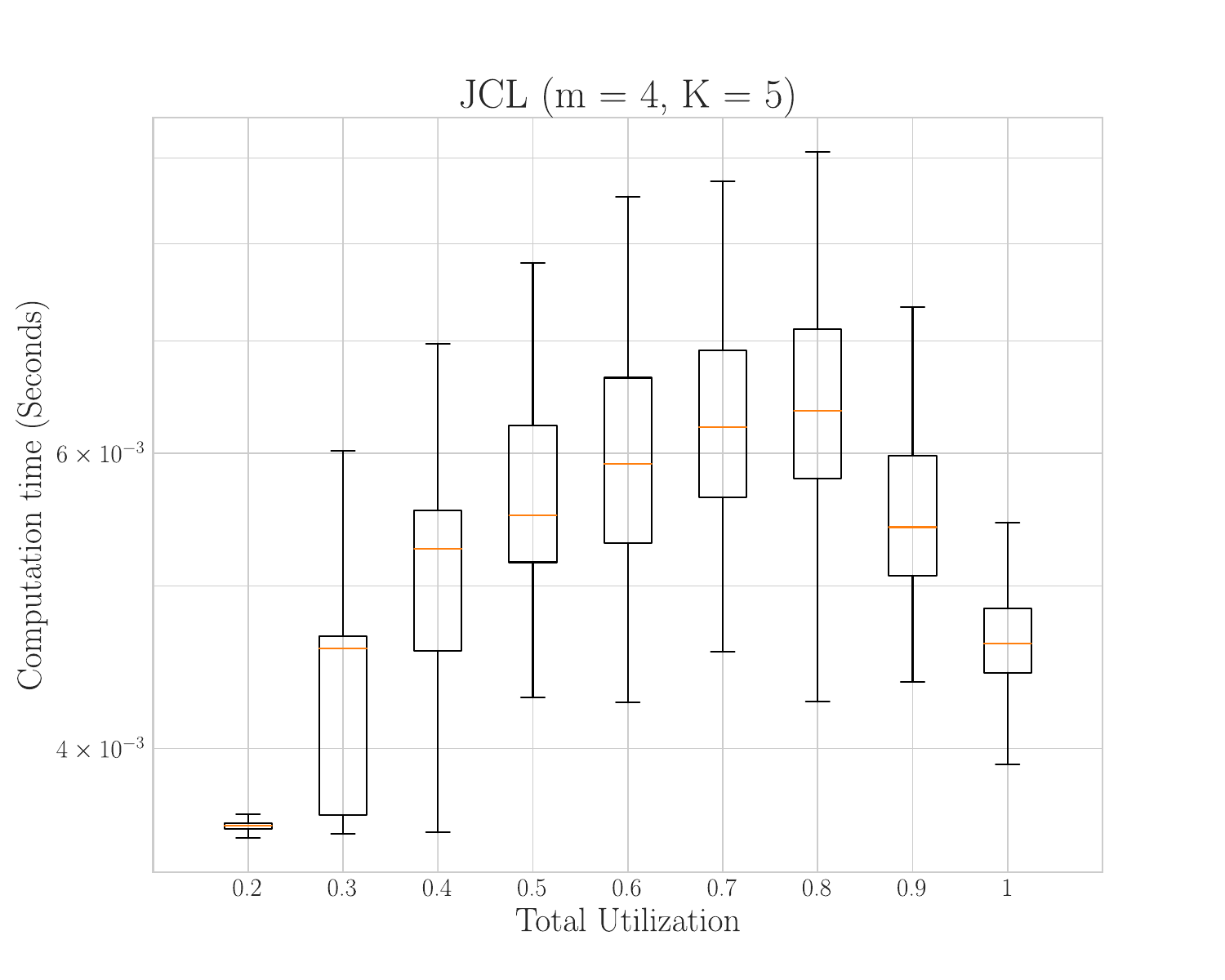} &
        \includegraphics[width=\columnwidth]{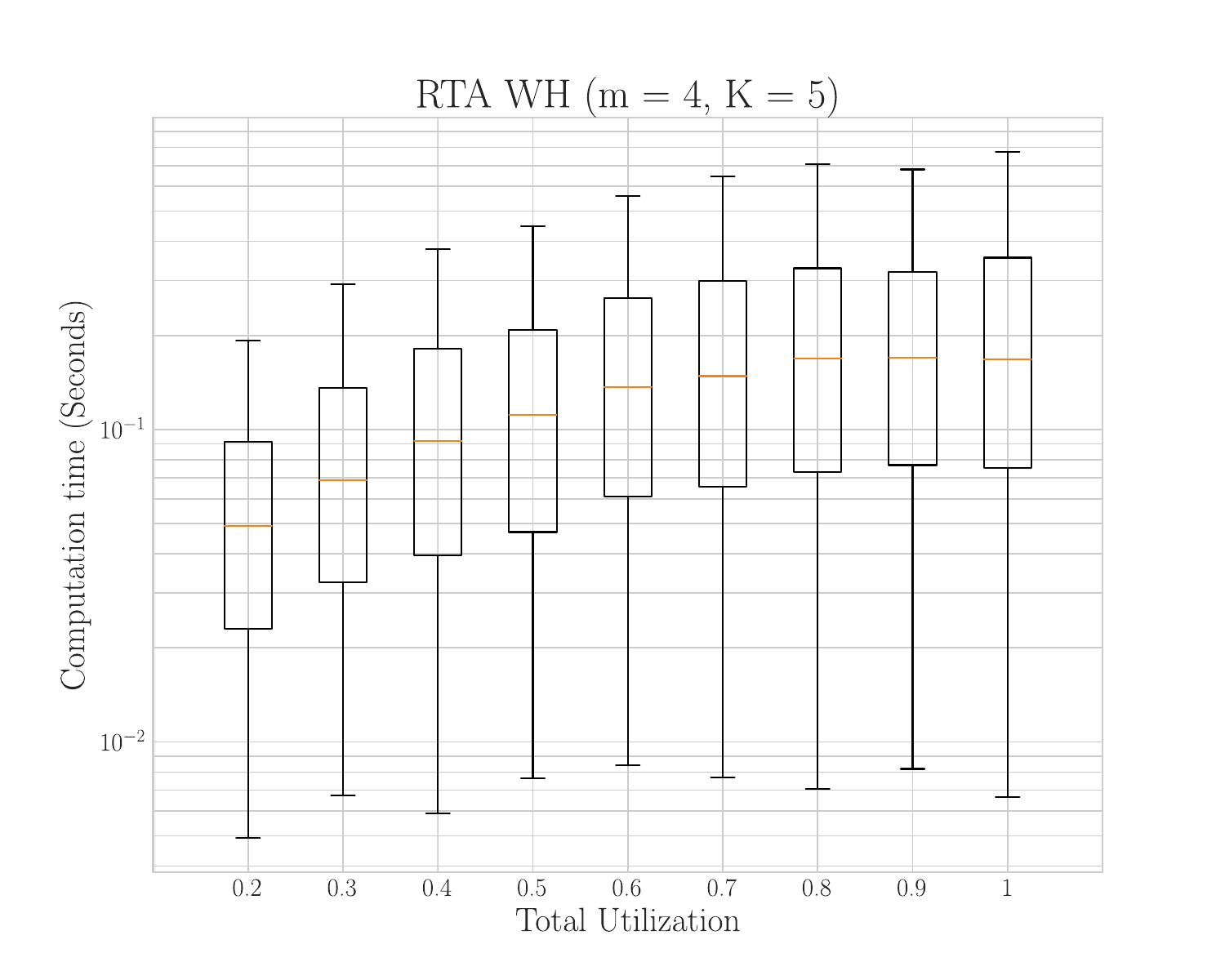} \\
    \end{tabular}
    }
    \caption{Computation time for ILP, JCL and RTA WH (K = 5).}
    \label{fig:computation-time-comparison-ilp}
\end{figure*}

\begin{figure}
    \centering
    \resizebox{1\columnwidth}{!}{
    \begin{tabular}{ c  c }
        \includegraphics[width=\columnwidth]{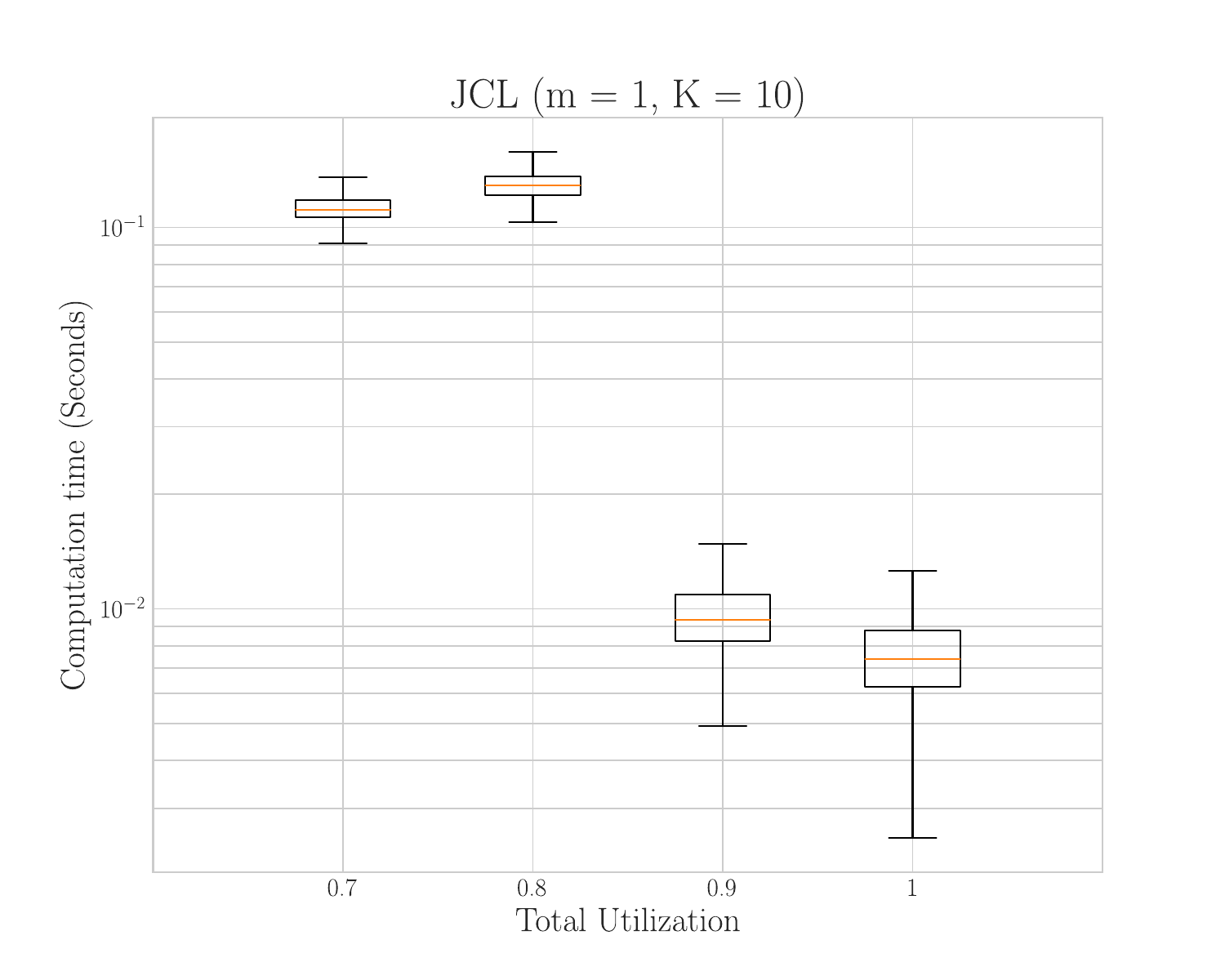} &
        \includegraphics[width=\columnwidth]{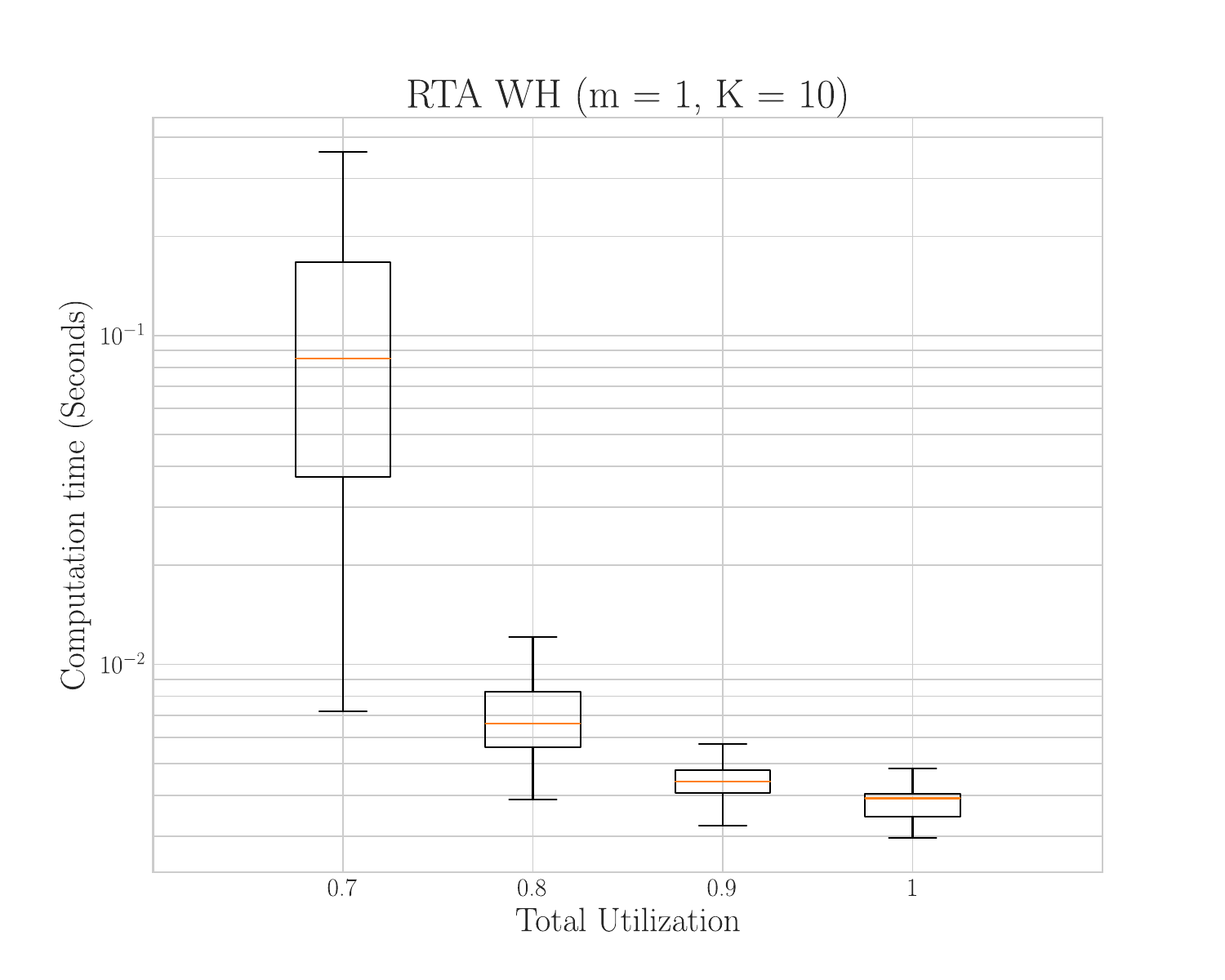} \\
        \includegraphics[width=\columnwidth]{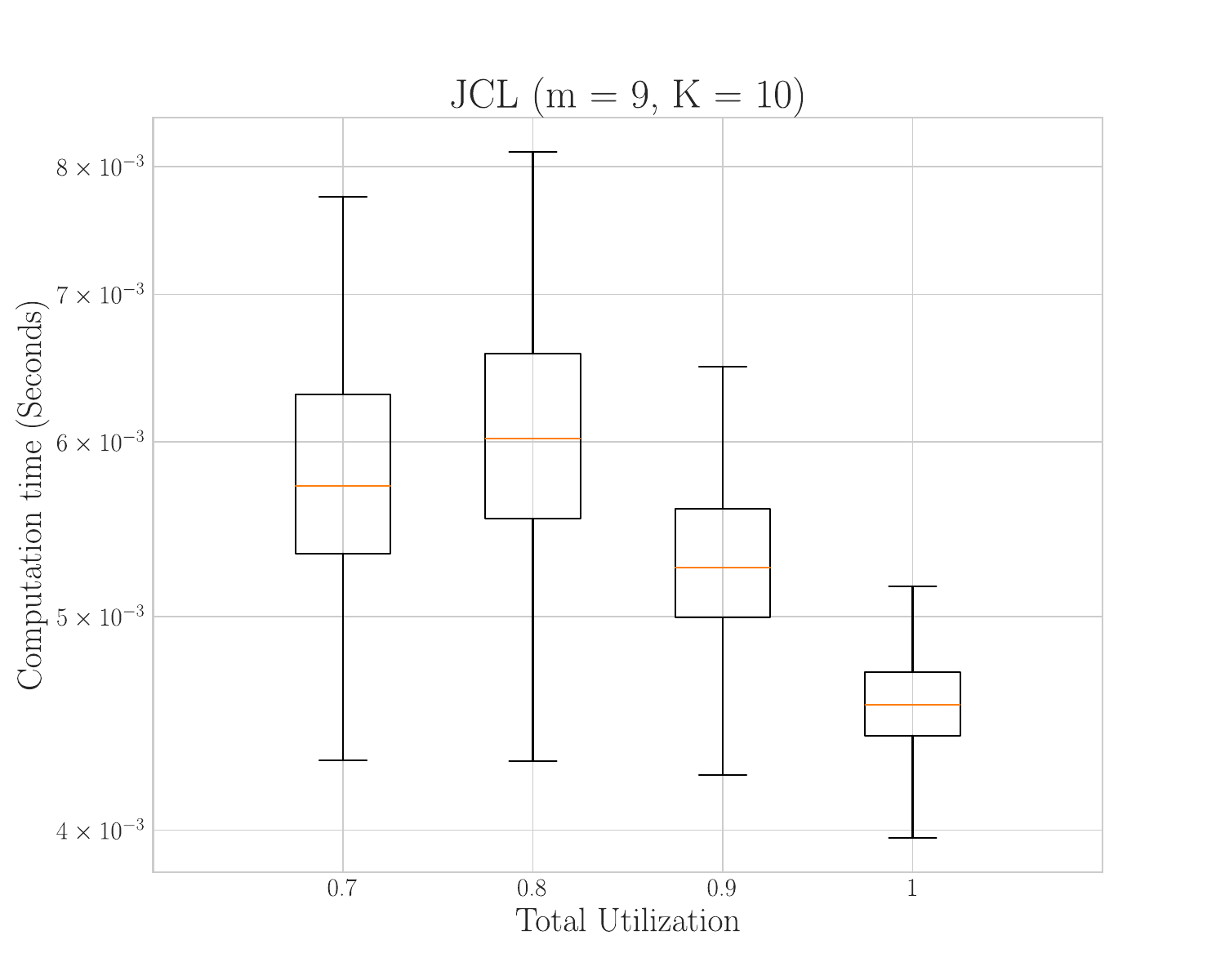} &
        \includegraphics[width=\columnwidth]{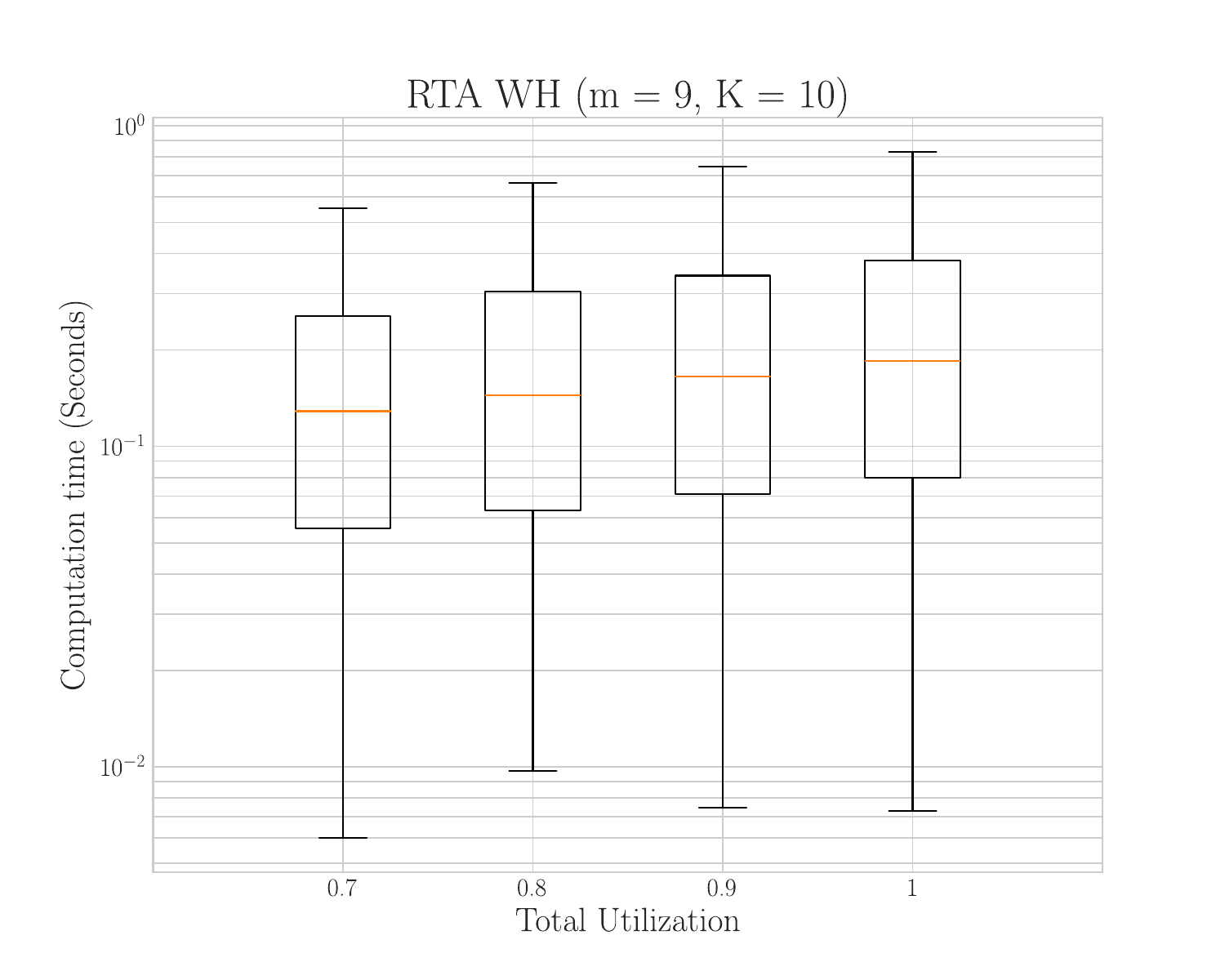} \\
    \end{tabular}
    }
    \caption{Computation time for JCL and RTA WH (K = 10).}
    \label{fig:computation-time-comparison-jcl-10}
\end{figure}

\begin{figure}
    \centering
    \resizebox{1\columnwidth}{!}{
    \begin{tabular}{ c  c }
        \includegraphics[width=\columnwidth]{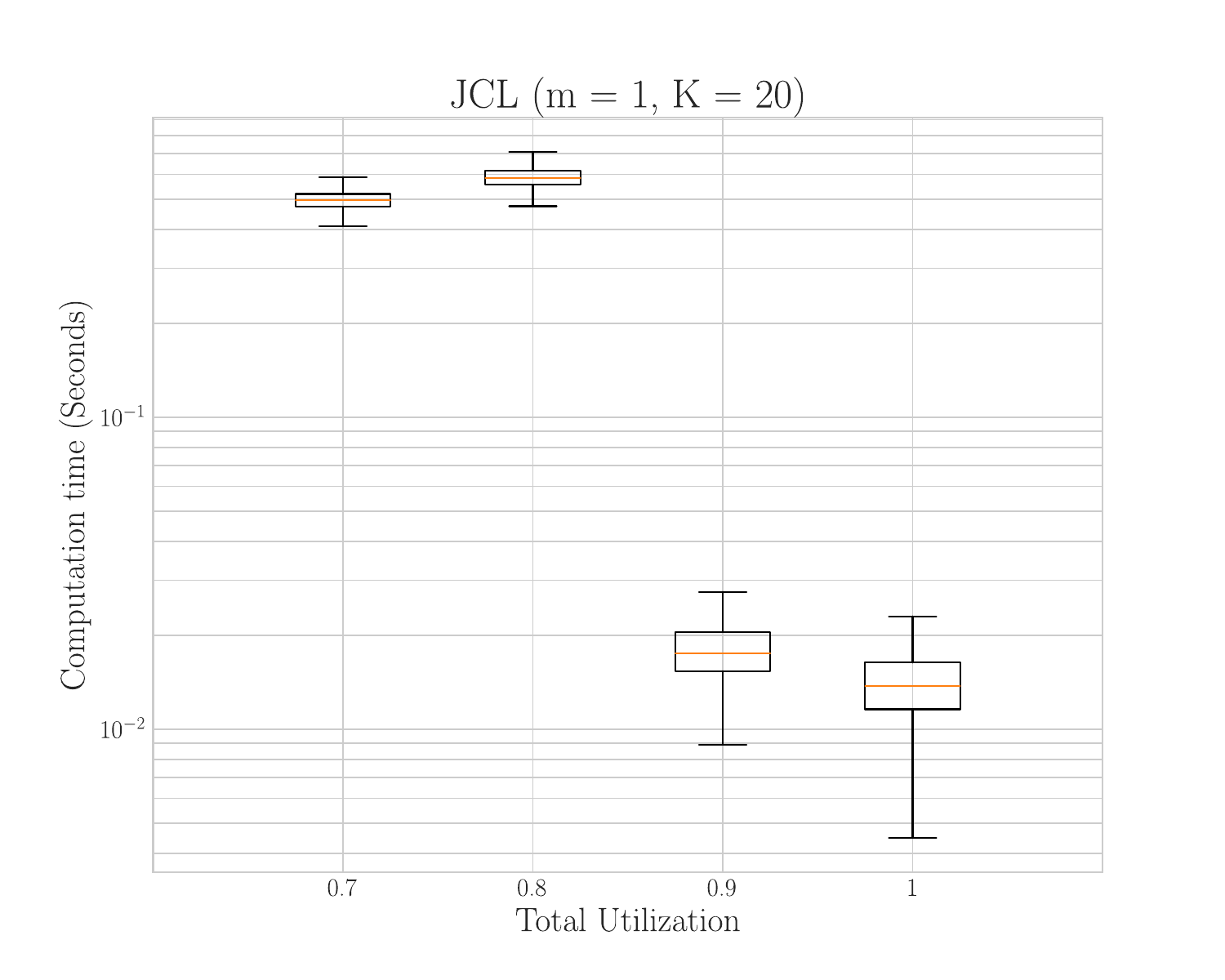} &
        \includegraphics[width=\columnwidth]{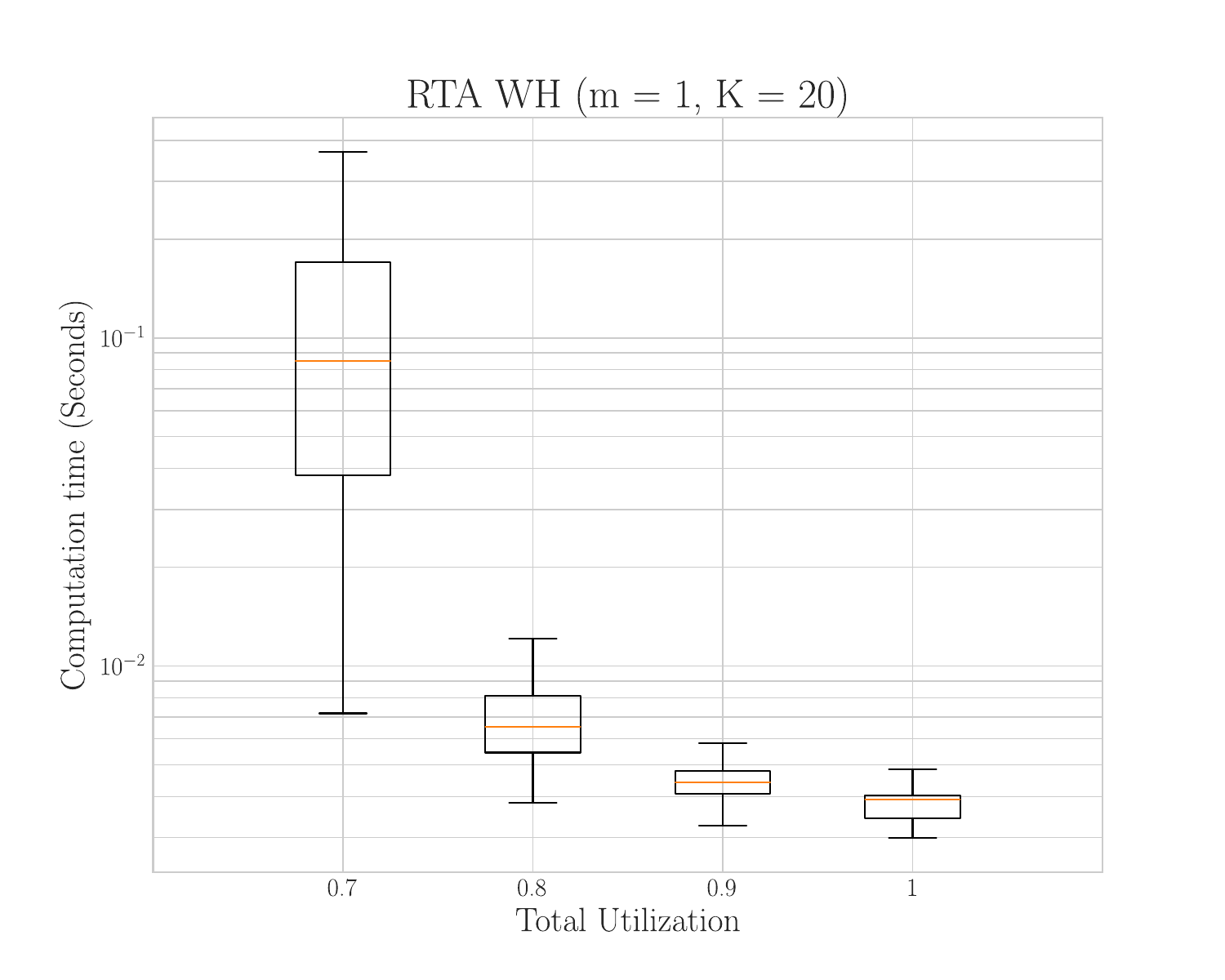} \\
        \includegraphics[width=\columnwidth]{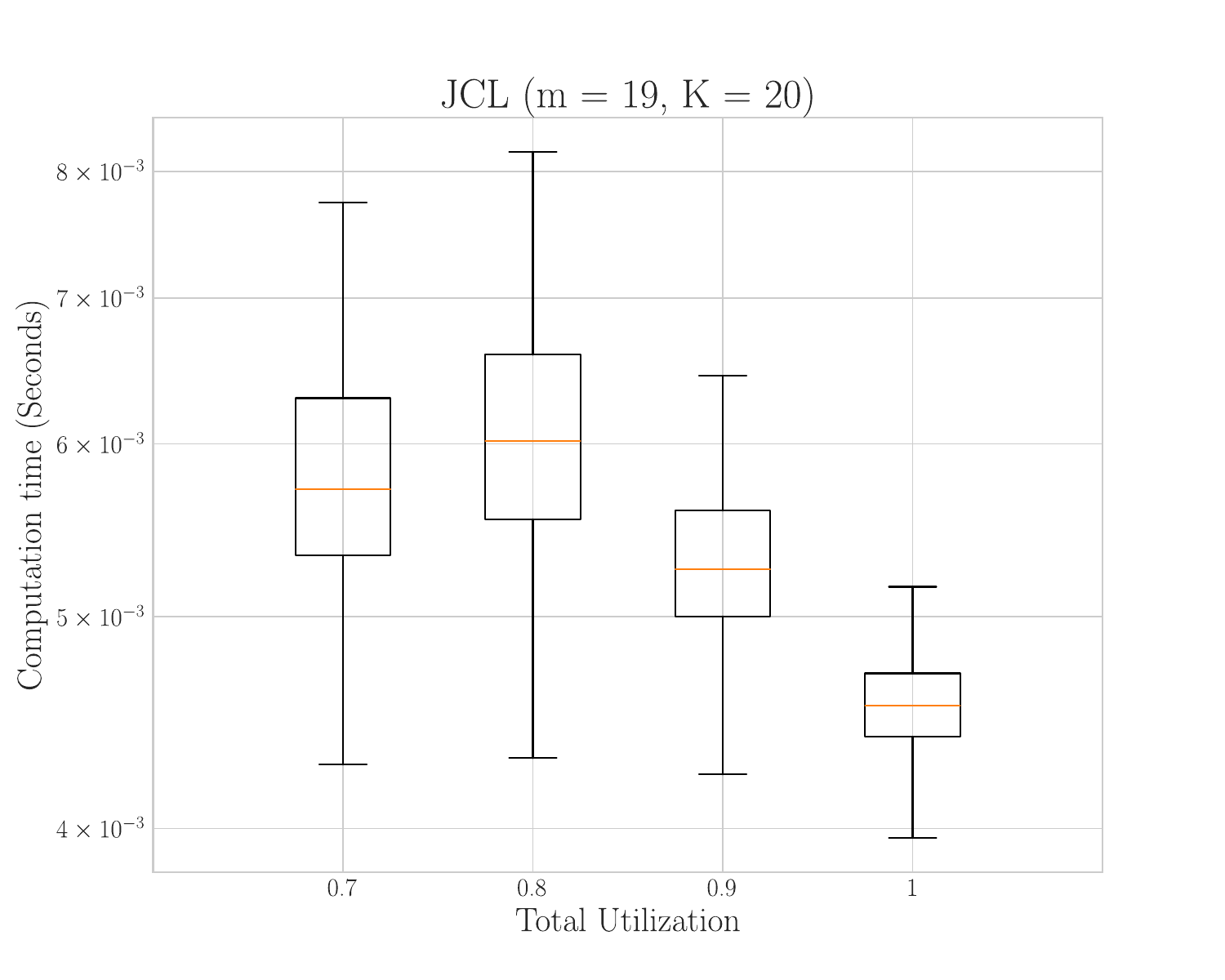} &
        \includegraphics[width=\columnwidth]{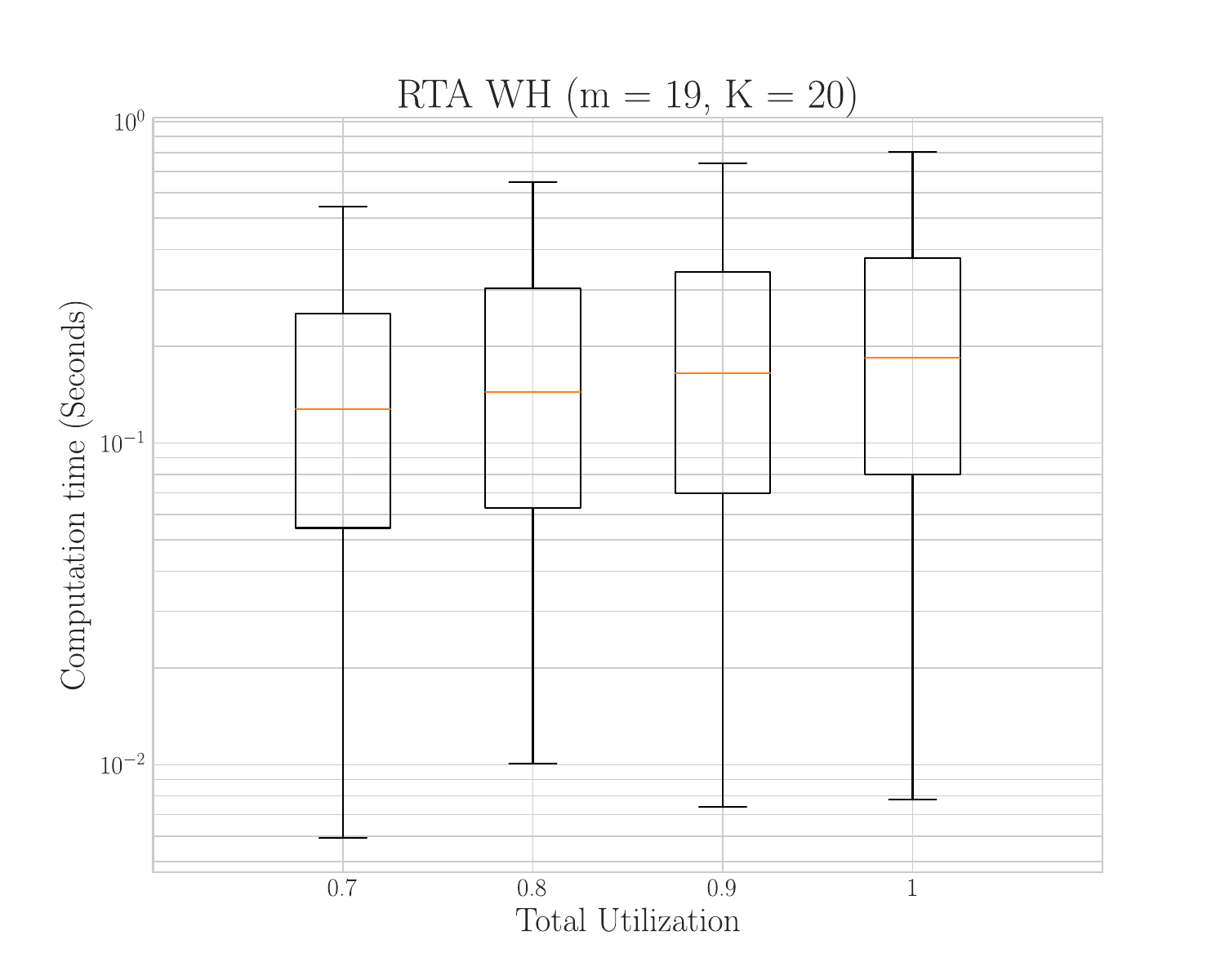} \\
    \end{tabular}
    }
    \caption{Computation time for JCL and RTA WH (K = 20).}
    \label{fig:computation-time-comparison-jcl-20}
\end{figure}

\begin{figure*}[h!]
	\centering
	\resizebox{2.1\columnwidth}{!}{
		\begin{tabular}{ c  c  c }
			(a) $K=5$ &  (b) $K=10$& (c) $K=20$\\ 
			\includegraphics[width=1\columnwidth]{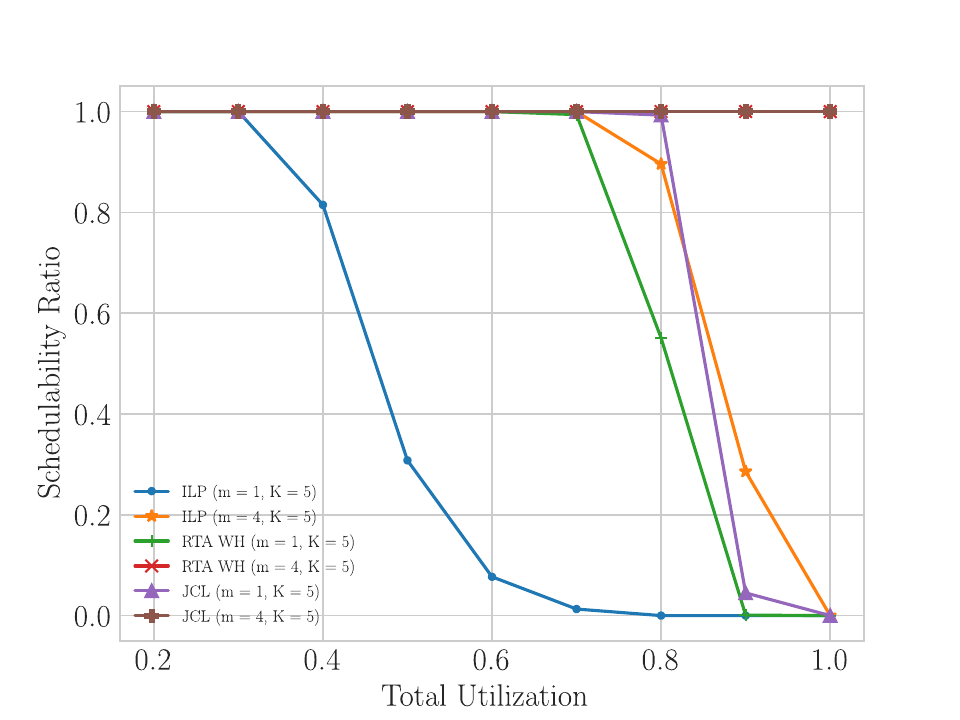} &
			\includegraphics[width=1\columnwidth]{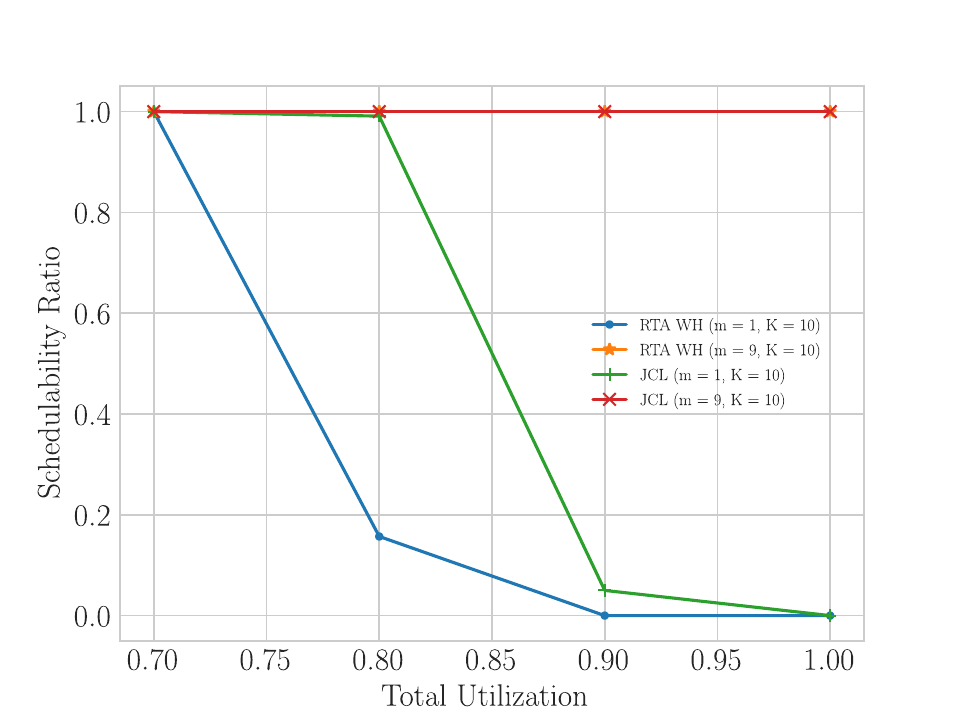} &
			\includegraphics[width=1\columnwidth]{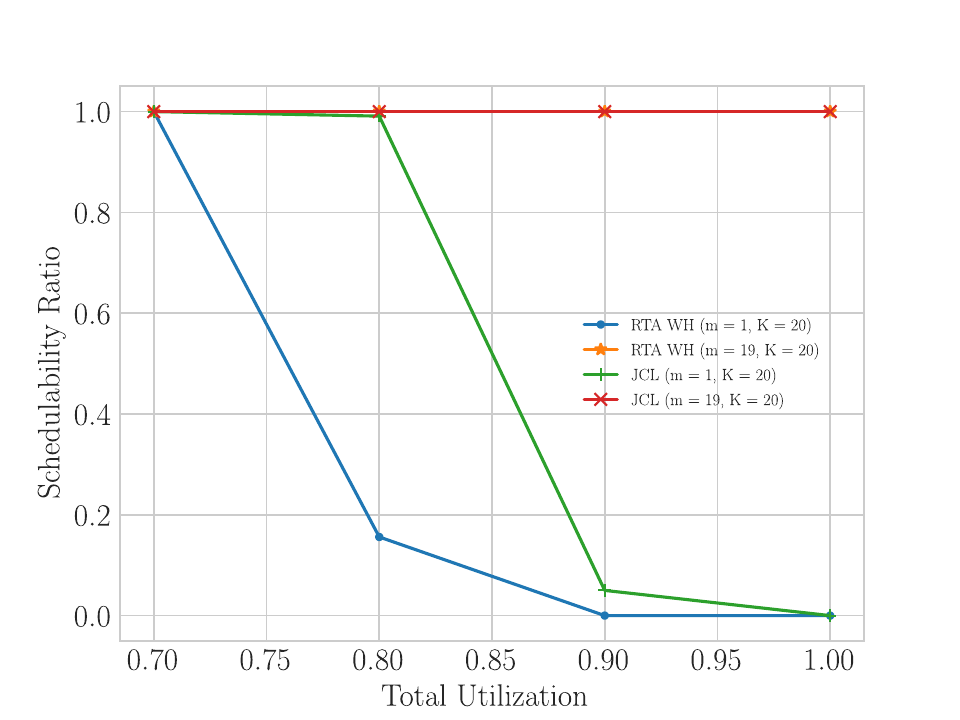} \\
		\end{tabular}
	}
	\caption{Schedulability ratio.}
	\label{fig:sched-ratio-baseline}
\end{figure*}

\subsection{Scheduling Analysis Setup}
Task sets are generated for a list of desired total utilization values using UUnifast.
For every total utilization value, $1000$ sets are generated.
The utilization of the generated tasks is not higher than one.
Furthermore, the generated tasks have implicit deadlines, i.e. their deadlines are equal to their inter-arrival time ($D_i = T_i$).

For every generated task set, the schedulability ratio is calculated and the computation time is measured.
In case of the global scheduling comparison, RTA verifies its schedulability for RM, EDF and the extension of RTA for weakly-hard real-time tasks considering two scenarios.
In the first scenario, all tasks are low-tolerance tasks and in the second, all tasks are high-tolerance tasks.
The values for $m_i$ are chosen randomly between the values that fulfill the desired $m_i/K_i$.
For example, given $K_i = 5$, $m_i$ can be $1$ or $2$ for low-tolerance tasks; and $3$ or $4$ for high-tolerance tasks.

For the comparison against the other weakly-hard scheduling approaches, the value of $m_i$ is $1$ or $K_i - 1$.
Additionally, in case of the experiments which include ILP analysis~\cite{sun2017MILP}, $K_i = 5$ was used due to the scalability issues of ILP analysis~\cite{Natale-ESWEEK17}.
Nevertheless, further experiments compare against JCL using $K_i = 10$ and $K_i = 20$.

Finally, we used a desktop computer with a Intel(R) Core(TM) i7-8700 processor (6 cores, 2 threads per core) and 32GB of RAM running Ubuntu to run the experiments.

\subsection{Scheduling Analysis Experiments}

\figurename~\ref{fig:sched-ratio}.(a), (b), (c) show the schedulability ratio using a set of $20$ tasks for $2$, $4$ and $8$ cores, respectively.
The plots for EDF show a worst schedulability ratio in comparison with RM.
This is due to RTA calculates a higher interference for EDF than for RM (remember that in RM only tasks with higher priority are considered).
Furthermore, it is observed that the schedulability ratio for the weakly-hard tasks is much better than for RM.
Additionally, high-tolerance tasks seem to be schedulable even after the practical limit ($U =$~number of cores).
This is explained by Lemma~\ref{lem:workload-equivalent}, which allows us to consider the workload coming from the high-tolerance tasks as it were coming from a task with longer inter-arrival time, reducing the utilization of the task.
This also explains why this behavior is not seen for low-tolerance tasks.

\figurename~\ref{fig:sched-ratio-tasks} shows the schedulability ratio for $20$, $50$ and $100$ tasks in the set when scheduled on $4$ cores.
Here, it is observed that sets with more tasks have a better schedulability.
The reason is that having more tasks, while keeping the same total utilization, makes the tasks more lightweight which reduces the interference.
On the other hand, the computation time is increased with the number of tasks, as it is seen in \figurename~\ref{fig:computation-time-tasks} (the curve shows the average duration for RM).
Also, note that computation time decreases with the schedulability ratio because the scheduling analysis ends when the first no schedulable task is found.

\figurename~\ref{fig:sched-ratio-k} shows the schedulability ratio of tasks scheduled on $4$ cores when $K_i$ is $5$, $50$ and $500$.
There are no significant changes in the schedulability ratio when the value of $K_i$ is changed.
The same is observed in terms of computation time, see \figurename~\ref{fig:computation-time}.

\subsection{Comparison Against Single-Core Scheduling Approaches}

\figurename~\ref{fig:sched-ratio-baseline}.(a) shows the schedulability ratio comparison between ILP, JCL and RTA WH using task sets with $30$ tasks.
For experiments with $m_i = 1$, JCL is the best of the tree while ILP is the worst of them.
The bad results of the ILP approach are explained by remembering that the ILP approach works at task-level while the other approaches work at job-level.
In case of RTA WH, there is pessimism introduced by using a multi-core scheduling analysis for single-core.

When $K_i = 10$ and $K_i =  20$, \figurename~\ref{fig:sched-ratio-baseline}.(b) and (c) show that the schedulability ratio is also better for JCL than for RTA WH.
Again, this is explained similar as before, i.e.\ using RTA for single-core introduces pessimism.
However, the schedulability ratio curves for $m_i = K_i - 1$ are similar for JCL and RTA WH until a total utilization equals to $1$.

\figurename~\ref{fig:computation-time-comparison-ilp} shows the computation times for ILP, JCL and RTA WHA.
For ILP, the computation time increases with the utilization.
However, for JCL and RTA WH, the computation time decreases together with the schedulability ratio.
Furthermore, the JCL approach is faster than the others when $K_i = 5$.
Nevertheless, while increasing $K_i$ to $10$, the computation times of JCL are in the same range than those for RTA WH (see \figurename~\ref{fig:computation-time-comparison-jcl-10}).
When $K_i = 20$, RTA WH is faster than JCL (see \figurename~\ref{fig:computation-time-comparison-jcl-20}).
In fact, RTA WH does not variate too much while changing the value of $K_i$.
This was also observed in \figurename~\ref{fig:computation-time}.

\subsection{Priority Assignment Overhead}

\begin{figure}[t!]
    \centering
    \includegraphics[width=.67\columnwidth]{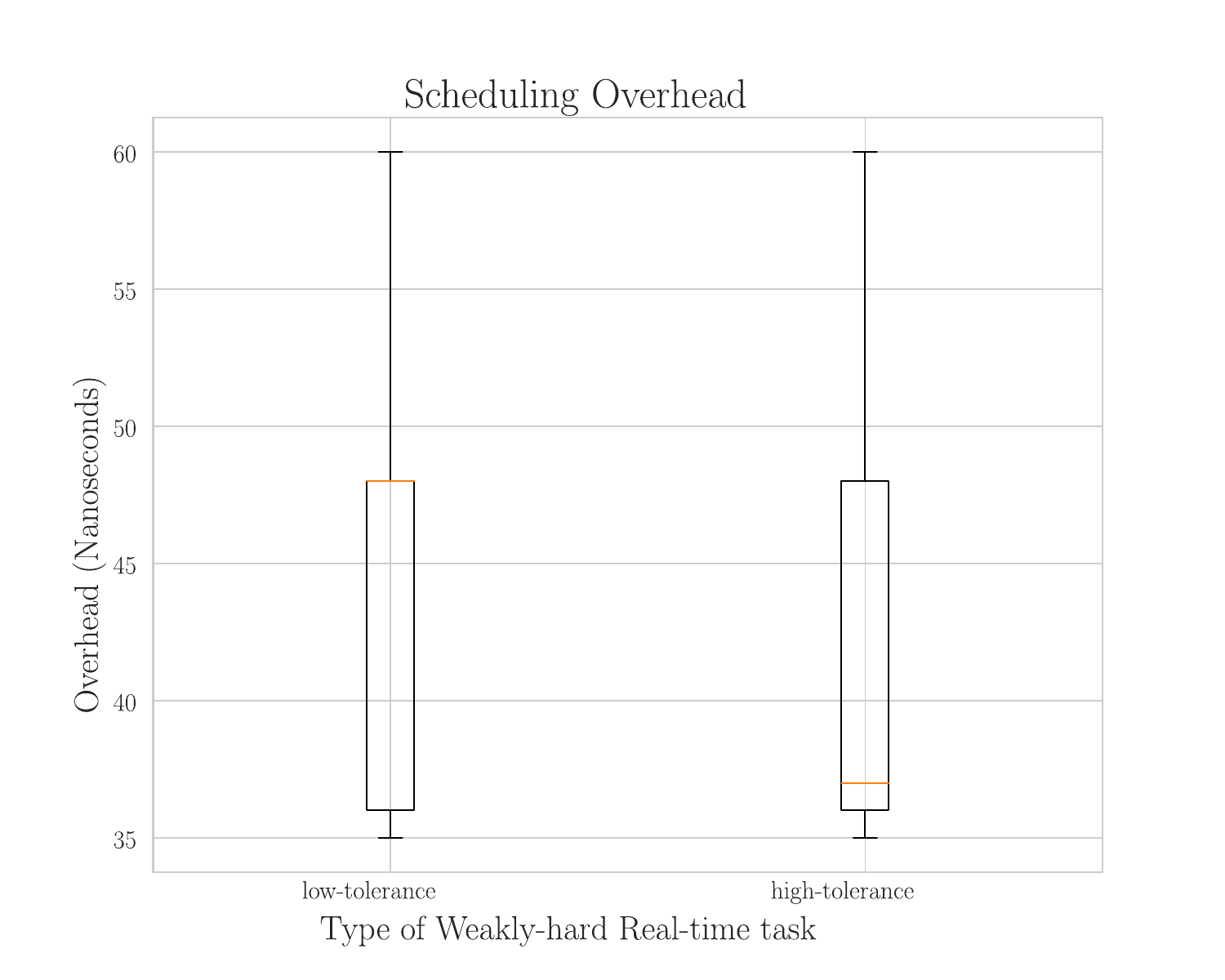}
	\caption{Scheduling overhead distribution (1000 samples).}
    \label{fig:sched-overhead}
\end{figure}

In order to know the scheduling overhead due to the priority assignment algorithm defined in Definition~\ref{def:job-level}, we have conducted an experiment for measuring the execution time.
This experiment was executed on the same processor architecture used in the on-board computers of CALLISTO, i.e. Intel Atom Quad-Core with core frequency of $1.91GHz$.
The real-time operating system used in this experiment is RTEMS $5.1$ and the code was compiled with optimization level 2.

\figurename~\ref{fig:sched-overhead} shows the execution time distribution for executing the priority assignment in case of both low and high-tolerance tasks.
The algorithm execution was repeated $1000$ times for every type of task.
Results show an overhead below $60$ nanoseconds which is negligible overhead comparing to the execution time of the tasks.
Low-tolerance tasks show a higher median than high-tolerance tasks.
This behavior is related with the fact that high-tolerance tasks tolerate more deadline misses before going back to the $\mathcal{JC}^{0}_{i}$, making some part of the algorithm be executed more often.

\section{Conclusion}
\label{sec:conclusion}
In centralized embedded system architectures, multi-core processors are utilized to consolidate multiple tasks leveraging the high computational power and the low power-consumption of multi-core platforms.
In real-time systems, if few deadline misses are tolerable, leveraging the weakly-hard model can reduce the over-provisioning, hence, consolidating more real-time tasks on to the multi-core platform.
This paper proposed a job-class based global scheduling for weakly-hard real-time tasks, appended with a schedulability test to compute the needed guarantees.
The scheduling algorithm exploits the tolerable deadline misses by assigning different priorities to jobs upon urgency of meeting their deadline.
Such job-level priority assignment reduces the interference with low-priority tasks and helps them to satisfy their weakly-hard constraints.
The proposed schedulability analysis utilizes neither ILP nor reachability tree-based analysis, as similar approaches in the literature.
Rather, it focuses on verifying the schedulability of the maximum tolerable consecutive deadline misses.
Our experiments show that the proposed analysis is schedulable where through all the synthetic test cases, the computation time for the proposed analysis does not exceed $1.6$ seconds for $100$ tasks and $4$ cores.
Also, results illustrates that the improvement on the schedulability ratio is up to 40\%  over the global Rate Monotonic (RM) scheduling and up to 60\% over the global EDF scheduling.
Furthermore, our future work will contemplate a reduction in the limitations for low-tolerance tasks, the exploration of the priority assignment in the context of job-class-level scheduling and the interference between tasks due to shared resources.

\bibliographystyle{IEEEtran}
\bibliography{bibliography}

\begin{thebibliography}{10}
\providecommand{\url}[1]{#1}
\csname url@samestyle\endcsname
\providecommand{\newblock}{\relax}
\providecommand{\bibinfo}[2]{#2}
\providecommand{\BIBentrySTDinterwordspacing}{\spaceskip=0pt\relax}
\providecommand{\BIBentryALTinterwordstretchfactor}{4}
\providecommand{\BIBentryALTinterwordspacing}{\spaceskip=\fontdimen2\font plus
\BIBentryALTinterwordstretchfactor\fontdimen3\font minus
  \fontdimen4\font\relax}
\providecommand{\BIBforeignlanguage}[2]{{%
\expandafter\ifx\csname l@#1\endcsname\relax
\typeout{** WARNING: IEEEtran.bst: No hyphenation pattern has been}%
\typeout{** loaded for the language `#1'. Using the pattern for}%
\typeout{** the default language instead.}%
\else
\language=\csname l@#1\endcsname
\fi
#2}}
\providecommand{\BIBdecl}{\relax}
\BIBdecl

\bibitem{hammadeh2017budgeting}
Z.~A. Hammadeh, S.~Quinton, M.~Panunzio, R.~Henia, L.~Rioux, and R.~Ernst,
  ``Budgeting under-specified tasks for weakly-hard real-time systems,'' in
  \emph{29th Euromicro Conference on Real-Time Systems (ECRTS 2017)}.\hskip 1em
  plus 0.5em minus 0.4em\relax Schloss-Dagstuhl-Leibniz Zentrum f{\"u}r
  Informatik, 2017.

\bibitem{Pazzaglia:ECRTS2018}
\BIBentryALTinterwordspacing
P.~Pazzaglia, L.~Pannocchi, A.~Biondi, and M.~D. Natale, ``{Beyond the Weakly
  Hard Model: Measuring the Performance Cost of Deadline Misses},'' in
  \emph{30th Euromicro Conference on Real-Time Systems (ECRTS 2018)}, vol. 106,
  2018, pp. 10:1--10:22. [Online]. Available:
  \url{http://drops.dagstuhl.de/opus/volltexte/2018/8993}
\BIBentrySTDinterwordspacing

\bibitem{maggio2020control}
M.~Maggio, A.~Hamann, E.~Mayer-John, and D.~Ziegenbein, ``Control-system
  stability under consecutive deadline misses constraints,'' in \emph{32nd
  Euromicro Conference on Real-Time Systems (ECRTS 2020)}.\hskip 1em plus 0.5em
  minus 0.4em\relax Schloss Dagstuhl-Leibniz-Zentrum f{\"u}r Informatik, 2020.

\bibitem{vreman2021control}
\BIBentryALTinterwordspacing
N.~Vreman, A.~Cervin, and M.~Maggio, ``{Stability and Performance Analysis of
  Control Systems Subject to Bursts of Deadline Misses},'' in \emph{33rd
  Euromicro Conference on Real-Time Systems (ECRTS 2021)}, ser. Leibniz
  International Proceedings in Informatics (LIPIcs), B.~B. Brandenburg, Ed.,
  vol. 196.\hskip 1em plus 0.5em minus 0.4em\relax Dagstuhl, Germany: Schloss
  Dagstuhl -- Leibniz-Zentrum f{\"u}r Informatik, 2021, pp. 15:1--15:23.
  [Online]. Available:
  \url{https://drops.dagstuhl.de/opus/volltexte/2021/13946}
\BIBentrySTDinterwordspacing

\bibitem{bernat2001weakly}
G.~Bernat, A.~Burns, and A.~Liamosi, ``Weakly hard real-time systems,''
  \emph{IEEE transactions on Computers}, vol.~50, no.~4, pp. 308--321, 2001.

\bibitem{xu2015TWCA}
W.~Xu, Z.~A. H.~Hammadeh, A.~Kr\"{o}ller, R.~Ernst, and S.~Quinton, ``Improved
  deadline miss models for real-time systems using typical worst-case
  analysis,'' in \emph{2015 27th Euromicro Conference on Real-Time Systems},
  2015, pp. 247--256.

\bibitem{sun2017MILP}
\BIBentryALTinterwordspacing
Y.~Sun and M.~D. Natale, ``Weakly hard schedulability analysis for fixed
  priority scheduling of periodic real-time tasks,'' \emph{ACM Trans. Embed.
  Comput. Syst.}, vol.~16, no.~5s, sep 2017. [Online]. Available:
  \url{https://doi.org/10.1145/3126497}
\BIBentrySTDinterwordspacing

\bibitem{choi2021toward}
H.~Choi, H.~Kim, and Q.~Zhu, ``Toward practical weakly hard real-time systems:
  A job-class-level scheduling approach,'' \emph{IEEE Internet of Things
  Journal}, vol.~8, no.~8, pp. 6692--6708, 2021.

\bibitem{krummen2021towards}
S.~Krummen, J.~Desmariaux, Y.~Saito, M.~Boldt, L.~E. Briese, N.~Cesco,
  C.~Chavagnac, E.~CLIQUET-MORENO, E.~Dumont, T.~Ecker \emph{et~al.}, ``Towards
  a reusable first stage demonstrator: Callisto-technical progresses \&
  challenges,'' in \emph{Proceedings of the International Astronautical
  Congress, IAC}, 2021.

\bibitem{illig2022callisto}
M.~Illig, S.~Ishimoto, and E.~Dumont, ``{CALLISTO}, a demonstrator for reusable
  launchers,'' in \emph{9th European Conference for Aeronautics and Space
  Sciences (EUCASS)}, Juni 2022.

\bibitem{schwarz2019preliminary}
R.~Schwarz, M.~Solari, B.~Razgus, M.~Dumke, M.~Markgraf, M.~Bestard~K{\"o}rner,
  D.~Pfau, M.~Reigenborn, B.~Braun, and J.~Sommer, ``Preliminary design of the
  hybrid navigation system (hns) for the callisto rlv demonstrator,'' in
  \emph{8th European Conference for Aeronautics and Space Sciences (EUCASS)},
  2019.

\bibitem{choi2019jobclass}
H.~Choi, H.~Kim, and Q.~Zhu, ``Job-class-level fixed priority scheduling of
  weakly-hard real-time systems,'' in \emph{2019 IEEE Real-Time and Embedded
  Technology and Applications Symposium (RTAS)}, 2019, pp. 241--253.

\bibitem{bertogna2007response}
M.~Bertogna and M.~Cirinei, ``Response-time analysis for globally scheduled
  symmetric multiprocessor platforms,'' in \emph{28th IEEE International
  Real-Time Systems Symposium (RTSS 2007)}, 2007, pp. 149--160.

\bibitem{Baruah2005ProportionatePA}
S.~K. Baruah, N.~K. Cohen, C.~G. Plaxton, and D.~A. Varvel, ``Proportionate
  progress: A notion of fairness in resource allocation,'' \emph{Algorithmica},
  vol.~15, pp. 600--625, 2005.

\bibitem{Anderson-Pfair-2000}
J.~H. {Anderson} and A.~{Srinivasan}, ``{Pfair} scheduling: beyond periodic
  task systems,'' in \emph{Proceedings Seventh International Conference on
  Real-Time Computing Systems and Applications}, 2000, pp. 297--306.

\bibitem{app131810131}
\BIBentryALTinterwordspacing
S.~Han, W.~Paik, M.-C. Ko, and M.~Park, ``A comparative study on the
  schedulability of the edzl scheduling algorithm on multiprocessors,''
  \emph{Applied Sciences}, vol.~13, no.~18, 2023. [Online]. Available:
  \url{https://www.mdpi.com/2076-3417/13/18/10131}
\BIBentrySTDinterwordspacing

\bibitem{Nelissen-UEDF2012}
G.~{Nelissen}, V.~{Berten}, V.~{Nélis}, J.~{Goossens}, and D.~{Milojevic},
  ``U-edf: An unfair but optimal multiprocessor scheduling algorithm for
  sporadic tasks,'' in \emph{2012 24th Euromicro Conference on Real-Time
  Systems}, 2012, pp. 13--23.

\bibitem{hamdaoui1995dynamic}
M.~Hamdaoui and P.~Ramanathan, ``A dynamic priority assignment technique for
  streams with (m, k)-firm deadlines,'' \emph{IEEE transactions on Computers},
  vol.~44, no.~12, pp. 1443--1451, 1995.

\bibitem{Natale-ESWEEK17}
M.~D. Natale, ``Beyond the m-k model: restoring performance considerations in
  the time abstraction,'' 2017, eSWEEK - Tutorial Slides.

\bibitem{hammadeh2019TWCA}
\BIBentryALTinterwordspacing
Z.~A. H.~Hammadeh, S.~Quinton, and R.~Ernst, ``Weakly-hard real-time guarantees
  for earliest deadline first scheduling of independent tasks,'' \emph{ACM
  Trans. Embed. Comput. Syst.}, vol.~18, no.~6, dec 2019. [Online]. Available:
  \url{https://doi.org/10.1145/3356865}
\BIBentrySTDinterwordspacing

\bibitem{liang2020FualtWH}
H.~Liang, Z.~Wang, R.~Jiao, and Q.~Zhu, ``Leveraging weakly-hard constraints
  for improving system fault tolerance with functional and timing guarantees,''
  in \emph{2020 IEEE/ACM International Conference On Computer Aided Design
  (ICCAD)}, 2020, pp. 1--9.

\bibitem{Vreman2022control}
N.~Vreman, P.~Pazzaglia, V.~Magron, J.~Wang, and M.~Maggio, ``Stability of
  linear systems under extended weakly-hard constraints,'' \emph{IEEE Control
  Systems Letters}, vol.~6, pp. 2900--2905, 2022.

\bibitem{Vreman2022jlWH}
N.~Vreman, R.~Pates, and M.~Maggio, ``Weaklyhard.jl: Scalable analysis of
  weakly-hard constraints,'' in \emph{2022 IEEE 28th Real-Time and Embedded
  Technology and Applications Symposium (RTAS)}, 2022, pp. 228--240.

\bibitem{Wu2008}
{Tong Wu} and {Shiyao Jin}, ``Weakly hard real-time scheduling algorithm for
  multimedia embedded system on multiprocessor platform,'' in \emph{2008 First
  IEEE International Conference on Ubi-Media Computing}, July 2008, pp.
  320--325.

\bibitem{Kong2011}
Y.~{Kong} and H.~{Cho}, ``Guaranteed scheduling for (m,k)-firm
  deadline-constrained real-time tasks on multiprocessors,'' in \emph{2011 12th
  International Conference on Parallel and Distributed Computing, Applications
  and Technologies}, 2011, pp. 18--23.

\bibitem{bini2005measuring}
E.~Bini and G.~C. Buttazzo, ``Measuring the performance of schedulability
  tests,'' \emph{Real-Time Systems}, vol.~30, no.~1, pp. 129--154, 2005.

\end{thebibliography}

\appendix
\section{Lemma~\ref{lem:harder-constraint} Proof}
\label{sec:theorem-proof}

Theorem~5 of~\cite{bernat2001weakly} states that a weakly-hard constraint $\big( \begin{smallmatrix} a \\ b \end{smallmatrix} \big)$ is harder than (denoted as $\preccurlyeq$) other constraint $\big( \begin{smallmatrix} p \\ q \end{smallmatrix}\big)$ if:
\begin{equation}
    \left( \begin{array}{c} a \\ b \end{array}\right) \preccurlyeq \left( \begin{array}{c} p \\ q \end{array} \right) \Leftrightarrow
    p \leq max\big\{ \left\lfloor \frac{q}{b} \right\rfloor a, q + \left\lceil \frac{q}{b} \right\rceil (a - b) \big\}
\end{equation}

According to Theorem~3 in~\cite{bernat2001weakly} we have $\hw \equiv \wh$. Hence, replacing the variables for our case, we get:
\begin{multline}
    \left( \begin{array}{c} h_i \\ w_i + h_i \end{array}\right) \preccurlyeq \left( \begin{array}{c} K_i - m_i \\ K_i \end{array} \right) \Leftrightarrow \\
    K_i - m_i \leq max\left\{ \left\lfloor \frac{K_i}{ w_i + h_i} \right\rfloor h_i, K_i - \left\lceil \frac{K_i}{ w_i + h_i} \right\rceil  w_i \right\}
    \label{equ:theorem-harder-constraint}
\end{multline}

For proving the theorem, we need to show that $K_i - m_i$ is always less or equal than the following terms:
\begin{equation}
    \left\lfloor \frac{K_i}{ w_i + h_i} \right\rfloor h_i
    \label{equ:theorem-proof-term-1}
\end{equation}
\begin{equation}
    K_i - \left\lceil \frac{K_i}{ w_i + h_i} \right\rceil  w_i
    \label{equ:theorem-proof-term-2}
\end{equation}

We prove this theorem separately for low-tolerance and high-tolerance tasks.
Summarizing, the steps are the following:
\begin{enumerate}
    \item We show Equation~\eqref{equ:theorem-proof-term-1} is greater or equal than Equation~\eqref{equ:theorem-proof-term-2} for high-tolerance tasks.
    \item We verify that Equation~\eqref{equ:theorem-proof-term-1} is greater or equal than $K_i - m_i$ for high-tolerance tasks.
    \item We show Equation~\eqref{equ:theorem-proof-term-1} is also greater or equal than Equation~\eqref{equ:theorem-proof-term-2} for low-tolerance tasks.
    \item Then, for simplicity, we start by verifying that Equation~\eqref{equ:theorem-proof-term-2} is greater or equal than $K_i - m_i$ for low-tolerance tasks.
    Showing this last is valid, we conclude that for low-tolerance tasks Equation~\eqref{equ:theorem-proof-term-1} is also greater or equal than $K_i - m_i$.
\end{enumerate}

{\bf Step 1.}
From Lemma~\ref{lem:w-h-particular-values}, we know that $h_i = 1$ for high-tolerance tasks.
Replacing $h_i = 1$ and assuming that Equation~\eqref{equ:theorem-proof-term-1} $\geq$ Equation~\eqref{equ:theorem-proof-term-2}, we get:
\begin{equation}
    \left\lfloor \frac{K_i}{ w_i + 1} \right\rfloor \geq K_i - \left\lceil \frac{K_i}{ w_i + 1} \right\rceil  w_i
    \label{equ:step-1}
\end{equation}

We consider the following relation between a real number $x$ and a integer number $n$ for removing the floor.
\begin{equation}
    \lfloor x \rfloor \geq n \Leftrightarrow x \geq n
    \label{equ:real-integer-floor-relation}
\end{equation}

By removing the floor and then clearing $K_i$ in Inequation~\eqref{equ:step-1}, we get the assumption Equation~\eqref{equ:theorem-proof-term-1} $\geq$ Equation~\eqref{equ:theorem-proof-term-2} holds.
\begin{multline*}
    \frac{K_i}{ w_i + 1} \geq K_i - \left\lceil \frac{K_i}{ w_i + 1} \right\rceil  w_i \\
    K_i \geq \left( K_i - \left\lceil \frac{K_i}{ w_i + 1} \right\rceil  w_i \right) ( w_i + 1) \\
    \left\lceil \frac{K_i}{ w_i + 1} \right\rceil ( w_i + 1) \geq K_i
\end{multline*}

{\bf Step 2.}
We need to prove:
\begin{equation}
    \left\lfloor \frac{K_i}{ w_i + 1} \right\rfloor \geq K_i - m_i
    \label{equ:step-2}
\end{equation}

Using Relation~\ref{equ:real-integer-floor-relation} for removing the floor:
\begin{equation*}
    \frac{K_i}{ w_i + 1} \geq K_i - m_i
\end{equation*}
Next, clearing $ w_i$ and replacing it for its value for high-tolerance tasks, i.e. $ w_i = \left\lfloor \frac{m_i}{K_i - m_i} \right\rfloor$:
\begin{equation*}
     w_i \leq \frac{m_i}{K_i - m_i} \Longrightarrow
    \left\lfloor \frac{m_i}{K_i - m_i} \right\rfloor \leq \frac{m_i}{K_i - m_i}
\end{equation*}
Inequation~\eqref{equ:step-2} holds and we prove Equation~\eqref{equ:theorem-harder-constraint} for high-tolerance tasks.

{\bf Step 3.}
From Lemma~\ref{lem:w-h-particular-values}, we know that $ w_i = 1$ for low-tolerance tasks.
Replacing $ w_i = 1$ and assuming that Equation\eqref{equ:theorem-proof-term-1} is greater or equal than Equation\eqref{equ:theorem-proof-term-2}, we get:
\begin{equation}
    \left\lfloor \frac{K_i}{1 + h_i} \right\rfloor h_i \geq K_i - \left\lceil \frac{K_i}{1 + h_i} \right\rceil
    \label{equ:step-3}
\end{equation}

For converting from ceiling to floor, we use the following identity:
\begin{equation}
    \left\lceil \frac{a}{b} \right\rceil = \left\lfloor \frac{a + b - 1}{b} \right\rfloor
    \label{equ:ceiling-floor-conversion}
\end{equation}

Hence:
\begin{equation*}
    \left\lfloor \frac{K_i}{1 + h_i} \right\rfloor h_i \geq K_i - \left\lfloor \frac{K_i + h_i}{1 + h_i} \right\rfloor \Longrightarrow
    \left\lfloor \frac{K_i + h_i}{1 + h_i} \right\rfloor \geq K_i - \left\lfloor \frac{K_i}{1 + h_i} \right\rfloor h_i
\end{equation*}

Using Relation~\ref{equ:real-integer-floor-relation} for removing the floor in $\left\lfloor \frac{K_i + h_i}{1 + h_i} \right\rfloor$ and then clearing $K_i$:
\begin{multline*}
    \frac{K_i + h_i}{1 + h_i} \geq K_i - \left\lfloor \frac{K_i}{1 + h_i} \right\rfloor h_i \\
    K_i + h_i \geq \left( K_i - \left\lfloor \frac{K_i}{1 + h_i} \right\rfloor h_i \right) (1 + h_i) \\
    (1 + h_i) \left\lfloor \frac{K_i}{1 + h_i} \right\rfloor + 1 \geq K_i
\end{multline*}
We see that assumption Equation~\eqref{equ:theorem-proof-term-1} $\geq$ Equation~\eqref{equ:theorem-proof-term-2} holds.

{\bf Step 4.}
As we mentioned before, here we start by showing that Equation~\eqref{equ:theorem-proof-term-2} is greater or equal than $K_i - m_i$:
\begin{equation}
    K_i - \left\lceil \frac{K_i}{1 + h_i} \right\rceil \geq K_i - m_i \Longrightarrow
    m_i \geq \left\lceil \frac{K_i}{1 + h_i} \right\rceil
    \label{equ:step-4}
\end{equation}

We now use the following relation between a real number $x$ and a integer number $n$ to remove the ceiling from Inequation~\eqref{equ:step-4}.
\begin{gather*}
    \lceil x \rceil \leq n \Leftrightarrow x \leq n \\
    m_i \geq \left\lceil \frac{K_i}{1 + h_i} \right\rceil \Longrightarrow
    m_i \geq \frac{K_i}{1 + h_i}
\end{gather*}

Next step is clearing $h_i$ and replacing it for its value, i.e.\ $h_i = \left\lceil \frac{K_i - m_i}{m_i} \right\rceil$:
\begin{equation*}
    h_i \geq \frac{K_i - m_i}{m_i} \Longrightarrow
    \left\lceil \frac{K_i - m_i}{m_i} \right\rceil \geq \frac{K_i - m_i}{m_i}
\end{equation*}

We see the assumption Equation~\eqref{equ:theorem-proof-term-2} $\geq K_i - m_i$ holds and since also Equation~\eqref{equ:theorem-proof-term-1} $\geq$ Equation~\eqref{equ:theorem-proof-term-2}, we get also Equation~\eqref{equ:theorem-proof-term-1} $\geq K_i - m_i$.
This proves Equation~\eqref{equ:theorem-harder-constraint} for low-tolerance tasks which finalizes the proof of Theorem~5 of \cite{bernat2001weakly} for $\hw \preccurlyeq   \big( \begin{smallmatrix} K_i-m_i \\ K_i \end{smallmatrix}\big)$.

\end{document}